\newcommand{\todo}[1]{%
  { \renewcommand{\baselinestretch}{0.3} %
  \begin{tikzpicture}[remember picture]%
      \node [coordinate] (inText) {};%
  \end{tikzpicture} %
  \marginpar[{%
      \begin{tikzpicture}[remember picture]%
          \draw node[draw=red, anchor=north west, color=red, text width = 1.15cm, font=\scriptsize, inner sep=0.10cm,yshift=-0.15cm] (inNote)%
                   {#1};%
      \end{tikzpicture}%
      \begin{tikzpicture}[remember picture, overlay] %
    \draw[draw = red]%
        ([yshift=-0.10cm] inText)%
        -| ([xshift=0.05cm] inNote.south east) %
        -| ([xshift=0.05cm] inNote.east) %
        -| (inNote.east);%
      \end{tikzpicture}%
  }]{ %
      \begin{tikzpicture}[remember picture]%
          \draw node[draw=red, anchor=north west, color=red, text width = 1.15cm, font=\scriptsize, inner sep=0.10cm,yshift=-0.15cm] (inNote)%
                   {#1};%
      \end{tikzpicture}%
      \begin{tikzpicture}[remember picture, overlay] %
    \draw[draw = red]%
        ([yshift=-0.10cm] inText)%
        -| ([xshift=-0.05cm] inNote.south west)%
        -| ([xshift=-0.05cm] inNote.west)%
        -| (inNote.west);%
      \end{tikzpicture}%
  }}}%
\renewcommand{\todo}[1]{}
\newtheorem{theorem}{Theorem}[section]
\newtheorem{lemma}[theorem]{Lemma}
\newtheorem{corollary}[theorem]{Corollary}
\newtheorem{proposition}[theorem]{Proposition}
\newtheorem{definition}[theorem]{Definition}
\newtheorem{claim}[theorem]{Claim}
\newcommand{\theoremlike}[2]{\par\medskip\penalty-250
{{%
\textsc{
#2 \ref{#1}.}}}\it}
\newcommand{\thmhelperpre}[2]{\theoremlike{#1}{#2}}
\newcommand{\thmhelperpost}{\par\medskip}
\newenvironment{reftheorem}[1]{\thmhelperpre{#1}{Theorem}}{\thmhelperpost}
\newenvironment{reflemma}[1]{\thmhelperpre{#1}{Lemma}}{\thmhelperpost}
\newenvironment{refcorollary}[1]{\thmhelperpre{#1}{Corollary}}{\thmhelperpost}
\newenvironment{refproposition}[1]{\thmhelperpre{#1}{Proposition}}{\thmhelperpost }
\newcommand{\Nset}{\mathbb{N}}
\newcommand{\Nseto}{\Nset_0}
\newcommand{\Rset}{\mathbb{R}}
\newcommand{\Rsetp}{\mathbb{R}_{>0}}
\newcommand{\Rsetpo}{\mathbb{R}_{\ge 0}}
\newcommand{\calP}{\mathcal{P}}
\newcommand{\calX}{\mathcal{X}}
\newcommand{\calY}{\mathcal{Y}}
\newcommand{\calB}{\mathcal{B}}
\newcommand{\calA}{\mathcal{A}}
\newcommand{\calC}{\mathcal{C}}
\newcommand{\dist}{\mathcal{D}}
\newcommand{\G}{\mathcal{G}}
\newcommand{\prob}{\mathbf{P}}
\newcommand{\D}{\mathbf{D}}
\newcommand{\distribution}{\alpha}
\newcommand{\sigmafield}{\mathcal{F}}
\newcommand{\initmeasure}{\mu}
\renewcommand{\vec}[1]{\mathbf{#1}}
\newcommand{\fr}{\mathit{frac}}  %
\newcommand{\de}[1]{\mathit{d#1}}  %
\newcommand{\delay}{\mathfrak{D}}
\newcommand{\smp}{\mathcal{M}}
\newcommand{\setofruns}{\mathcal{R}}
\newcommand{\gssmc}{\Phi}
\newcommand{\ta}{\mathcal{A}}
\newcommand{\product}{\ensuremath{{\smp \times \ta}}\xspace}
\newcommand{\borel}{\mathfrak{B}}
\newcommand{\kernel}{P}
\newcommand{\probm}{\calP}
\newcommand{\successor}{\ta}
\newcommand{\pr}{\xi} %
\newcommand{\region}{\sim} %
\newcommand{\sizereg}{|V|}
\newcommand{\regiongraph}{G} %
\newcommand{\productfield}{\otimes}
\newcommand{\Productfield}{\bigotimes}
\newcommand{\smallmeasure}{\kappa}
\newcommand{\vecv}{\vec{v}}
\newcommand{\ntime}[1]{T^{#1}}
\newcommand{\nstate}[2]{1^{#1}_{#2}}
\newcommand{\maxb}{B_{\max}}
\newcommand{\densb}{c_\delay}
\newcommand{\pmin}{p_{\min}}
\newcommand{\dme}{\mathbf{d}}
\newcommand{\cme}{\mathbf{c}}
\newcommand{\trans}[3]{#1{}\mathchoice%
    {\stackrel{#2}{\rightarrow}}
    {\mathop {\smash\rightarrow}\limits^{\vrule width 0pt height 0pt
                                                depth 4pt\smash{#2}}}
    {\stackrel{#2}{\rightarrow}}
    {\stackrel{#2}{\rightarrow}}
{}#3}
\newcommand{\ltran}[1]{{}\mathchoice%
    {\stackrel{#1}{\longrightarrow}}
    {\mathop {\smash\longrightarrow}\limits^{\vrule width 0pt height 0pt
                                                depth 4pt\smash{#1}}}
    {\stackrel{#1}{\longrightarrow}}
    {\stackrel{#1}{\longrightarrow}}
{}}
\tikzstyle{max}=[thick,draw,minimum size=1.4em,inner sep=0em]
\tikzstyle{min}=[diamond,thick,draw,minimum size=1.4em,%
\tikzstyle{ran}=[circle,thick,draw,minimum size=2.5em,%
\tikzstyle{mc}=[rounded corners,thick,draw,minimum size=1.4em,%
\tikzstyle{tran}=[thick,draw,->,>=stealth]
\tikzstyle{loop left}=[tran, to path={.. controls +(150:.5)
\tikzstyle{loop right}=[tran, to path={.. controls +(30:.5)
\tikzstyle{loop above}=[tran, to path={.. controls +(60:.5)
\tikzstyle{loop below}=[tran, to path={.. controls +(240:.5)
\definecolor{Shadow}{rgb}{0.729,0.337,0.827}
\begin{document}

\title{Measuring Performance of Continuous-Time Stochastic Processes 
using Timed Automata}

\author{Tom\'a\v{s} Br\'azdil 
 \and
 Jan Kr\v{c}\'al
 \and
 Jan K{\v{r}}et\'{\i}nsk\'y
 \and
 Anton\'{\i}n Ku\v{c}era
 \and
 Vojt\v{e}ch~{\v{R}}eh\'ak
}

\date{Faculty of Informatics, Masaryk University, \\
 Botanick\'{a} 68a, 60200 Brno,\\  Czech Republic\\
 \texttt{\{brazdil,\,krcal,\,jan.kretinsky,\,kucera,\,rehak\}\\
@fi.muni.cz}\\}

\maketitle

\thispagestyle{empty}

\widowpenalty=10000
\clubpenalty=10000
\sloppy

\begin{abstract}
We propose deterministic timed automata (DTA) as a model-independent language
for specifying performance and dependability measures over
continuous-time stochastic processes. Technically, these measures
are defined as limit frequencies of locations (control states) of a DTA
that observes computations of a given stochastic process.
Then, we study the properties of DTA measures over semi-Markov processes 
in greater detail. We show that DTA measures over semi-Markov processes
are well-defined with probability one, and there are only finitely many
values that can be assumed by these measures with positive
probability. We also give an algorithm which approximates
these values and the associated probabilities up to an arbitrarily
small given precision. Thus, we obtain a general and effective
framework for analysing DTA measures over semi-Markov processes.
\end{abstract}

\section{Introduction}
\label{sec:intro}
\newcounter{jj}

Continuous-time stochastic processes, such as continuous-time
Markov chains, semi-Markov processes, or generalized semi-Markov
processes \cite{Ross:book,BL:book,Norris:book,Matthes:GSMP}, 
have been widely used in practice to determine
performance and dependability characteristics of real-world
systems. The desired behaviour of such systems is specified
by various measures such as mean response time, throughput,
expected frequency of errors, etc. These measures are often 
formulated just semi-formally and chosen specifically
for the system under study in a somewhat ad hoc manner. 
One example of a rigorous and model-independent specification
language for performance and dependability properties is Continuous 
Stochastic Logic (CSL) 
\cite{ASSB:CTMC-model-checking-ACM-TCL,BHHK:CTMC-model-checking-IEEE-TSE}
which allows to specify both steady state 
and transient measures over the underlying stochastic
process. The syntax and semantics of CSL is inspired by the well-known
non-probabilistic logic CTL~\cite{EC:CTL-SCP}. 
The syntax of CSL defines \emph{state}
and \emph{path} formulae, interpreted over the states and runs
of a given stochastic process $\smp$. In particular, there are two 
probabilistic operators, $\mathcal{P}_{{\Join}\varrho}(\cdot)$ and 
$\mathcal{S}_{{\Join}\varrho}(\cdot)$, which refer to the transient and
steady state behaviour of~$\smp$, respectively. Here
$\Join$ is a numerical comparison (such as $\leq$) and $\varrho \in [0,1]$
is a rational constant. If $\varphi$ is a path 
formula\footnote{In CSL, $\varphi$ can be of the form $\mathcal{X}_I \Phi$ or 
$\Phi_1 \mathcal{U}_I \Phi_2$ where $\Phi,\Phi_1,\Phi_2$ are state formulae, and
$\mathcal{X}_I,\mathcal{U}_I$ are the modal connectives of CTL
parametrized by an interval~$I$.
Boolean connectives can be used to combine just state formulae.}
(which is either valid or invalid for every run of~$\smp$), 
then $\mathcal{P}_{{\geq}0.7}(\varphi)$ is a state formula which
says ``the probability of all runs satisfying $\varphi$ is at least~$0.7$''. 
If $\Phi$ is a state formula, i.e., $\Phi$ is either valid or invalid 
in every state, then  $\mathcal{S}_{{\geq}0.5}(\Phi)$ is also a state 
formula which says ``the $\pi$-weighted sum over all states where 
$\Phi$ holds is at least $0.5$''. Here $\pi$ is the steady-state
distribution of~$\smp$. 
The logic CSL can express quite complicated properties 
and the corresponding model-checking problem over continuous-time
Markov chains is decidable. However, there are also several disadvantages.    
\begin{itemize}
\item[(a)] The semantics of steady state probabilistic operator
   $\mathcal{S}_{{\Join}\varrho}(\cdot)$ assumes the existence of invariant
   distribution which is not guaranteed to exist for all types of
   stochastic processes with continuous time (the existing works mainly
   consider CSL as a specification language for ergodic continuous-time
   Markov chains).  
\item[(b)] In CSL formulae, all measures are explicitly quantified, and
   the model-checking algorithm just verifies constraints over these
   measures. Alternatively, we might wish to \emph{compute} certain
   measures up to a given precision. 
\end{itemize}

In this paper, we propose \emph{deterministic timed automata (DTA)} 
\cite{AD:Timed-Automata} as a
model-independent specification language for performance and dependability
measures of continuous-time stochastic processes. The ``language''
of DTA can be interpreted over \emph{arbitrary} stochastic processes 
that generate timed words, and their expressive power appears 
sufficiently rich to capture many interesting run-time properties
(although we do not 
relate the expressiveness of CSL and DTA formally, they are surely 
incomparable because of different ``nature'' of the two formalisms).
Roughly speaking, a DTA $\calA$ ``observes'' runs of a given stochastic 
process $\smp$ and ``remembers'' certain information in its control 
states (which are called \emph{locations}).
Since $\calA$ is deterministic, for every run $\sigma$ of $\smp$ there 
is a unique computation $\calA(\sigma)$ of $\calA$, which determines
a unique tuple of ``frequencies'' of visits to the individual 
locations of~$\calA$ along~$\sigma$. These frequencies are the values 
of ``performance measures'' defined by $\calA$ (in fact, we consider
\emph{discrete} and \emph{timed} frequencies which are based on the 
same concept but defined somewhat differently). 

Let us explain the idea in more detail. Consider some 
stochastic process $\smp$ whose computations (or runs)
are infinite sequences of the form $\sigma = s_0\,t_0\,s_1\,t_1\cdots$ where
all $s_i$ are ``states'' and $t_i$ is the time spent by performing
the transition from $s_i$ to $s_{i+1}$. Also assume a suitable 
probability space defined over the runs of~$\smp$. Let $\Sigma$ by a finite 
alphabet and $L$ a labelling which assigns a unique letter
$L(s) \in \Sigma$ to every state $s$ of $\smp$.    
Intuitively, the letters of $\Sigma$ correspond to collections of
predicates that are valid in a given state.
Thus, every run $\sigma = s_0\,t_0\,s_1\,t_1\cdots$
of $\smp$ determines a unique \emph{timed word} 
$w_\sigma = L(s_0)\,t_0\,L(s_1)\,t_1\cdots$ over~$\Sigma$. 

A DTA over $\Sigma$ is a finite-state automaton 
$\calA$ equipped with finitely many internal clocks. Each control 
state (or location) $q$ of $\calA$ has finitely many outgoing 
edges $q \ltran{} q'$ labeled by triples $(a,g,X)$, where
$a \in \Sigma$, $g$ is a ``guard'' (a constraint on the current 
clock values), and $X$ is a subset of clocks that are reset to zero
after performing the edge. A configuration of $\calA$ is 
a pair $(q,\nu)$, where $q$ and $\nu$ are the current location and 
the current clock valuation, respectively. Every timed word
$w = c_0\,c_1\,c_2\,c_3\cdots$ over $\Sigma$ (where $c_i \in \Sigma$ 
iff $i$ is even) then determines a unique
run $\calA(w) = (q_0,\nu_0)\,(q_1,\nu_1)\,(q_2,\nu_2)\cdots$ of $\calA$ where
$q_0$ is an \emph{initial} location, $\nu_0$ assigns zero
to every clock, and $(q_{i+1},\nu_{i+1})$ is obtained from $(q_{i},\nu_{i})$
either by performing the only enabled edge $q_i \ltran{} q_{i+1}$
labeled by $(c_i,g,X)$ if $i$ is even, or by simultaneously increasing
all clocks by $c_i$ if $i$~is odd. 

As a simple example, consider the
following DTA~$\hat{\calA}$ over the alphabet $\{a\}$ with one clock $x$
and the initial location~$q_0$:
\begin{center}
\begin{tikzpicture}[x=2.5cm,y=1.7cm,font=\scriptsize]
\node (q0) at (0,0)   [ran,minimum size=1.5em] {$q_0$};
\node (q1) at (1,0)   [ran,minimum size=1.5em] {$q_1$};
\node (qu) at (2,.5)  [ran,minimum size=1.5em] {$q{\uparrow}$};
\node (qd) at (2,-.5) [ran,minimum size=1.5em] {$q{\downarrow}$};
\draw [tran] (q0) --
    node[above] {$a,\mathit{true},x{:=}0$} (q1);
\draw [tran, rounded corners] (q1) -- +(0,.5) --
    node[above] {$a,x\leq 2,x{:=}0$} (qu);
\draw [tran, rounded corners] (q1) -- +(0,-.5) --
    node[below] {$a,x> 2,x{:=}0$} (qd);
\draw [tran] (qu.south east) -- 
    node[right] {$a,x> 2,x{:=}0$} (qd.north east);
\draw [tran] (qd.north west) -- 
    node[left] {$a,x \leq 2,x{:=}0$} (qu.south west);
\draw [tran, rounded corners] (qu.30) -- (2.3,.7) -- 
    node[right] {$a,x \leq 2,x{:=}0$} (2.3,.3) -- (qu.330);
\draw [tran, rounded corners] (qd.30) -- (2.3,-.3) -- 
    node[right] {$a,x> 2,x{:=}0$} (2.3,-.7) -- (qd.330);
\end{tikzpicture}
\end{center}
Intuitively, $\hat{\calA}$ observes time stamps in a given timed 
word and enters 
either $q{\uparrow}$  or $q{\downarrow}$ depending on whether a given 
stamp is bounded by~$2$ or not, respectively. For example, a word 
$w = a\ 0.2\ a\ 2.4\ a\ 2.1\cdots$ determines the
run $\hat{\calA}(w) = 
(q_0,0)\,(q_1,0)\,(q_1,0.2)\,(q{\uparrow},0)\,(q{\uparrow},2.4)\,
(q{\downarrow},0)\,(q{\downarrow},2.1)\cdots$

Let $w = a_0\,t_0\,a_1\,t_1\cdots$ be a timed word over $\Sigma$ and $q$
a location of $\calA$. For every $i \in \Nset_0$, let $T^i(w)$ be
the stamp $t_i$ of $w$, and $Q^i(w)$ the location of
$\calA$ entered after reading the finite prefix
$a_0\,t_0\cdots a_i$ of~$w$. Further, let $\nstate{i}{q}(w)$
be  either $1$ or $0$ depending on whether $Q^i(w) = q$ or not,
respectively. We define the \emph{discrete} 
and \emph{timed} frequency of visits to $q$ along $\calA(w)$, denoted
by $\dme^{\calA}_{q}(w)$ and $\cme^{\calA}_{q}(w)$, in the following way (the
`$\calA$' index is omitted when it is clear from the context):
\begin{eqnarray*}
  \dme^{\calA}_q(w) & = &
  \limsup_{n \rightarrow \infty} \frac{\sum_{i=1}^{n} \nstate{i}{q}(w)}{n}\\
  \cme^{\calA}_q(w) & = &
  \limsup_{n \rightarrow \infty} \frac{\sum_{i=1}^{n} T^i(w) \cdot 
     \nstate{i}{q}(w)}{\sum_{i=1}^n T^i(w)}
\end{eqnarray*}
Thus, every timed word $w$ determines the tuple 
\mbox{$\dme^{\calA}(w) = \left(\dme^{\calA}_q(w) \right)_{q\in Q}$} 
and the tuple $\cme^{\calA}(w) = \left(\cme^{\calA}_q(w) \right)_{q\in Q}$ of 
\emph{discrete} and \emph{timed $\calA$-measures}, respectively. 

DTA measures can encode various performance and dependability properties
of stochastic systems with continuous time. For example, consider
again the DTA $\hat{\calA}$ above and assume that 
all states of a given stochastic process $\smp$ are labeled with~$a$.
Then, the fraction 
\[
  \frac{\dme_{q{\uparrow}}(w_\sigma)}
       {\dme_{q{\uparrow}}(w_\sigma) + \dme_{q{\downarrow}}(w_\sigma)}
\] 
corresponds to the percentage of transitions of $\smp$ that are performed
within~$2$ seconds along a run~$\sigma$. If $\smp$ is an ergodic 
continuous-time Markov chain, then the above fraction 
takes the \emph{same} value for almost all runs $\sigma$ of~$\smp$. 
However, it makes sense to consider this fraction also for non-ergodic
processes. For example, we may be interested in the expected
value of $\dme_{q{\uparrow}}/(\dme_{q{\uparrow}} + \dme_{q{\downarrow}})$, or in 
the probability of all runs~$\sigma$ such that the fraction is 
at least $0.5$.
 
One general trouble with DTA measures is that $\dme^{\calA}_q(w)$ and $\cme^{\calA}_q(w)$
faithfully capture the frequency of visits to~$q$ along $w$ only if 
the limits
\[
  \lim_{n \rightarrow \infty} \frac{\sum_{i=1}^{n} \nstate{i}{q}(w)}{n}
\mbox{\quad and \quad}
\lim_{n \rightarrow \infty} \frac{\sum_{i=1}^{n} T^i(w) \cdot 
\nstate{i}{q}(w)}{\sum_{i=1}^n T^i(w)}
\] 
exist, in which case we say that $\dme^\calA$ and $\cme^\calA$
are \emph{well-defined} for~$w$, respectively.
So, one general question that should be answered when
analyzing the properties of DTA measures over a particular class 
of stochastic processes is whether $\dme^\calA$ and $\cme^\calA$
are well-defined for almost all runs.
If the answer is negative, we might either
try to re-design our DTA or accept the fact that the limit
frequency of the considered event simply does not exist (and stick to
$\limsup$).

In this paper, we study DTA measures over semi-Markov processes (SMPs).
An SMP is essentially a discrete-time Markov chain where each transition
is assigned (apart of its discrete probability) a \emph{delay density},
which defines the distribution of time needed to perform the transition. 
A computation (run) of an SMP $\smp$ is initiated in some state $s_0$, 
which is also chosen randomly according to a fixed initial 
distribution over the state space of $\smp$.
The next transition is selected according to the fixed transition 
probabilities, and the selected transition 
takes time chosen randomly according to the density associated to
the transition. Hence, each run of $\smp$ is an infinite sequence 
$s_0\,t_0\,s_1\,t_1\cdots$, where all $s_i$ are states of $\smp$ and 
$t_i$ are time stamps. The probability of (certain) subsets 
of runs in $\smp$ is measured in the standard way 
(see Section~\ref{sec:prelim}). 

\emph{The main contribution of this paper} are general
results about DTA measures over semi-Markov processes, which 
are valid for all SMPs where the employed density functions are 
bounded from zero on every closed subinterval (see 
Section~\ref{sec:prelim}). Under this assumption, we prove that
for every SMP $\smp$ and every DTA $\calA$ we have the following: 
\begin{enumerate}
\item[(1)] Both discrete and timed $\calA$-measures are well defined
  for almost all runs of $\smp$.
\item[(2)] Almost all runs of $\smp$ can be divided into
  finitely many pairwise 
  disjoint subsets $\setofruns_1,\ldots,\setofruns_k$ so that
  $\dme^{\calA}(w)$ takes the same value for almost all  
  \mbox{$w \in \setofruns_j$}, where $1 \leq j \leq k$. 
  The same result holds also for $\cme^{\calA}$. (Let us note
  that $k$ can be larger than~$1$ even if $\smp$ is strongly connected.)
\item[(3)] The observations behind the results of (1) and (2) can be
  used to compute the $k$ and effectively approximate
  the probability of all $\setofruns_j$ together with the associated values
  of discrete or timed $\calA$-measures up to an arbitrarily small given
  precision. More precisely, we show that these quantities 
  are expressible using the $m$-step transition kernel $\kernel^m$ of the
  product process $\product$ defined for $\smp$ and $\calA$ (see
  Section~\ref{SEC:PRODUCT-PROCESS}), and we give generic bounds 
  on the number of steps~$m$ that is sufficient to achieve 
  the required precision.
  The $m$-step transition kernel is defined by nested 
  integrals (see Section~\ref{sec:gssmc}) and can be approximated by numerical 
  methods (see, e.g., \cite{Kress:book,BLM:book}). 
  This makes the whole framework effective. The design of more efficient 
  algorithms as well as more detailed analysis applicable
  to concrete subclasses of SMP are left for future work.
\end{enumerate}

To get some intuition about potential applicability of our results
(and about the actual power of DTA which is hidden mainly in their 
ability to accumulate the total time of several 
transitions in internal clocks), let us start 
with a simple example. Consider the 
following itinerary for travelling between Brno and Prague:
\begin{center}
\begin{tabular}{r l l l l l}
 & Brno & Ku\v{r}im & Ti\v{s}nov & \v{C}\'{a}slav 
        & Prague\\[.2ex]\hline\\[-1.8ex]
\textit{arrival} 
   & & 1:15 & 2:30 & 3:30 & 4:50\\
\textit{departure}
   & 0:00 & 1:20 & 2:40 & 3:35
\end{tabular}
\end{center}
A traveller has to change a train at each of the three intermediate 
stops, and she needs at least 3~minutes to walk between the platforms.
Assume that all trains depart on time, but can be delayed. Further, assume
that travelling time between $X$ and $Y$ has density $f_{\textit{X-Y}}$. 
We wonder what is the chance that a traveller reaches Prague from Brno
without missing any train and at most 5~minutes after the scheduled
arrival. Answering this question ``by hand'' is not 
simple (though still possible). However, it is almost trivial to
rephrase this question in terms of DTA measures. The itinerary can be 
modeled by the following semi-Markov process, where the density $f$ 
is irrelevant and $\Sigma = \{\textit{B,K,T,\v{C},P}\}$. 
\begin{center}
\begin{tikzpicture}[x=1.3cm,y=1cm,font=\footnotesize]
\foreach \x/\ll in {0/B,1/K,2/T,3/\v{C},4/P}
  {
   \node (m\x) at (\x,0)   [ran,minimum size=1.5em] {$\textit{\ll}$};
  }
\foreach \x/\ll in {0/B-K,1/K-T,2/T-\v{C},3/\v{C}-P}
  {\setcounter{jj}{\x}\addtocounter{jj}{1}
   \draw [tran,rounded corners] (m\x) -- 
      node[above] {$f_{\textit{\ll}}$} (m\arabic{jj});
  }
\draw [tran,rounded corners] (m4) -- +(0,-.5) --
      node[below] {$f$} +(-4,-.5)-- (m0);
\end{tikzpicture} 
\end{center}
The property of ``reaching Prague from Brno
without missing any train and at most 5~minutes after the scheduled
arrival'' is encoded by the DTA~$\bar{\calA}$ of Figure~\ref{fig-TA-exa}. 
The automaton uses just one clock $x$ to measure the total elapsed time,
and the guards reflect the required timing constraints. Starting
in location $\mathit{init}$, the automaton eventually reaches 
either the location $p{\uparrow}$ or $p{\downarrow}$, which corresponds
to satisfaction or violation of the above property, and then it
is ``restarted''. Hence, we are interested in the relative frequency
of visits to $p{\uparrow}$ among the visits to 
$p{\uparrow}$ or $p{\downarrow}$. Using our results,
it follows that $\dme^{\calA}$ is well-defined
and takes the same value for almost all runs of~$\smp$.
Hence, the random variable
$\dme_{p{\uparrow}}/(\dme_{p{\uparrow}} + \dme_{p{\downarrow}})$ also takes the
same value with probability one, and this (unique) value is the quantity
of our interest.

Now imagine we wish to model and analyse the flow of passengers in
London metro at rush hours. The SMP states then correspond to
stations, transition probabilities encode the percentage of passengers
traveling in a given direction, and the densities encode the
distribution of travelling time.  A DTA can be used to monitor a
complex list of timing restrictions such as ``there is enough time to change a
train'', ``travelling between important stations does not take more than
30 minutes if one the given routes is used'', ``trains do not arrive more
than 2 minutes later than scheduled'', etc. For this we already 
need several internal clocks. Apart of some auxiliary locations, the
constructed DTA would also have special locations used to encode
satisfaction/violation of a given restriction (in the DTA $\bar{\calA}$ of
Figure~\ref{fig-TA-exa}, $(p,{\uparrow})$ and $(p,{\downarrow})$ are
such special locations). Using the results presented in this paper,
one may not only study the overall satisfaction of these restrictions,
but also estimate the impact of changes in the underlying model (for example,
if a given line becomes slower due to some repairs, one may evaluate
the decrease in various dependability measures without changing the
constructed DTA). 

\emph{Proof techniques.} For a given SMP $\smp$ and a given DTA $\calA$
we first construct their synchronized product $\smp \times \calA$, 
which is another stochastic process. In fact, it turns out
that \mbox{$\smp \times \calA$} is a discrete-time Markov chain with 
uncountable state-space. Then, we apply a variant of the standard 
region construction 
\cite{AD:Timed-Automata} and thus partition the state-space 
of $\smp \times \calA$ into finitely
many equivalence classes. At the very core of our paper there are 
several non-trivial observations about the structure of 
$\smp \times \calA$ and its region graph which establish a powerful
link to the well-developed ergodic theory of Markov chains with general
state-space (see, e.g., \cite{MT:book,RR:GSSMC-PS}). In this way, we obtain 
the results of items~(1) and~(2) mentioned above. 
Some additional work is required to analyze the algorithm presented
in Section~\ref{SEC:ALGORITHMS} (whose properties are summarized 
in item~(3) above).

\emph{Related work.} There is a vast literature on continuous-time Markov
chains, semi-Markov processes, or even more general stochastic models
such as generalized semi-Markov processes
(we refer to, e.g., \cite{Ross:book,BL:book,Norris:book,Matthes:GSMP}).
In the computer science context, most works on continuous-time stochastic
models concern model-checking against a given class of temporal properties 
\cite{ASSB:CTMC-model-checking-ACM-TCL,BHHK:CTMC-model-checking-IEEE-TSE}. 
The usefulness of CSL model-checking for dependability analysis
is advocated in \cite{HHK:CSL-dependability}. 
Timed automata \cite{AD:Timed-Automata} have been
originally used as a model of (non-stochastic) real-time systems.
Probabilistic semantics of timed automata is proposed in
\cite{BBBBG:One-clock-almost-sure,BBBM:One-clock-timed}.
The idea of using timed automata as a specification language 
for continuous-time stochastic processes is relatively
recent. In \cite{CHKM:CTMC-quant-timed}, the model-checking problem
for continuous-time Markov chains and linear-time properties represented
by timed automata is considered (the task is to dermine the probability
of all timed words that are accepted by a given timed automaton). 
A more general model of two-player
games over generalized semi-Markov processes with qualitative 
reachability objectives specified by deterministic timed automata
is studied in \cite{BKKKR:GSMP-games-TA}.

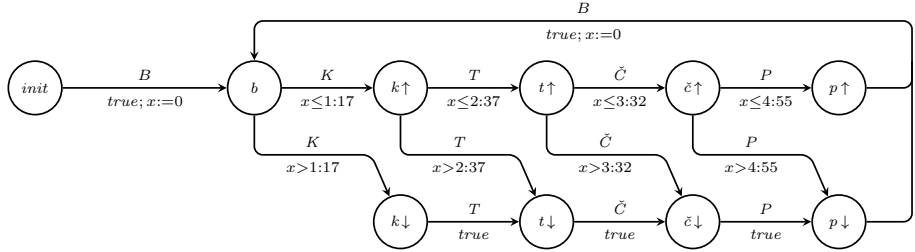
\begin{figure*}
\begin{center}
\resizebox{\linewidth}{!}{
\begin{tikzpicture}[x=2.4cm,y=2.2cm,font=\scriptsize]
\node (i) at (-1.5,0) [ran] {\textit{init}};
\node (mu0) at (0,0) [ran] {\textit{b}};
\foreach \x/\ll in {1/k,2/t,3/\v{c},4/p}
  {
   \node (mu\x) at (\x,0)   [ran] {$\textit{\ll}\,{\uparrow}$};
   \node (md\x) at (\x,-1)  [ran] {$\textit{\ll}\,{\downarrow}$};
  }
\draw [tran] (i) -- 
  node[below] {$\mathit{true}; x{:=}0$} 
  node[above] {$B$} (mu0);
\draw [tran,rounded corners] (mu4) -- +(.5,0) -- +(.5,.5) --
    node[below] {$\mathit{true}; x{:=}0$} node[above] {$B$}
    +(-4,.5) -- (mu0);
\draw [tran,rounded corners,-] (md4) -- +(.5,0) -- +(.5,1.2);
\foreach \x/\t/\ll in {0/1{:}17/K,1/2{:}37/T,2/3{:}32/\v{C},3/4{:}55/P}
  {\setcounter{jj}{\x}\addtocounter{jj}{1}
   \draw [tran,rounded corners] (mu\x) -- 
      node[below] {$x{\leq}\t$} 
      node[above] {\textit{\ll}} (mu\arabic{jj});
   \draw [tran,rounded corners] (mu\x) -- +(0,-.5) --
      node[below] {$x{>}\t$} 
      node[above] {\textit{\ll}} +(.8,-.5)-- (md\arabic{jj});
  }
\foreach \x/\ll in {1/T,2/\v{C},3/P}
  {\setcounter{jj}{\x}\addtocounter{jj}{1}
   \draw [tran,rounded corners] (md\x) -- 
      node[below] {$\mathit{true}$} 
      node[above] {\textit{\ll}} (md\arabic{jj});
  }
\end{tikzpicture}
}
\end{center}
\caption{A deterministic timed automaton $\bar{\calA}$.}
\label{fig-TA-exa}
\end{figure*}

\section{Preliminaries}
\label{sec:prelim}

In this paper, the sets of all positive integers, non-negative integers,
real numbers, positive real numbers, and non-negative real numbers are
denoted by $\Nset$, $\Nseto$, $\Rset$, $\Rsetp$, and $\Rsetpo$,
respectively.

Let $A$ be a finite or countably infinite set. A \emph{discrete
  probability distribution} on $A$ is a function $\distribution : A
\rightarrow \Rsetpo$ such that \mbox{$\sum_{a \in A} \distribution(a)
  = 1$}. We say that $\distribution$ is \emph{rational} if
$\distribution(a)$ is rational for every $a \in A$.
The set of all distributions on $A$ is denoted by $\dist(A)$.  A
\emph{$\sigma$-field} over a set $\Omega$ is a set $\sigmafield
\subseteq 2^{\Omega}$ that includes $\Omega$ and is closed under
complement and countable union.  A \emph{measurable space} is a pair
$(\Omega,\sigmafield)$ where $\Omega$ is a set called \emph{sample
  space} and $\sigmafield$ is a $\sigma$-field over $\Omega$ whose
elements are called \emph{measurable sets}.  A \emph{probability
  measure} over a measurable space $(\Omega,\sigmafield)$ is a
function \mbox{$\probm : \sigmafield \rightarrow \Rsetpo$} such that, for
each countable collection $\{X_i\}_{i\in I}$ of pairwise disjoint
elements of $\sigmafield$, $\probm(\bigcup_{i\in I} X_i) = \sum_{i\in
  I} \probm(X_i)$, and moreover $\probm(\Omega)=1$. A
\emph{probability space} is a triple $(\Omega,\sigmafield,\probm)$,
where $(\Omega,\sigmafield)$ is a measurable space and $\probm$ is a
probability measure over $(\Omega,\sigmafield)$.
We say that a property $A \subseteq \Omega$ holds \emph{for almost
  all} elements of a measurable set $Y$ if $\probm(Y) > 0$, $A \cap Y
\in \sigmafield$, and $\probm(A \mid Y) = 1$.

All of the integrals used in this paper should be understood as
Lebesgue integrals, although we use Riemann-like notation when
appropriate.
\subsection{Semi-Markov processes}
A semi-Markov process (see, e.g., \cite{Ross:book}) can be seen as
discrete-time Markov chains where each transition is equipped with a
density function specifying the distribution of time needed to perform
the transition.  Formally, let $\delay$ be a set of delay densities,
i.e., measurable functions $f : \Rset \rightarrow \Rsetpo$ satisfying
$\int_{0}^{\infty} f(t)\, \de{t} = 1$ where $f(t) = 0$ for every
$t<0$. Moreover, for technical reasons, we assume that each 
$f\in \delay$ satisfies the following: There is an interval $I$ either of
the form $[\ell,u]$ with $\ell,u \in \Nseto, \ell < u$, or 
$[\ell,\infty)$ with $\ell\in\Nseto$, such that
\begin{itemize}
 \item for all $t\in\Rset\setminus I$ we have that $f(t)=0$,
 \item for all $[c,d]\subseteq I$ there is $b>0$ such that for all 
   $t\in[c,d]$ we have that $f(t)\geq b$.
\end{itemize}
The assumption that $\ell,u$ are natural numbers is adopted only for
the sake of simplicity. Our results can easily be generalized to 
the setting where $I$ is an interval with rational bounds or even
a finite union of such intervals. 
\begin{definition}
  A \emph{semi-Markov process (SMP)} is a tuple $\smp = (S,\prob, \D,
  \distribution_0)$, where $S$ is a finite set of states, $\prob : S
  \rightarrow \dist(S)$ is a transition probability function, 
  \mbox{$\D : S \times S \rightarrow \delay$} is a delay function which to each
  transition assigns its delay density, and $\distribution_0 \in \dist(S)$
  is an initial distribution.
\end{definition}
A computation (run) of a SMP $\smp$ is initiated in some state $s_0$, 
which is chosen randomly according to $\distribution_0$. In the 
current state $s_i$, the next state $s_{i+1}$ is selected randomly according 
to the distribution $\prob(s_i)$, and the selected transition $(s_i,s_{i+1})$
takes a random time~$t_i$ chosen according to the density $\D(s_i,s_{i+1})$.
Hence, each run of $\smp$ is an infinite timed word 
$s_0\,t_0\,s_1\,t_1 \cdots$, where $s_i \in S$ and 
$t_i \in \Rsetpo$ for all $i \in \Nseto$. We use $\setofruns_{\smp}$
to denote the set of all runs of $\smp$.

Now we define a probability space 
$(\setofruns_{\smp},\sigmafield_{\smp},\probm_{\smp})$
over the runs of~$\smp$ (we often omit the index $\smp$ if it is clear
from the context). A \emph{template} is a finite sequence of the form 
$B = s_0\,I_0\,s_1\,I_1\cdots s_{n+1}$ such that
$n \geq 0$ and $I_i$ is an interval in~$\Rsetpo$
for every $0 \leq i \leq n$. Each such~$B$ determines the
corresponding \emph{cylinder} $\setofruns(B) \subseteq \setofruns$
consisting of all runs of the form $\hat{s}_0\,t_0\,\hat{s}_1\,t_1\cdots$, where
$\hat{s}_i = s_i$ for all \mbox{$0 \leq i \leq n{+}1$}, and $t_i \in I_i$ for
all $0 \leq i \leq n$. The
$\sigma$-field $\sigmafield$ is the Borel $\sigma$-field generated by all
cylinders. For each template $B = s_0\,I_0\,s_1\,I_1\cdots s_{n+1}$, let
$p_i = \prob(s_i)(s_{i+1})$ and $f_i = \D(s_i,s_{i+1})$ for all 
$0 \leq i \leq n$. The probability $\probm(\setofruns(B))$ is 
defined as follows:
\[
  \distribution_0(s_0) \cdot 
  \prod_{i=0}^{n} p_i \cdot \int_{t_i \in I_i} f_i(t_i)\, dt_i
\]
Then, $\probm$ is extended to $\sigmafield$ (in the unique way) by applying 
the extension theorem (see, e.g., \cite{Billingsley:book}).

\subsection{Deterministic timed automata}

Let $\calX$ be a finite set of \emph{clocks}.
A \emph{valuation} is a function $\nu: \calX \rightarrow \Rsetpo$.  
For every valuation $\nu$ and every subset 
$X \subseteq \calX$ of clocks, we use $\nu[X:=\vec{0}]$ to denote the unique
valuation such that $\nu[X:=\vec{0}](x)$ is equal either to $0$ or
$\nu(x)$, depending on whether $x \in X$ or not, respectively.
Further, for every valuation $\nu$ and every $\delta \in \Rsetpo$, the
symbol $\nu+\delta$ denotes the unique valuation such that $(\nu+\delta)(x)
= \nu(x) + \delta$ for all $x \in \calX$.  Sometimes we assume an implicite
linear ordering on clocks and slightly abuse our notation by 
identifying a valuation $\nu$ with the associated vector of reals.

A \emph{clock constraint} (or \emph{guard}) is a finite conjunction of
basic constraints of the form $x \bowtie c$, where $x \in \calX$,
\mbox{${\bowtie} \in \{{<},{\leq},{>},{\geq}\}$}, and $c \in
\Nseto$. For every valuation $\nu$ and every clock constraint $g$ we
have that $\nu$ either does or does not satisfy~$g$, written $\nu
\models g$ or $\nu \not\models g$, respectively (the satisfaction
relation is defined in the expected way). Sometimes we identify 
a guard~$g$ with the set of all valuations that satisfy $g$ 
and write, e.g., $g \cap g'$. The set of all guards over~$\calX$ 
is denoted by~$\calB(\calX)$.

\begin{definition}
  A \emph{deterministic timed automaton (DTA)} is a tuple $\ta = (Q,
  \Sigma, \calX, {\longrightarrow}, q_0)$, where $Q$ is a nonempty finite
  set of \emph{locations}, $\Sigma$ is a finite \emph{alphabet}, $\calX$ is
  a finite set of \emph{clocks}, $q_0 \in Q$ is an \emph{initial location},
  and
  ${\longrightarrow} \subseteq Q \times \Sigma \times \calB(\calX) \times
  2^\calX \times Q$ is an \emph{edge relation} such that for all $q \in Q$
  and $a \in \Sigma$ we have the following:
  \begin{enumerate}
  \item the guards are deterministic, i.e., for all edges of the form $(q,
    a, g_1, X_1,q_1)$ and $(q, a, g_2, X_2, q_2)$ such that $g_1 \cap g_2
    \neq \emptyset$ we have that $g_1 = g_2$, $X_1=X_2$, and $q_1=q_2$;
  \item the guards are total, i.e., for all $q \in Q$, $a \in \Sigma$, and
    every valuation $\nu$ there is an edge $(q,a,g,X,q')$ such that $\nu
    \models g$.
  \end{enumerate}
\end{definition}

\noindent
A \emph{configuration} of $\ta$ is a pair $(q,\nu)$, where $q \in Q$ and
$\nu$ is a valuation.  An \emph{infinite timed word} over $\Sigma$ 
is an infinite sequence $w = c_0\,c_1\,c_2\,c_3 \cdots$, where 
$c_i \in \Sigma$ when~$i$ is even, and $c_i \in \Rsetpo$ when $i$ is odd. 
The \emph{run} of $\ta$ on $w$ is the unique infinite sequence of 
configurations 
$\ta(w) = (q_0,\nu_0)\, (q_1,\nu_1) \cdots$ such that $q_0$ is the 
initial location of $\ta$, $\nu_0(x) = 0$ for all $x \in \calX$, 
and for each $i \in \Nseto$ we have that
\begin{itemize}
\item if $c_i$ is a time stamp, then $q_{i+1} = q_i$ and
  $\nu_{i+1} = \nu_i + c_i$;
\item if $c_i$ is a letter of $\Sigma$, then there is a unique
  edge $(q_i,c_i,g,X,q)$ such that $\nu_i \models g$, and we require that
  $q_{i+1} = q$ and $\nu_{i+1} = \nu_{i}[X := \vec{0}]$.
\end{itemize}
Notice that we do not define any acceptance condition for DTA.
Instead, we understand DTA as finite-state \emph{observers} that
analyze timed words and report about certain events by entering 
designated locations. The ``frequency'' of these events is formally
captured by the quantities $d_q$ and $c_q$ defined below.

Let $\ta = (Q, \Sigma, \calX, {\longrightarrow}, q_0)$ be a DTA,
$q \in Q$ some location, and $w = a_0\,t_0\,a_1\,t_1\cdots$ a timed 
word over $\Sigma$. For every $i \in \Nset_0$, let $T^i(w)$ be
the stamp $t_i$ of $w$, and $Q^i(w)$ the unique location of
$\calA$ entered after reading the finite prefix
$a_0\,t_0\cdots a_i$ of~$w$. Further, let $\nstate{i}{q}(w)$
be  either $1$ or $0$ depending on whether $Q^i(w) = q$ or not,
respectively. The \emph{discrete} and \emph{timed} frequency 
of visits to $q$ along $\calA(w)$, denoted by $\dme^{\calA}_q(w)$ and 
$\cme^{\calA}_q(w)$, are defined in the following way (if $\calA$ is
clear, it is omitted):
\begin{eqnarray*}
  \dme^{\calA}_q(w) & = &
  \limsup_{n \rightarrow \infty} \frac{\sum_{i=1}^{n} \nstate{i}{q}(w)}{n}\\
  \cme^{\calA}_q(w) & = &
  \limsup_{n \rightarrow \infty} \frac{\sum_{i=1}^{n} T^i(w) \cdot 
     \nstate{i}{q}(w)}{\sum_{i=1}^n T^i(w)}
\end{eqnarray*}
Hence, every timed word $w$ determines the tuple 
$\dme^{\calA} = \left(\dme^{\calA}_q(w) \right)_{q\in Q}$ 
and the tuple $\cme^{\calA} = \left(\cme^{\calA}_q(w) \right)_{q\in Q}$ of 
\emph{discrete} and \emph{timed $\calA$-measures}, respectively. 
The $\calA$-measures were defined using $\limsup$, because the 
corresponding limits may not exist in general. If 
$\lim_{n \rightarrow \infty} \sum_{i=1}^{n} \nstate{i}{q}(w)/n$
exists for all~$q \in Q$, we say  that
$\dme^{\calA}$ is \emph{well-defined} for~$w$. Similarly, if
$\lim_{n \rightarrow \infty} (\sum_{i=1}^{n} T^i(w) \cdot 
     \nstate{i}{q}(w))/(\sum_{i=1}^n T^i(w))$ exists for all~$q$,
we say that $\cme^{\calA}$ is \emph{well-defined} for~$w$.

As we already noted in Section~\ref{sec:intro}, a DTA $\calA$ can be used 
to observe runs in a given SMP~$\smp$ after labeling all states of $\smp$
with the letters of $\Sigma$ by a suitable~\mbox{$L : S \rightarrow \Sigma$}.
Then, every run $\sigma = s_0\,t_0\,s_1\,t_1\cdots$ of 
$\smp$ determines a unique
timed word $w_\sigma = L(s_0)\,t_0\,L(s_1)\,t_1\cdots$, and one can easily 
show that for every timed word~$w$, the set 
\mbox{$\{\sigma \in \setofruns \mid w_\sigma = w\}$} is measurable 
in $(\setofruns,\sigmafield,\probm)$.

\section{DTA Measures over SMPs}\label{sec:th-res}
Throughout this section we fix an SMP 
$\smp = (S,\prob, \D, \distribution_0)$ and a DTA 
$\calA = (Q, \Sigma, \calX, {\longrightarrow}, q_0)$ where
$\calX = \{x_1,\ldots,x_n\}$.
To simplify our notation, we assume that $\Sigma = S$, i.e., 
every run~$\sigma$ of~$\smp$ is a timed word over~$\Sigma$ (hence,
we do not need to introduce any labeling $L : S \rightarrow \Sigma$).
This technical assumption does not affect the generality of our results
(all of our arguments and proofs work exactly as they are,
we only need to rewrite them using less readable notation).
Our goal is to prove the following:

\begin{theorem}\label{thm:mainthm}
~
\begin{enumerate}
\item\label{item:mainthm-existence}
$\dme^{\calA}$ is well-defined for almost all runs of~$\smp$.
\item \label{item:mainthm-partition}
  There are pairwise disjoint sets  
  $\setofruns_1,\ldots,\setofruns_k$ of runs in $\smp$ such that 
  $\probm(\setofruns_1 \cup \cdots \cup \setofruns_k)=1$, and for
  every \mbox{$1 \leq j \leq k$} there is a tuple $D_j$ such that
  $\dme^{\calA}(\sigma) = D_j$ for almost all \mbox{$\sigma \in \setofruns_j$}
  (we use $D_{j,q}$ to denote the \mbox{$q$-component} of $D_j$). 
\end{enumerate}
\end{theorem}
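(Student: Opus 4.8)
The plan is to pass from the SMP $\smp$ and DTA $\calA$ to their synchronized product $\product$, which the introduction already announces is a discrete-time Markov chain with an uncountable state space, and then exploit ergodic theory for general-state-space Markov chains. The key observation is that a run $\sigma = s_0\,t_0\,s_1\,t_1\cdots$ of $\smp$ and the induced DTA run $\calA(w_\sigma) = (q_0,\nu_0)(q_1,\nu_1)\cdots$ can be zipped together: the state of $\product$ after $i$ steps records $(s_i, q_i, \nu_i)$ (or a variant thereof). The discrete frequency $\dme^\calA_q(w_\sigma)$ is then exactly the long-run frequency with which the $\product$-trajectory visits the set of states whose DTA-location component equals $q$ — i.e.\ a Cesàro average of an indicator function along the Markov chain's path. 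So Theorem~\ref{thm:mainthm} is an instance of an ergodic/law-of-large-numbers statement for $\product$.

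The main technical work, and the step I expect to be the real obstacle, is establishing the right recurrence/Harris-type structure for $\product$. The state space is uncountable because of the clock valuations $\nu \in \Rsetpo^{\calX}$, so one cannot simply invoke the finite-chain ergodic theorem. The strategy is to apply the region construction of Alur--Dill: partition $\Rsetpo^{\calX}$ into finitely many clock regions, giving a finite quotient — the region graph — of the (otherwise uncountable) reachable part of $\product$. The density assumption in Section~\ref{sec:prelim} (every $f \in \delay$ is bounded away from zero on every closed subinterval of its support interval $I$) is precisely what makes the chain ``spread out'' within each region: from any configuration one can reach, with positive probability density, a whole open set of configurations in the successor region. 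One then argues that each bottom strongly connected component (BSCC) of the region graph, lifted back to $\product$, supports a single maximal Harris-recurrent (in fact positive Harris, hence ergodic) class with a unique invariant probability measure $\pi_j$. By the Harris ergodic theorem (see \cite{MT:book,RR:GSSMC-PS}), for almost every trajectory that enters the $j$-th such class, the Cesàro averages $\frac{1}{n}\sum_{i=1}^n \nstate{i}{q}(w_\sigma)$ converge to $\pi_j(\{\text{location} = q\})$; this limit value is what we call $D_{j,q}$.

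With that in hand, the theorem assembles as follows. For part~\ref{item:mainthm-existence}: since the region graph is finite, almost every run of $\smp$ eventually enters (and stays in) one of the finitely many BSCCs of the region graph — this is a standard ``absorption into bottom components'' argument, using again the lower density bounds to guarantee that transient regions are left with probability one. Once inside, the Harris ergodic theorem gives convergence of the Cesàro averages, so the $\limsup$ in the definition of $\dme^\calA_q$ is an honest limit; hence $\dme^\calA$ is well-defined a.s. For part~\ref{item:mainthm-partition}: let $\setofruns_j$ be the set of runs absorbed into the $j$-th BSCC, and $D_j = (\pi_j(\{\text{loc}=q\}))_{q \in Q}$. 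The sets $\setofruns_j$ are pairwise disjoint, their union has probability $1$, and on each $\setofruns_j$ the limit equals $D_j$ a.s. (Note the remark in the introduction that $k>1$ is possible even for strongly connected $\smp$ — this is because the region graph of $\product$ can have several BSCCs even when $\smp$ itself is irreducible.) A few routine measurability checks — that $\setofruns_j$ is measurable, that the zipping map $\sigma \mapsto$ ($\product$-trajectory) is measurable — are needed but present no difficulty. The timed-measure variant $\cme^\calA$ follows by the same route, additionally applying the ergodic theorem to the function $(s,q,\nu) \mapsto \mathbb{E}[\text{next delay}]\cdot\mathbf{1}[\text{loc}=q]$ in numerator and to the mean delay alone in the denominator, and taking the ratio; one only needs the mean delay to be finite and positive, which follows from the bounded-support assumption on densities in $\delay$.
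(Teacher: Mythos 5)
Your proposal follows the same architecture as the paper: product construction, region graph, absorption into BSCCs for the partition, and ergodic theory of general-state-space Markov chains for the limit values. But the step you yourself flag as ``the real obstacle'' --- establishing Harris-type recurrence of $\product$ restricted to a BSCC --- is exactly where the paper's entire technical effort lies, and your proposal does not supply it. The paper proves it by showing the restricted state space is $(m,\varepsilon,\nu)$-\emph{small} in the sense of Definition~\ref{def:small}: there are a fixed target state $z^\ast$, a fixed $n$, and a $\gamma>0$ such that from \emph{every} state one can reach a neighbourhood of $z^\ast$ by an $n$-step, $\gamma$-wide path, so that $\kernel^n_\product(z,\cdot)$ dominates $\varepsilon$ times one fixed minorizing measure uniformly in $z$. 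Constructing this path (Proposition~\ref{prop:small-product}) takes five carefully calibrated stages ($\delta$-separation of clock fractional parts, travel to a fixed region, squeezing fractional parts towards $0$, re-separation, and resetting the remaining clocks), and it is not a routine consequence of the density lower bounds; ``the chain spreads out within each region'' is the right intuition but does not by itself produce a common minorizing measure for all starting states.

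Two further points. First, your claim that each BSCC lifted back to $\product$ ``supports a single maximal Harris-recurrent class'' is false as stated when the BSCC contains \emph{growing} clocks, i.e.\ clocks never reset inside it: their values increase without bound, the chain never returns to any bounded set of valuations, and no invariant probability measure exists. The paper must first observe that such clocks are necessarily irrelevant (above $\maxb$) throughout the BSCC and eliminate them before any recurrence argument can begin. Second, smallness of the \emph{whole} restricted state space fails when the region graph is periodic, which forces the paper into a decomposition into $p$ aperiodic subchains; your route via positive Harris recurrence plus the strong law could in principle avoid that split (the SLLN does not need aperiodicity), but only once positive Harris recurrence is actually established, which you do not do. As written, the proposal is a correct outline of the paper's strategy with its central lemma asserted rather than proved.
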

In Section~\ref{SEC:ALGORITHMS}, we show how to compute the~$k$ and
approximate $\probm(\setofruns_j)$ and $D_j$ up to an
arbitrarily small given precision.

An immediate corollary of Theorem~\ref{thm:mainthm} is an analogous
result for $\cme^{\calA}$.
\begin{corollary}\label{COR:CONT-MES}
  $\cme^{\calA}$ is well-defined for almost all runs of~$\smp$. Further,
  there are pairwise disjoint sets  
  $\setofruns_1,\ldots,\setofruns_K$ of runs in $\smp$ such that 
  $\probm(\setofruns_1 \cup \cdots \cup \setofruns_K)=1$, and for
  every \mbox{$1 \leq j \leq K$} there is a tuple $C_j$ such that
  $\cme^\calA(\sigma) = C_j$ for almost all \mbox{$\sigma \in \setofruns_j$}.
\end{corollary}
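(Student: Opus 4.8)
The plan is to follow the proof of Theorem~\ref{thm:mainthm} almost verbatim, replacing Birkhoff's ergodic theorem by a \emph{ratio} ergodic theorem. Write $A^q_n(w) = \sum_{i=1}^{n} T^i(w)\,\nstate{i}{q}(w)$ and $B_n(w) = \sum_{i=1}^{n} T^i(w)$, so that $\cme^{\calA}_q(w) = \limsup_{n\to\infty} A^q_n(w)/B_n(w)$ and, since $\nstate{i}{q}(w)\in\{0,1\}$, we always have $0 \le A^q_n(w) \le B_n(w)$. In particular the ratio stays in $[0,1]$, which is precisely what prevents the (a priori unbounded) stamps $T^i(w)$ permitted by our density assumptions from causing any blow-up.

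First I would recall the structure of $\product$ exposed in the proof of Theorem~\ref{thm:mainthm}: $\product$ is a discrete-time Markov chain on an uncountable state space, its region graph has finitely many ``bottom'' classes $\calC_1,\dots,\calC_k$, and almost every run of $\smp$ eventually drives the trajectory of $\product$ into exactly one of them, on which $\product$ is positive Harris recurrent with a unique invariant probability measure $\pi_j$; the sets $\setofruns_j$ of Theorem~\ref{thm:mainthm} are exactly the runs absorbed into $\calC_j$. I would reuse this partition unchanged, so that $K=k$ and $\setofruns_1,\dots,\setofruns_K$ coincide with those of Theorem~\ref{thm:mainthm}.

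The technical heart is to present $A^q_n/B_n$ as a ratio of Birkhoff sums of one fixed positive Harris chain. To this end I would enrich $\product$ by a ``look-ahead'' coordinate recording, in every state, the delay that the next transition is about to take; since this delay is drawn from the current state by the kernel of $\product$, the enriched process is again a Markov chain, still positive Harris recurrent on the (lifted) class $\calC_j$ with an invariant probability measure $\bar\pi_j$. On this enriched chain both $T^i(w)$ and $\nstate{i}{q}(w)$ are functions of a bounded window of consecutive states (this uses only the explicit construction of $\product$ and of $Q^i(\cdot)$); passing to the associated window chain --- again positive Harris recurrent, with induced invariant measure $\hat\pi_j$ --- the numerator $A^q_n$ and the denominator $B_n$ become Birkhoff sums of the nonnegative functions $f_q$ (``the current delay, times the indicator that the current location equals $q$'') and $h$ (``the current delay''). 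Since $\hat\pi_j(h)>0$ (stamps are almost surely positive), the ratio ergodic theorem for positive Harris chains (see, e.g., \cite{MT:book,RR:GSSMC-PS}) yields $A^q_n/B_n \to \hat\pi_j(f_q)/\hat\pi_j(h)=:C_{j,q}\in[0,1]$ almost surely on $\setofruns_j$; in particular the limit exists, so $\cme^{\calA}$ is well-defined there. Putting $C_j=(C_{j,q})_{q\in Q}$ then gives both assertions of the corollary.

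The step I expect to be the main obstacle is exactly the one around integrability. Our density assumption allows delay densities with infinite expectation (e.g. with tails $\sim c/t^{2}$), so one cannot simply normalise $A^q_n$ and $B_n$ by $n$ and take ordinary ergodic limits --- both may diverge. This is why the argument must be routed through the ratio ergodic theorem, which needs only that the invariant measure is $\sigma$-finite (here it is in fact finite) together with $\hat\pi_j(h)>0$, and delivers convergence of the ratio itself; the elementary bound $A^q_n\le B_n$ then keeps the limit finite. A secondary, more bookkeeping obstacle is to verify that the look-ahead enrichment and the passage to a window chain preserve positive Harris recurrence, and that $\nstate{i}{q}$ is genuinely a bounded-window functional of $\product$ --- routine given the product construction, but it must be spelled out.
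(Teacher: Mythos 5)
Your route differs genuinely from the paper's: the paper proves the corollary by applying Theorem~\ref{thm:mainthm} to the enlarged automaton $S\times\calA$ (which remembers the letter used to enter the current location) and then combines the resulting discrete frequencies $D_{j,(s,q)}$ with the strong law of large numbers for the i.i.d.\ delays emitted from each state $s$, arriving at the closed formula~(\ref{eq:cmeasure}); you instead enrich the product chain so that the delay becomes a state functional and invoke a ratio ergodic theorem. The enrichment and window-chain bookkeeping you defer are indeed routine (and necessary: consecutive states of $\product$ alone do not determine $T^i$ when all clocks are reset).

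The genuine gap sits exactly where you claim your approach earns its keep. Hopf's ratio ergodic theorem and its Harris-chain analogues (Chacon--Ornstein; see \cite{MT:book}) require \emph{both} $f_q$ and $h$ to be integrable with respect to the invariant measure; $\sigma$-finiteness plus $\hat\pi_j(h)>0$ is not enough. And when $h\notin L^1(\hat\pi_j)$ the conclusion genuinely fails, not merely the proof: take $\smp$ alternating deterministically between two states, both delay densities equal to $f(t)=\tfrac12 t^{-3/2}$ on $[1,\infty)$ (admissible under the paper's assumptions, infinite mean), and let $\calA$ track the current letter. The delays are then i.i.d.\ in the domain of attraction of a $\tfrac12$-stable law, so $\limsup_n M_n/B_n=1$ a.s., where $M_n$ is the largest of the first $n$ delays; that dominant summand sits at odd and at even positions infinitely often, so $A^q_n/B_n$ oscillates between values near $0$ and near $1$ and $\cme^{\calA}_q$ is not well-defined. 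The elementary bound $A^q_n\le B_n$ confines the ratio to $[0,1]$ but does nothing for convergence. So your argument needs the extra hypothesis that every delay density has finite mean --- the same hypothesis the paper uses silently when it writes $E_s=\sum_{s'}\prob(s)(s')\cdot E_{s,s'}$ as a finite number --- and once that is granted the ratio theorem buys nothing: dividing $A^q_n$ and $B_n$ by $n$ and applying the ordinary strong law of Theorem~\ref{thm:gen_inv}~(\ref{thm:gen_inv_2}) to your enriched chain (or the paper's i.i.d.\ argument) already yields the result.
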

Corollary~\ref{COR:CONT-MES} follows from Theorem~\ref{thm:mainthm} simply
by considering the discrete $\dme^{S \times \calA}$ measure, where the DTA
$S\times\calA$ is obtained from $\calA$ in the following way: the set of
locations of $S \times \calA$ is $\{q_0\} \cup (S \times Q)$, and for every
transition $(q_0,s,g,X,q')$ of $\calA$ we add a transition
$(q_0,s,g,X,(s,q'))$ to $S \times \calA$ and for every transition
$(q,s,g,X,q')$ and every $s' \in S$ we add a transition
$((s',q),s,g,X,(s,q'))$ to $S \times \calA$. The initial location 
of $S \times \calA$ is $q_0$. Intuitively, $S \times \calA$ is the 
same as $\calA$ but it explicitly ``remembers'' the letter 
which was used to enter the current location.
Let $k$ and $D_j$ be the constants of Theorem~\ref{thm:mainthm} constructed
for $\smp$ and $S \times \calA$. Observe that the expected time 
of performing a transition from a given $s \in S$, denoted by $E_s$, 
is given by 
$E_s = \sum_{s'\in S} \prob(s)(s')\cdot E_{s,s'}$, where $E_{s,s'}$ is the
expectation of a random variable with the density $\D(s,s')$. 
From this we easily obtain that
\begin{equation}\label{eq:cmeasure}
  C_{j,q} = \frac{\sum_{s\in S} E_s \cdot D_{j,(s,q)}}%
                {\sum_{p\in Q} \sum_{s\in S} E_s \cdot D_{j,(s,p)}}
\end{equation}
for all $q \in Q$ and $1 \leq j \leq k$. The details are given 
in Appendix~\ref{proof:cor:cont-mes}. Hence, we can also compute
the constant $K$ and approximate $\probm(\setofruns_j)$ and $C_j$
for every $1 \leq j \leq K$ using Equation~(\ref{eq:cmeasure}).

It remains to prove Theorem~\ref{thm:mainthm}. Let us start by sketching
the overall structure of our proof. First, we construct a synchronous product
$\product$ of $\smp$ and $\calA$, which is a Markov chain
with an uncountable state space \mbox{$\Gamma_\product = S \times Q \times
(\Rsetpo)^n$}.  Intuitively, $\product$ behaves in the same way as $\smp$
and simulates the computation of $\calA$ on-the-fly (see
Figure~\ref{fig:product}).
\begin{figure}[t]
\usetikzlibrary{snakes}
\begin{center}
\begin{tikzpicture}[->,>=stealth', node distance=0.5cm,
state/.style={shape=ellipse,draw, font=\footnotesize,inner sep=0.5mm,outer sep=0.8mm},
product/.style={shape=rectangle,draw, font=\footnotesize,inner sep=1mm,outer sep=0.8mm},
wait/.style={font=\scriptsize,above},
time/.style={font=\scriptsize,above left},
letter/.style={font=\scriptsize,left,decorate,decoration={snake,amplitude=0.3mm,segment length=1mm,post length=1.2mm}}
]
\matrix[row sep=0.5cm]{
 \node (SMP) {$\smp:$}; &[0.1cm]
 \node (s0) [state] {$s_0$}; &[0.7cm]
 \node (s1) [state] {$s_1$}; &[0.7cm]
 \node (s2) [state] {$s_2$}; \\
 \node (TA) {$\ta:$}; &
 \node (q0) [state] {$q_0,\nu_0$}; &
 \node (q1) [state] {$q_1,\nu_1$}; &
 \node (q2) [state] {$q_2,\nu_1$}; \\[1cm]
 \node (P) {$\product:$}; &
 \node (z0) [product] {$s_0,q_0,\nu_0$}; &
 \node (z1) [product] {$s_1,q_1,\nu_1$}; &
 \node (z2) [product] {$s_2,q_2,\nu_2$}; \\
};
\node (q1') at ([yshift=-1.2cm]q0) [state] {$q_1,\bar{\nu}_0$};
\node (q2') at ([yshift=-1.2cm]q1) [state] {$q_2,\bar{\nu}_1$};
\path (s0) edge [wait] node[wait] {$t_0$} (s1);
\path (s1) edge [wait] node[wait] {$t_1$} (s2);
\path (q0) edge [letter] node[letter] {$s_0$} (q1');
\path (q1') edge [time] node[time] {$t_0$} (q1);
\path (q1) edge [letter] node[letter] {$s_1$} (q2');
\path (q2') edge [time] node[time] {$t_1$} (q2);

\path (z0) edge [wait] node[wait] {$t_0$} (z1);
\path (z1) edge [wait] node[wait] {$t_1$} (z2);
\end{tikzpicture}
\end{center}

\caption{Synchronizing $\smp$ and $\calA$ in $\product$. 
Notice that $\nu_0 = \bar{\nu}_0 = \vec{0}$ and
$\nu_{i+1} = \bar{\nu_{i}} + t_{i}$.
}
\label{fig:product}
\end{figure}
Then, we construct a finite region graph
$\regiongraph_\product$ over the
product $\product$.  The nodes of $\regiongraph_\product$ are the sets 
of states that, roughly speaking, satisfy the same guards of $\calA$. Edges are
induced by transitions of the product (note that if two states satisfy the
same guards, the sets of enabled outgoing transitions are the same).
By relying on arguments presented 
in~\cite{ACD:verifying-automata-real-time,BKKKR:GSMP-games-TA}, we show 
that almost all runs reach a node of a bottom strongly connected 
component (BSCC) $\mathcal{C}$ of $\regiongraph_\product$ (by definition, 
each run which enters $\mathcal{C}$ remains in
$\mathcal{C}$). This gives us the partition of the set of runs of $\smp$
into the sets $\setofruns_1,\ldots,\setofruns_k$ (each $\setofruns_j$ 
corresponds to one of the BSCCs of $\regiongraph_\product$).

Subsequently, we concentrate on a fixed BSCC $\mathcal{C}$, and prove
that almost all runs that reach $\mathcal{C}$ have the 
same frequency of visits to a given $q\in Q$ (this gives us the 
constant $D_{j,q}$). Here we employ several deep results from the 
theory of general state space Markov chains 
(see Theorem~\ref{thm:gen_inv}). 
To apply these results, we prove that assuming aperiodicity 
of $\regiongraph_\product$ (see Definition~\ref{def:period}),
the state space of the product $\product$ is
\emph{small} (see Definition~\ref{def:small} and
Lemma~\ref{prop:small-product} below). This is perhaps the most 
demanding part of our proof. Roughly speaking, we show that
there is a distinguished subset of states reachable from each state 
in a fixed number of steps with probability bounded from~$0$. 
By applying Theorem~\ref{thm:gen_inv}, we
obtain a complete invariant distribution on the product, i.e., 
in principle, we obtain a constant frequency of any non-trivial subset 
of states. From this we derive our results in a straightforward way.
If $\regiongraph_\product$ is periodic, we use standard techniques 
for removing periodicity and then basically follow the same 
stream of arguments as in the aperiodic case.

\subsection{General state space Markov chains}
\label{sec:gssmc}

We start by recalling the definition of ``ordinary'' discrete-time 
Markov chains with discrete state space (DTMC). A DTMC is given
by a finite or countably infinite state space $S$, an initial probability
distribution over $S$, and a one-step transition matrix $P$ which defines
the probability $P(s,s')$ of every transion $(s,s') \in S \times S$ so that
$\sum_{s' \in S} P(s,s') = 1$ for every $s \in S$. In the setting of 
uncountable state spaces, transition probabilities cannot be specified
by a transition matrix. Instead, one defines the probabilities of moving
from a given state $s$ to a given measurable subset $X$ of states. 
Hence, the concept of transition matrix is replaced with a more
general notion of \emph{transition kernel} defined below. 

\begin{definition}\label{def:kernel}
  A \emph{transition kernel} over a measurable space $(\Gamma,\G)$ is a
  function $\kernel:\Gamma\times\G\to[0,1]$ such that
  \begin{enumerate}
  \item $\kernel(z,\cdot)$ is a probability measure over $(\Gamma,\G)$ for
    each $z\in\Gamma$;
  \item $\kernel(\cdot,A)$ is a measurable function for each $A\in\G$
    (i.e., for every $c\in \Rset$, the set of all $z\in \Gamma$ satisfying
    \mbox{$\kernel(z,A)\geq c$} belongs to $\G$).
   \end{enumerate}
\end{definition}
A transition kernel is the core of the following definition.
\begin{definition}\label{def:gssmc}
  A \emph{general state space Markov chain (GSSMC)} with a state space
  $(\Gamma,\G)$, a transition kernel $\kernel$ and an initial probability
  measure $\initmeasure$ is a stochastic process $\Phi=\Phi_1,\Phi_2,\ldots$
  such that each $\Phi_i$ is a random variable over a probability space
  $(\Omega_{\Phi},\sigmafield_{\Phi},\probm_{\Phi})$ where
  \begin{itemize}
  \item $\Omega_{\Phi}$ %
    is a set of \emph{runs}, i.e., infinite words over~$\Gamma$.
  \item $\sigmafield_{\Phi}$ is the product $\sigma$-field
    $\Productfield_{i=0}^{\infty} \G$.
  \item $\probm_{\Phi}$ is the unique probability measure over
    $(\Omega_{\Phi},\sigmafield_{\Phi})$ such that for every finite sequence 
    $A_0,\cdots,A_n \in \sigmafield_{\Phi}$ we have that
    $\probm_\Phi(\Phi_0 {\in} A_0, \cdots,\Phi_n {\in} A_n)$ is equal to
    \begin{equation}\label{eq:gssmc}
      \int_{y_0 \in A_0}\!\cdots \int_{y_{n-1} \in A_{n-1}}
      \initmeasure(\de{y_0})\cdot \kernel(y_0,\de{y_1}) \cdots
      \kernel(y_{n-1},A_n).
    \end{equation}
  \item Each $\Phi_i$ is the projection of elements of $\Omega_{\Phi}$ onto
    the $i$-th component.
  \end{itemize}
\end{definition}
A \emph{path} is a finite sequence $z_1 \cdots z_n$ of states from $\Gamma$.
From Equation~(\ref{eq:gssmc}) we get that $\gssmc$ also
satisfies the following properties which will be used to show several
results about the chain $\gssmc$ by working with the transition kernel only.
\begin{enumerate}
\item $\probm_{\Phi} (\Phi_0\in A_0)=\initmeasure(A_0)$,
\item $\probm_{\Phi} (\Phi_{n+1}\in A\mid \Phi_n,\ldots,\Phi_0) =
  \probm_{\Phi} (\Phi_{n+1}\in A\mid \Phi_n) = \kernel(\Phi_n,A)$ almost
  surely,
\item $\probm_{\Phi} (\Phi_{n+m}\in A\mid \Phi_n)= \kernel^m(\Phi_n,A)$
  almost surely,
\end{enumerate}
where the $m$-step transition kernel $\kernel^m$ is defined as follows:
\begin{align*}
  \kernel^1(z,A)    &= \kernel(z,A) \\
  \kernel^{i+1}(z,A) &= \int_{\Gamma} \kernel(z,dy) \cdot \kernel^i(y,A).
\end{align*}
Notice that the transition kernel and the $m$-step transition kernel 
are analogous
counterparts to the transition matrix and the $k$-step transition matrix 
of a DTMC.

As we mentioned above, our proof of Theorem~\ref{thm:mainthm} employs 
several results of GSSMC theory. In particular, we make use of the notion of
\emph{smallness} of the state space defined as follows.
\begin{definition}\label{def:small}
  Let $m\in \Nset$, $\varepsilon>0$, and $\nu$ be a probability measure
  on $\G$.  A set $C\in \G$ is
  $(m,\varepsilon,\nu)$-\emph{small} if for all $x\in C$ and $B\in \G$
  we have that $\kernel^m(x,B) \geq \varepsilon\cdot\nu(B)$.
\end{definition}
GSSMCs where the whole state space is small have many nice properties, 
and the relevant ones are summarized in the following theorem.
\begin{theorem}\label{thm:gen_inv}
  If $\Gamma$ is $(m,\varepsilon,\nu)$-small, then
  \begin{enumerate}
  \item \label{thm:gen_inv_1}{\bf [Existence of invariant measure]} There
    exists a unique probability measure $\pi$ such that for all $A\in
    \G$ we have that
    \[
    \pi(A)\quad = \quad \int_{\Gamma} \pi(dx)\kernel(x,A)
    \]
  \item \label{thm:gen_inv_2}{\bf [Strong law of large numbers]} If
    $h:\Gamma\rightarrow \Rset$ satisfies $\int_{\Gamma} h(x)
    \pi(dx)<\infty$, then almost surely
    \[
    \lim_{n\rightarrow \infty} \frac{\sum_{i=1}^n h(\Phi_i)}{n}\quad = \quad
    \int_{\Gamma} h(x) \pi(dx)
    \]
  \item \label{thm:gen_inv_3}{\bf [Uniform ergodicity]} For all $x\in
    \Gamma$, $A \in \G$, and all $n\in \Nset$,
    \[
    \sup_{A\in \G} |\kernel^n(x,A)-\pi(A)|\quad \leq \quad
    (1-\varepsilon)^{\lfloor n/m\rfloor}
    \]
  \end{enumerate}
\end{theorem}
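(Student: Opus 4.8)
The plan is to observe that $(m,\varepsilon,\nu)$-smallness of the \emph{whole} state space $\Gamma$ is precisely the classical Doeblin condition, and that all three items follow from one uniform total-variation contraction for the $m$-step kernel. For probability measures $\mu_1,\mu_2$ on $(\Gamma,\G)$ write $\mu\kernel^m$ for the measure $B\mapsto\int_\Gamma\mu(\de{x})\,\kernel^m(x,B)$ and set $d(\mu_1,\mu_2)=\sup_{A\in\G}|\mu_1(A)-\mu_2(A)|$. The first step is to prove
\[
  d(\mu_1\kernel^m,\,\mu_2\kernel^m)\ \leq\ (1-\varepsilon)\,d(\mu_1,\mu_2).
\]
Smallness says $\kernel^m(x,\cdot)-\varepsilon\,\nu(\cdot)$ is a non-negative measure of total mass $1-\varepsilon$, so $R(x,\cdot):=(1-\varepsilon)^{-1}\bigl(\kernel^m(x,\cdot)-\varepsilon\,\nu(\cdot)\bigr)$ is again a transition kernel and $\kernel^m=\varepsilon\,\nu+(1-\varepsilon)R$ (the minorization split). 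Hence $\mu_1\kernel^m-\mu_2\kernel^m=(1-\varepsilon)(\mu_1 R-\mu_2 R)$, and since a transition kernel never increases $d$ — for $0\le f\le 1$ and a signed measure $\sigma$ with $\sigma(\Gamma)=0$ one has $|\int f\,\de{\sigma}|\le\sup_{A\in\G}|\sigma(A)|$, applied to $f=R(\cdot,A)$ — the displayed inequality follows.

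From this contraction, items~\ref{thm:gen_inv_1} and~\ref{thm:gen_inv_3} come quickly. The set of probability measures on $(\Gamma,\G)$ is complete for $d$ and $\mu\mapsto\mu\kernel^m$ is a contraction, so by the Banach fixed point theorem there is a unique $\pi$ with $\pi\kernel^m=\pi$. Then $(\pi\kernel)\kernel^m=(\pi\kernel^m)\kernel=\pi\kernel$, so $\pi\kernel$ is also a fixed point of $\kernel^m$ and equals $\pi$; thus $\pi$ is $\kernel$-invariant, and since every $\kernel$-invariant probability measure is $\kernel^m$-invariant it must equal $\pi$ — this is item~\ref{thm:gen_inv_1}. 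For item~\ref{thm:gen_inv_3}, write $n=qm+r$ with $q=\lfloor n/m\rfloor$ and $0\le r<m$; using $\pi=\pi\kernel^{qm}$ and iterating the contraction $q$ times,
\[
  \sup_{A\in\G}\bigl|\kernel^n(x,A)-\pi(A)\bigr|
  \ =\ d\bigl(\delta_x\kernel^r\kernel^{qm},\ \pi\kernel^{qm}\bigr)
  \ \leq\ (1-\varepsilon)^{q}\,d\bigl(\delta_x\kernel^r,\pi\bigr)
  \ \leq\ (1-\varepsilon)^{\lfloor n/m\rfloor},
\]
the last inequality because $d$ between two probability measures is at most $1$.

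The remaining item~\ref{thm:gen_inv_2}, the strong law of large numbers, is where the real work lies: the contraction only gives that the averaged expectations $\frac1n\sum_{i=1}^n\int_\Gamma\kernel^i(x,\de{y})\,h(y)$ converge to $\int_\Gamma h\,\de{\pi}$, not the almost-sure statement. Here I would exploit the split $\kernel^m=\varepsilon\,\nu+(1-\varepsilon)R$ as a genuine regeneration structure on the $m$-skeleton $(\Phi_{km})_{k\ge0}$: at each skeleton step the chain, independently of the past, with probability $\varepsilon$ restarts from the fixed law $\nu$, so the skeleton run decomposes into i.i.d.\ excursions whose lengths have geometric (hence exponentially integrable) tails; for $\pi$-integrable $h$ the per-excursion sums of $h(\Phi_{km})$ have a common finite mean proportional to $\int_\Gamma h\,\de{\pi}$, and the ordinary i.i.d.\ SLLN for the excursion sums and excursion lengths, with the usual sandwiching between consecutive regeneration epochs, yields $\frac1N\sum_{k=1}^N h(\Phi_{km})\to\int_\Gamma h\,\de{\pi}$ almost surely. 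Averaging the same statement over the $m$ shifted skeletons $(\Phi_{km+j})_{k\ge0}$, $0\le j<m$ (each again Doeblin with the same $\varepsilon,\nu$), then gives $\frac1n\sum_{i=1}^n h(\Phi_i)\to\int_\Gamma h\,\de{\pi}$ almost surely. For the final paper I expect to shortcut this by invoking the ergodic theorem for positive Harris recurrent chains from the standard general-state-space literature (e.g.\ \cite{MT:book}), after noting that the whole-space minorization makes $\Phi$ $\nu$-irreducible, aperiodic and positive Harris recurrent with invariant probability $\pi$. The only genuine obstacle is thus item~\ref{thm:gen_inv_2}; items~\ref{thm:gen_inv_1} and~\ref{thm:gen_inv_3} are direct consequences of the minorization.
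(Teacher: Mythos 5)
Your proposal is correct, but it takes a genuinely different route from the paper. The paper does not argue via the minorization split at all: it verifies the hypotheses of the general Harris-chain machinery --- $\nu$-irreducibility, petiteness of $\Gamma$, trivial return time --- and then cites Meyn--Tweedie (recurrence, existence and uniqueness of $\pi$, positive Harris recurrence, and their LLN, Theorem~17.0.1) for items~\ref{thm:gen_inv_1} and~\ref{thm:gen_inv_2}, and Roberts--Rosenthal (Theorem~8) for item~\ref{thm:gen_inv_3}. You instead exploit the whole-space smallness directly as a Doeblin condition: the split $\kernel^m=\varepsilon\,\nu+(1-\varepsilon)R$ gives a total-variation contraction for the $m$-skeleton, Banach's fixed point theorem yields existence and uniqueness of $\pi$ (your reduction from $\kernel^m$-invariance to $\kernel$-invariance is correct), and iterating the contraction gives the explicit bound of item~\ref{thm:gen_inv_3} --- which is essentially a self-contained re-proof of the Roberts--Rosenthal result the paper cites. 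For item~\ref{thm:gen_inv_2} your regeneration sketch (i.i.d.\ excursions on the skeleton, geometric regeneration tails, i.i.d.\ SLLN plus sandwiching, then averaging over the $m$ shifted skeletons, each of which is again Doeblin with the same $\varepsilon,\nu$ and has $\pi$ as its unique invariant law) is the classical Athreya--Ney--Nummelin route and is sound, though to make it rigorous you must actually construct the split chain on an enlarged probability space and justify the identification of the per-block mean with $\int h\,\de{\pi}$ (a Kac-type formula); the paper avoids this work by invoking the ergodic theorem for positive Harris chains, which is exactly the shortcut you yourself propose as a fallback. The trade-off: your argument is more elementary and self-contained and delivers the quantitative rate in item~\ref{thm:gen_inv_3} with no external input, while the paper's citation-based proof is shorter and leans on standard, fully verified literature; only the harmless edge case $\varepsilon=1$ (where $R$ is undefined but $\kernel^m(x,\cdot)=\nu$ for every $x$, making everything trivial) deserves a remark in your version.
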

\begin{proof}
  The theorem is a consequence of stadard results for GSSMCs.
Since $\Gamma$ is $(m,\varepsilon,\nu)$-small, we have
  \begin{enumerate}[(i)]
  \item $\Phi$ is by definition $\varphi$-irreducible for $\varphi=\nu$, and
    thus also $\psi$-irreducible by~\cite[Proposition~4.2.2]{MT:book};
  \item $\Gamma$ is by definition also $(a,\varepsilon,\nu)$-petite (see~\cite[Section 5.5.2]{MT:book}),
      where $a$ is the Dirac distribution on $\Nseto$ with $a(m)=1$,
      $a(n)=0$ for $n\neq m$;
  \item the first return time to $\Gamma$ is trivially $1$.
  \end{enumerate}
  \begin{enumerate}
  \item[ad 1.] By (iii), $\Gamma$ is not uniformly transient, hence by (i),
    (ii) and \cite[Theorem 8.0.2]{MT:book}, $\Phi$ is recurrent. Thus
    by~\cite[Theorem~10.0.1]{MT:book}, there exists a unique invariant
    probability measure $\pi$.
  \item[ad 2.] By (i)-(iii) and~\cite[Theorem~10.4.10 (ii)]{MT:book},
    $\Phi$ is positive Harris. Therefore, we may apply~\cite[Theorem~17.0.1
    (i)]{MT:book} and obtain the desired result.
  \item[ad 3.] This follows immediately
    from~\cite[Theorem~8]{RR:GSSMC-PS}.\qed
  \end{enumerate}
\renewcommand{\qed}{}
\end{proof}

\subsection{The product process}\label{SEC:PRODUCT-PROCESS}

The \emph{product process} of $\smp$ and $\ta$, denoted by $\product$,
is a GSSMC with the state space 
$\Gamma_\product = S \times Q \times (\Rsetpo)^n$, where
$n = |\calX|$ is the number of clocks of~$\calA$.  The
$\sigma$-field over $\Gamma_\product$ is the product $\sigma$-field
\mbox{$\G_\product = 2^S \productfield 2^Q \productfield \borel^n$}
where $\borel^n$ is the
Borel $\sigma$-field over the set $(\Rsetpo)^n$.  For each $A\in
\G_\product$, the initial probability $\initmeasure_\product(A)$ is equal to
$\sum_{(s,q_0,\vec{0})\in A} \distribution_0(s)$ (recall that $\distribution_0$ is the initial distribution
of $\smp$).

The behavior of $\product$ is depicted in Figure~\ref{fig:product}.
Each step of the product process corresponds to one step of
$\smp$ and two steps of $\ta$. The step of the product starts by 
simulating the discrete step of $\ta$ that reads the 
current state of $\smp$ and possibly resets some clocks, followed
by simulating simultaneously the step of $\smp$ that takes time $t$ and 
the corresponding step of $\ta$ which reads the time stamp~$t$. 

Now we define the transition kernel $\kernel_\product$ of the product
process. Let $z = (s,q,\nu)$ be a state of $\Gamma_\product$,
and let $(\bar{q},\bar{\nu})$ be the configuration of $\calA$ entered
from the configuration $(q,\nu)$ after reading~$s$ (note that
$\bar{\nu}$ is not necessarily the same as $\nu$ because $\calA$ 
may reset some clocks). 
It suffices to define $\kernel_\product(z,\cdot)$ only for generators of $\G_\product$ and
then apply the extension theorem (see, e.g., \cite{Billingsley:book}) to obtain a unique probability measure 
$\kernel_\product(z,\cdot)$ over $(\Gamma_\product,\G_\product)$. Generators of $\G_\product$ are
sets of the form $\{s'\}\times \{q'\}\times \vec{I}$ where $s'\in S$, $q'\in Q$ and $\vec{I}$ is the product
$I_1\times \cdots \times I_n$ of intervals $I_i$ in $\Rsetpo$. If 
$q' \neq \bar{q}$, then we define $\kernel_\product(z,\{s'\}\times \{q'\}\times \vec{I}) = 0$. Otherwise,
we define
\[
  \kernel_\product(z,\{s'\}\times \{q'\}\times \vec{I}) = \prob(s)(s')\cdot \int_0^\infty f(t) \cdot 1_\vec{I}(\bar{\nu} + t) 
  \de{t}
\]
Here $f = \D(s,s')$ and $1_\vec{I}$ is the indicator function of 
the set~$\vec{I}$.

Since $\kernel_\product(z,\cdot)$ is by definition a probability measure
over $(\Gamma_\product,\G_\product)$, it remains to 
check the second condition of Definition~\ref{def:kernel}.

\begin{lemma}\label{lem:product-kernel}
  Let $A \in \G_\product$. Then $\kernel_\product(\cdot,A)$ is a
  measurable function, i.e., $\product$ is a GSSMC.
\end{lemma}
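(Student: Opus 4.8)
The plan is to show that $\kernel_\product(\cdot,A)$ is measurable for every $A \in \G_\product$ by the standard monotone-class (Dynkin $\pi$-$\lambda$) argument: first verify measurability on the generators of $\G_\product$, i.e.\ on sets of the form $\{s'\}\times\{q'\}\times\vec{I}$, and then lift to all of $\G_\product$ using the fact that the collection of sets $A$ for which $\kernel_\product(\cdot,A)$ is measurable forms a $\lambda$-system (this uses that $\kernel_\product(z,\cdot)$ is a probability measure for each fixed $z$, already established, so countable disjoint unions and complements behave well) containing the $\pi$-system of generators. So the whole lemma reduces to a statement about the generators.

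First I would fix a generator $\{s'\}\times\{q'\}\times\vec{I}$ and examine how $\kernel_\product(z,\{s'\}\times\{q'\}\times\vec{I})$ depends on $z=(s,q,\nu)$. The key observation is that the region construction makes this dependence piecewise-nice: partition $\Gamma_\product$ according to (a) the discrete component $(s,q)$ — only finitely many choices — and (b) which edge $(q,s,g,X,\bar q)$ of $\calA$ is enabled at $(q,\nu)$, i.e.\ which guard $\nu$ satisfies. Since the guards are deterministic and total (Definition of DTA), for each $(s,q)$ the valuation space $(\Rsetpo)^n$ is partitioned into finitely many measurable pieces (intersections of guards, which are defined by linear inequalities with integer constants, hence Borel), and on each such piece the resulting location $\bar q$ and reset set $X$ are constant. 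On a fixed piece, $\bar\nu = \nu[X:=\vec 0]$ is a continuous (indeed affine, coordinate-projection) function of $\nu$, so $\bar\nu + t$ is jointly measurable in $(\nu,t)$, and therefore $1_{\vec{I}}(\bar\nu+t)$ is a measurable function of $(\nu,t)$. Consequently $z \mapsto \prob(s)(s')\cdot\int_0^\infty f(t)\cdot 1_{\vec{I}}(\bar\nu+t)\,\de t$ is measurable in $\nu$ on that piece by Tonelli/Fubini applied to the nonnegative measurable integrand $f(t)\cdot 1_{\vec I}(\bar\nu+t)$ (here $f=\D(s,s')$ is a fixed measurable density). If $q'\neq\bar q$ on the piece the value is $0$, trivially measurable. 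Gluing together the finitely many pieces (over the finitely many $(s,q)$ and finitely many guard-regions) yields a measurable function on all of $\Gamma_\product$, since a function that is measurable on each block of a finite measurable partition is measurable.

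The step I expect to be the (mild) technical crux is the joint measurability giving us the right to differentiate the integral with respect to $\nu$ — concretely, that $(\nu,t)\mapsto 1_{\vec I}(\nu[X:=\vec 0]+t)$ is product-measurable so that $\nu\mapsto\int_0^\infty f(t)\,1_{\vec I}(\nu[X:=\vec 0]+t)\,\de t$ comes out Borel. This is exactly the content of Tonelli's theorem for nonnegative measurable functions, but one has to be careful that the region decomposition is genuinely finite and measurable so that the reset map $\nu\mapsto\nu[X:=\vec 0]$ can be treated as globally affine block-by-block. Once this is in place, the $\pi$-$\lambda$ extension to arbitrary $A\in\G_\product$ is routine. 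Having verified both conditions of Definition~\ref{def:kernel}, we conclude that $\kernel_\product$ is a transition kernel and hence $\product$ is a GSSMC in the sense of Definition~\ref{def:gssmc}.
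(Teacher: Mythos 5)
Your proposal is correct and follows essentially the same route as the paper: reduce to the generators $\{s'\}\times\{q'\}\times\vec{I}$ (the paper cites the standard extension lemma where you invoke a $\pi$-$\lambda$ argument) and then exploit the finite guard/region decomposition of the valuation space, on each piece of which the selected edge is constant and the integral $\int f(t)\,1_{\vec{I}}(\bar{\nu}+t)\,\de{t}$ depends measurably on $\nu$. If anything, your explicit appeal to joint measurability of $(\nu,t)\mapsto 1_{\vec{I}}(\bar{\nu}+t)$ and Tonelli on each block of the finite measurable partition is more careful than the paper's argument, which only verifies piecewise continuity of the one-variable sections obtained by fixing all clocks but one.
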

A proof of this lemma can be found in Appendix~\ref{app:product-kernel}.
Recall that by Definition~\ref{def:gssmc}, $\probm_\product$ is the unique
probability measure on the product $\sigma$-field
$\sigmafield_\product=\Productfield_{i=0}^{\infty} \G_\product$ induced by 
$\kernel_\product$ and the initial probability measure
$\initmeasure_\product$.

\subsubsection{The correspondence between $\product$ and $\smp$}

In this subsection we show that $\product$ correctly reflects the
behaviour of $\smp$. First, we define the $\dme^{\calA}$ measure for
$\product$. (As the DTA $\calA$ is fixed, we omit them and write
$\dme$ and $\dme_q$ instead of $\dme^{\calA}$ and $\dme^{\calA}_q$,
respectively.) Let $\sigma = (s_0,q_0,\nu_0)\,(s_1,q_1,\nu_1)\cdots$
be a run of $\product$ and $q\in Q$ a location. For every $i \in \Nseto$,
let $1^i_{q}(\sigma)$ be either $1$ or $0$ depending on whether
if $q_i=q$ or not, respectively. We put
\[
 \dme_{q}(\sigma) = \limsup_{n\to\infty} \frac{\sum_{i=1}^n 1^i_{q}(\sigma)}{n}
\]  
\begin{lemma}\label{lem:product-correctness}
  There is a measurable one-to-one mapping $\pr$ from the set of runs
  of $\smp$ to the set of runs of $\product$ such that
  \begin{itemize}
  \item $\pr$ preserves measure, i.e., for every measurable set $X$ 
    of runs of $\smp$ we have that $\pr(X)$ is also measurable and 
    $\probm_\smp(X) = \probm_\product(\pr(X))$;
  \item $\pr$ preserves $\dme$, i.e., for every run $\sigma$ of $\smp$
    and every $q \in Q$ we have that $\dme_{q}(\sigma)$ is well-defined 
    iff $\dme_{q}(\pr(\sigma))$ is well-defined, 
    and $\dme_{q}(\sigma) = \dme_{q}(\pr(\sigma))$.
  \end{itemize}
\end{lemma}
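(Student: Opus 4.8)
\emph{Approach.} I would define $\pr$ explicitly from the determinism of $\calA$ and then verify the two bullets separately, the measure-preservation bullet being the only nontrivial one. Concretely, fix a run $\sigma = s_0\,t_0\,s_1\,t_1\cdots$ of $\smp$; since $\Sigma = S$, $\sigma$ is itself a timed word and $\calA$ has a unique run on it. Let $(q_i,\nu_i)$ be the configuration of $\calA$ reached just before it reads $s_i$, so $q_0$ is the initial location, $\nu_0 = \vec{0}$, and, writing $X_i$ for the reset set of the unique edge enabled on $s_i$ from $(q_i,\nu_i)$, $\nu_{i+1} = \nu_i[X_i{:=}\vec{0}] + t_i$. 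Set $\pr(\sigma) = (s_0,q_0,\nu_0)\,(s_1,q_1,\nu_1)\cdots$; by Figure~\ref{fig:product} this is a run of $\product$. Call a run of $\product$ \emph{consistent} if it starts in some $(s,q_0,\vec{0})$ and every step obeys the update rule above, and let $\Omega^{c}_\product$ be the set of consistent runs of $\product$. Applying the update rule backwards recovers $s_i$ and $t_i = \nu_{i+1}(x) - (\nu_i[X_i{:=}\vec{0}])(x)$ (for any clock $x$) from $\pr(\sigma)$, which defines a map $\rho : \Omega^{c}_\product \to \setofruns_\smp$ with $\rho\circ\pr = \mathrm{id}$ and $\pr\circ\rho = \mathrm{id}$; in particular $\pr$ is one-to-one, its image is exactly $\Omega^{c}_\product$, and $\rho = \pr^{-1}$. (If $\calX = \emptyset$, adjoin an auxiliary clock occurring in no guard; this changes no value of $\dme$ and no probability.) Moreover $\Omega^{c}_\product$ is measurable, being cut out by countably many measurable conditions, one per step, and $\probm_\product(\Omega^{c}_\product) = 1$, since $\kernel_\product$ gives probability $0$ to any step violating the update rule and $\initmeasure_\product$ is supported on $\{(s,q_0,\vec{0}) : s\in S\}$.

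\emph{Measurability and reduction to cylinders.} For fixed discrete components, $\nu_i$ is a piecewise-affine (Borel) function of $t_0,\dots,t_{i-1}$, and likewise each coordinate of $\rho$ is a piecewise-affine Borel function of the clock coordinates; hence $\pr$ and $\rho$ are measurable, and for measurable $X\subseteq\setofruns_\smp$ we get $\pr(X) = \rho^{-1}(X)$, which is measurable. As $\pr$ is one-to-one it commutes with complements and countable disjoint unions, $\pr(\setofruns_\smp) = \Omega^{c}_\product$ has $\probm_\product$-measure $1$, and $\pr$ maps measurable sets to measurable sets; so $X\mapsto\probm_\product(\pr(X))$ is a probability measure on $(\setofruns_\smp,\sigmafield_\smp)$. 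Since the cylinders $\setofruns(B)$ form a $\pi$-system generating $\sigmafield_\smp$, the usual uniqueness argument reduces measure preservation to checking $\probm_\product(\pr(\setofruns(B))) = \probm_\smp(\setofruns(B))$ for every template $B = s_0\,I_0\cdots s_{n+1}$.

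\emph{The cylinder identity, and the main obstacle.} Up to the $\probm_\product$-null set of inconsistent runs, the set $\pr(\setofruns(B))$ consists of the runs of $\product$ whose first $n{+}2$ states have $\smp$-components $s_0,\dots,s_{n+1}$ and in which the $i$-th elapsed time lies in $I_i$ for $0\le i\le n$ (determinism forcing the location and clock components). Evaluating $\probm_\product$ of this set through the nested integral of Equation~(\ref{eq:gssmc}) with the explicit form of $\kernel_\product$: the initial measure contributes $\distribution_0(s_0)$; at the $i$-th step a wrong location component contributes $0$, and the correct one contributes $\prob(s_i)(s_{i+1})\cdot\int_0^\infty f_i(t)\,1_{I_i}(t)\,\de{t}$, since the clock-component constraint there is exactly $t\in I_i$. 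Multiplying out yields $\distribution_0(s_0)\prod_{i=0}^{n}\prob(s_i)(s_{i+1})\int_{I_i}f_i(t)\,\de{t} = \probm_\smp(\setofruns(B))$, as desired. The delicate point, which I expect to be the only real obstacle, is the Fubini/iterated-integration bookkeeping needed to turn the nested integral over this non-rectangular region into the stated product of one-step factors; the rest is routine.

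\emph{Preservation of $\dme$.} By construction the location component $q_{i+1}$ of the $(i{+}1)$-st state of $\pr(\sigma)$ is the location $\calA$ enters after reading the prefix $s_0\,t_0\cdots s_i$ of $\sigma$, i.e.\ $q_{i+1} = Q^i(\sigma)$, so $\nstate{i+1}{q}(\pr(\sigma)) = \nstate{i}{q}(\sigma)$ for all $i\in\Nseto$ and $q\in Q$. Hence $\sum_{i=1}^{n}\nstate{i}{q}(\sigma) = \sum_{j=2}^{n+1}\nstate{j}{q}(\pr(\sigma))$, which differs from $\sum_{j=1}^{n+1}\nstate{j}{q}(\pr(\sigma))$ by at most $1$; dividing by $n$, the one-index shift and this bounded discrepancy vanish in the limit, so $\limsup_{n\to\infty}\frac1n\sum_{i=1}^{n}\nstate{i}{q}(\sigma) = \limsup_{m\to\infty}\frac1m\sum_{j=1}^{m}\nstate{j}{q}(\pr(\sigma))$, and the corresponding genuine limit exists on one side iff it exists on the other. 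This is precisely $\dme_q(\sigma) = \dme_q(\pr(\sigma))$ together with the asserted equivalence of well-definedness, for every $q\in Q$.
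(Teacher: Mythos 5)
Your proposal is correct and follows essentially the same route as the paper: the same explicit definition of $\pr$ (and its inverse on the full-measure set of ``consistent'' product runs), measurability of images and preimages, preservation of $\dme$ up to a harmless index shift, and measure preservation reduced to cylinder templates via the explicit kernel and a uniqueness/extension argument. The one point you flag as delicate --- evaluating the nested integral over the non-rectangular image of a cylinder --- is resolved in the paper by conditioning step by step and observing that the ``diagonal'' image set $N_{i+1}(\Phi_i)$ has the same one-step probability as the enclosing hypercube $C(I_i,\bar{\nu})$, i.e.\ exactly the bookkeeping you anticipate.
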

A formal proof of Lemma~\ref{lem:product-correctness} is given
in Appendix~\ref{app:product-correctness}.

\subsubsection{The region graph of $\product$}\label{sec:region-graph}
Although the state-space $\Gamma_\product$ is uncountable, we can define the
standard \emph{region relation} $\region$ \cite{AD:Timed-Automata} over
$\Gamma_\product$ with finite index, and then work with finitely many
\emph{regions}.
For a given $a \in \Rset$, we use $\fr(a)$ to denote the fractional part
of~$a$, and $\mathit{int}(a)$ to denote the integral part of~$a$.
For $a,b \in \Rset$, we say that $a$ and $b$ \emph{agree on integral part}
if $\mathit{int}(a) = \mathit{int}(b)$ and neither 
or both $a$, $b$ are integers.

We denote by $\maxb$ the maximal constant that appears in the guards of $\ta$
and say that a
clock $x \in \calX$ is \emph{relevant} for $\nu$ if $\nu(x) \leq \maxb$.
Finally, we put $(s_1,q_1,\nu_1) \region (s_2,q_2,\nu_2)$ if
\begin{itemize}
\item $s_1 = s_2$ and $q_1 = q_2$;
\item for all relevant $x \in \calX$ we have that $\nu_1(x)$ and $\nu_2(x)$
  agree on integral parts;
\item for all relevant $x, y \in \calX$ we have that \mbox{$\fr(\nu_1(x)) \leq
  \fr(\nu_1(y))$} iff $\fr(\nu_2(x)) \leq \fr(\nu_2(y))$.
\end{itemize}
Note that $\region$ is an equivalence with finite index. The equivalence
classes of $\region$ are called \emph{regions}.
Observe that states in the same region have the
same behavior with respect to qualitative reachability. This is formalized
in the following lemma.
\begin{lemma}
  \label{lem:action-in-region-leads-to-same-non-zero-regions}
  Let $R$ and $T$ be regions and $z,z'\in R$. Then 
  $\kernel_\product(z,T) > 0$ iff $\kernel_\product(z',T) > 0$.
\end{lemma}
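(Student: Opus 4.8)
The plan is to exploit the fact that the region equivalence is precisely designed so that states in the same region satisfy the same guards of $\calA$, and that the positivity of the kernel $\kernel_\product(z,T)$ depends only on (i) which discrete edge of $\calA$ is taken when reading the current state $s$, and (ii) whether there is a set of positive Lebesgue measure of delays $t$ that moves the (reset) clock valuation into the target region $T$. Fix $z = (s,q,\nu)$ and $z' = (s,q,\nu')$ in the same region $R$; note that the first two coordinates agree by definition of $\region$. First I would observe that, since $\calA$ is deterministic and its guards are unions of regions (a clock constraint $x \bowtie c$ with $c \le \maxb$ cannot separate two valuations agreeing on integral parts of relevant clocks), reading $s$ from $(q,\nu)$ and from $(q,\nu')$ triggers the \emph{same} edge $(q,s,g,X,\bar q)$; consequently the post-reset valuations $\bar\nu = \nu[X{:=}\vec 0]$ and $\bar\nu' = \nu'[X{:=}\vec 0]$ again lie in the same region, call it $\bar R$ (resetting a clock to $0$ is a region-preserving operation, as $0$ is an integer and the new clock has the smallest fractional part). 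So without loss of generality we may assume $X = \emptyset$ and work directly with $\nu \region \nu'$.

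Next I would write, for a generator $T = \{s'\} \times \{q'\} \times \vec I$ of a region (a region is a finite union of such generators, so it suffices to treat generators, and positivity is preserved under finite unions),
\[
  \kernel_\product(z,T) \;=\; \prob(s)(s') \cdot \int_0^\infty f(t)\cdot 1_{\vec I}(\bar\nu + t)\, \de{t},
\]
where $f = \D(s,s')$. The factor $\prob(s)(s')$ is identical for $z$ and $z'$, so the claim reduces to: the set $\{t \ge 0 : \bar\nu + t \in \vec I\}$ has positive Lebesgue measure \emph{and} intersects the support-interval $I_f$ of $f$ (on which $f$ is bounded away from zero on every closed subinterval, by the standing assumption on delay densities) if and only if the analogous statement holds with $\bar\nu'$ in place of $\bar\nu$. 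Here I would use the key structural fact about regions: adding a common delay $t$ to all clocks is a region-preserving operation in the following sense — for fixed $\bar\nu \region \bar\nu'$, as $t$ ranges over an interval, $\bar\nu + t$ and $\bar\nu' + t$ pass through the \emph{same sequence of regions} (the "time-successor" sequence), and the set of $t$ for which $\bar\nu + t$ lies in a given region is itself an interval (open, half-open, or a point) whose non-degeneracy depends only on the region of $\bar\nu$, not on $\bar\nu$ itself. Combined with the fact that $T$ (being a region) is a union of such generators with $\vec I$ a product of integer/fractional-type intervals, one gets that $\{t : \bar\nu + t \in T\}$ is a finite union of intervals, and each is non-degenerate iff the corresponding interval for $\bar\nu'$ is.

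The remaining point is the interaction with the support of $f$: since $I_f$ has integer (or infinite) endpoints and $f \ge b > 0$ on every closed subinterval of $I_f$, the integral $\int_0^\infty f(t) 1_T(\bar\nu+t)\de t$ is positive iff the set $\{t \ge 0 : \bar\nu + t \in T\} \cap I_f^{\circ}$ has positive measure, and because both $\{t : \bar\nu+t\in T\}$ and $I_f$ are finite unions of intervals with the non-degeneracy of their intersection determined region-theoretically (integer endpoints of $I_f$ behave like "integral-part" boundaries, hence are handled by the same agreement-on-integral-parts clause), this is a region-invariant condition. I would then conclude by finite additivity over the generators composing $T$. The main obstacle I anticipate is the bookkeeping in this last step: making fully precise the claim that the measure-positivity of $\{t : \bar\nu + t \in T\}$ and of its intersection with $I_f$ depends only on the region of $\bar\nu$ — this is essentially the classical region-automaton argument of \cite{AD:Timed-Automata,ACD:verifying-automata-real-time} that time-successor regions are well-defined, but one must check carefully that the integer endpoints of the density support do not introduce a distinction invisible to $\region$; this is exactly why the standing assumption forces $\ell, u \in \Nseto$, and I would point this out explicitly.
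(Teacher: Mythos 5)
The first thing to note is that the paper does not actually prove this lemma: it states it and refers the reader to \cite{BKKKR:GSMP-games-TA}, so there is no in-paper argument to compare against. What you have written is the standard region-construction argument, and in substance it is the right one. Your skeleton is: region-equivalent states read the same letter from the same location and satisfy the same guards, hence trigger the same edge with the same reset set, and the post-reset valuations are again region-equivalent; the measure $\kernel_\product(z,\cdot)$ is carried by the line $t \mapsto (s',\bar{q},\bar{\nu}+t)$, so positivity on a region $T$ reduces to positivity of $\int_0^\infty f(t)\,1_{V_T}(\bar{\nu}+t)\,\de{t}$ where $V_T$ is the valuation set of $T$; the time-successor sequence of regions visited by $\bar{\nu}+t$, and whether each is visited for a non-degenerate interval of $t$, depend only on the region of $\bar{\nu}$; and the overlap with the support of $f$ is region-invariant because that support has integer (or infinite) endpoints, so the comparison of the interval endpoints against $\ell$ and $u$ is decided by integral parts alone. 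All of this is correct, and your explicit remark that the integrality of $\ell,u$ is exactly what makes the last step work is on target.

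One step as written is wrong, though not fatally so. A region is \emph{not} a finite union of sets $\{s'\}\times\{q'\}\times\vec{I}$ with $\vec{I}$ a product of intervals: with two or more clocks the fractional-ordering constraint $\fr(\nu(x)) \leq \fr(\nu(y))$ carves out a simplex, not a finite union of boxes. More importantly, even granting a (countable) box decomposition $T=\bigcup_n A_n$, ``it suffices to treat generators'' is not a valid reduction: the equivalence $\kernel_\product(z,A)>0 \Leftrightarrow \kernel_\product(z',A)>0$ is simply false for an individual box $A$ strictly inside a region (the line through $\bar{\nu}$ may hit $A$ while the line through $\bar{\nu}'$ misses it), and positivity on the union only tells you that \emph{some} piece is hit --- possibly different pieces for $z$ and $z'$. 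The lemma is a statement about regions as wholes and must be argued as such. Fortunately your subsequent direct argument does exactly this: the set $\{t : \bar{\nu}+t \in V_T\}$ is a single interval whose non-degeneracy, and whose non-trivial intersection with the interior of the support of $f$, are determined by the region of $\bar{\nu}$. So the fix is only to delete the generator reduction, retain the generator formula solely to identify $\kernel_\product(z,\cdot)$ (via the extension theorem) as the pushforward of $\prob(s)(s')\,f(t)\,\de{t}$ along $t\mapsto(s',\bar{q},\bar{\nu}+t)$, and run the rest of your argument unchanged on the whole region $T$.
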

A proof of Lemma~\ref{lem:action-in-region-leads-to-same-non-zero-regions}
can be found in \cite{BKKKR:GSMP-games-TA}. Further, 
we define a finite \emph{region graph} 
$\regiongraph_\product = (V,E)$ where the set of
vertices $V$ is the set of regions and for every pair of regions
$R,R'$ there is an edge $(R,R') \in E$ iff \mbox{$\kernel_\product(z,R') > 0$}
for some $z \in R$ (due to 
Lemma~\ref{lem:action-in-region-leads-to-same-non-zero-regions}, the
concrete choice of $z$ is irrelevant). For technical reasons, we assume 
that $V$ contains only regions reachable with positive probability in 
$\product$.

\subsection{Finishing the proof of 
Theorem~\ref{thm:mainthm}}\label{SEC:FINISH-MAIN-THM}

Our proof is divided into three parts. In the first part we consider
a general region graph which is not necessarily strongly connected,
and show that we can actually concentrate just on its BSCCs. In the
second part we study a given BSCC under the aperiodicity assumption.
Finally, in the last part we consider a general BSCC which may 
be periodic. (The second part is included mainly for the sake of 
readability.)

\subsubsection*{Non-strongly connected region graph}

Let $\mathcal{C}_1,\ldots,\mathcal{C}_k$ be the BSCCs of the region graph.
The set $\setofruns_i$ consists of all runs $\sigma$ of~$\smp$ such that
$\pr(\omega)$ visits (a configuration in a region of) $\mathcal{C}_i$, where 
$\pr$ is the mapping of Lemma~\ref{lem:product-correctness}.
By applying the arguments of 
\cite{ACD:verifying-automata-real-time,BKKKR:GSMP-games-TA},
it follows that almost runs in $\product$ visit a configuration 
of a BSCC. By Lemma~\ref{lem:product-correctness}, $\pr$ preserves
$\dme$ and the probability $\probm_{\smp}(\setofruns_i)$ is equal 
to the probability of visiting $\mathcal{C}_i$ in $\product$. 
Further, since the value of $\dme$ does not depend on a finite
prefix of a run, we may safely assume that $\product$ is initialized 
in $\mathcal{C}_i$ in such a way that the initial distribution 
corresponds to the conditional distribution of the first visit to 
$\mathcal{C}_i$ conditioned on visiting $\mathcal{C}_i$.

In a BSCC $\mathcal{C}_i$, there may be some \emph{growing} clocks 
that are never reset. Since the values of growing clocks are just 
constantly increasing, the product process never returns to a state 
it has visited before. Therefore, there is no invariant distribution. 
Observe that all runs initiated in $\mathcal{C}_i$ eventually reach 
a configuration where the values of all growing clocks are larger
than the maximal constant $\maxb$ employed in the guards of~$\calA$.
This means that $\mathcal{C}_i$ actually consists \emph{only} of regions
where all growing clocks are irrelevant
(see Section~\ref{sec:region-graph}), because  $\mathcal{C}_i$ would 
not be strongly connected otherwise. Hence, we can safely \emph{remove} 
every growing clock~$x$ from $\mathcal{C}_i$, replacing all guards 
of the form  $x > c$ or $x \geq c$ with $\mathit{true}$ and all 
guards of the form $x < c$ or $x \leq c$ with $\mathit{false}$.
So, from now on we assume 
that there are no growing clocks in~$\mathcal{C}_i$.

\subsubsection*{Strongly connected \& aperiodic region graph}

In this part we consider a given BSCC $\mathcal{C}_i$ of the region
graph $\regiongraph_\product$. This is equivalent to assuming that
$\regiongraph_\product$ is strongly connected and $\Gamma_\product$ is
equal to the union of all regions of $\regiongraph_\product$ (recall
that $\regiongraph_\product$ consists just of regions reachable with
positive probability in $\product$). We also assume that there are no
growing clocks (see the previous part).  Further, in this subsection
we assume that $\regiongraph_\product$ is aperiodic in the following
sense.
\begin{definition}\label{def:period}
  A \emph{period} $p$ of the region graph $\regiongraph_\product$ is the
  greatest common divisor of lengths of all cycles in
  $\regiongraph_\product$. The region graph $\regiongraph_\product$ is
  \emph{aperiodic} if $p=1$.
\end{definition}

The key to proving Theorem~\ref{thm:mainthm} in the current restricted
setting is to show that the state space of $\product$ is small
(recall Definition~\ref{def:small}) and then apply
Theorem~\ref{thm:gen_inv} (\ref{thm:gen_inv_1}) 
and~(\ref{thm:gen_inv_2}) to obtain the required characterization of the
long-run behavior of~$\product$.
\begin{proposition}\label{prop:small-product}
  Assume that $\regiongraph_{\product}$ is strongly connected and
  aperiodic.  Then there exist a region $R$, a measurable 
  subset $S\subseteq R$, $n\in\Nset$, $b>0$, and a probability 
  measure $\smallmeasure$ such that
  $\smallmeasure(S)=1$ and for all measurable $T\subseteq S$ and
  $z\in\Gamma_\product$ we have that
  $\kernel_\product^{n}(z,T)>b\cdot\smallmeasure(T)$.  In other words, the set
  $\Gamma_\product$ of all states of the GSSMC $\product$ is
  $(n,b,\smallmeasure)$-small.
\end{proposition}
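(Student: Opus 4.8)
The plan is to construct $R$, $S$, $n$, $b$, and $\smallmeasure$ directly, by a Doeblin-type (``one big step'') argument split into two halves. Two facts are used throughout. First, a finite strongly connected aperiodic digraph admits an $N$ such that for every $r\ge N$ and every ordered pair of vertices there is a walk of length exactly $r$ between them; this is the only use of aperiodicity of $\regiongraph_\product$ (Definition~\ref{def:period}), which is why the periodic case is treated separately. Second, since the reduction of Section~\ref{SEC:FINISH-MAIN-THM} has removed all growing clocks, every clock of $\calA$ is reset by some edge of $\regiongraph_\product$; using strong connectivity I fix once and for all a region path $\rho$ of length $m$, along which every clock is reset at least once, ending at a region $R$ chosen as generic as the reset structure allows (no clock value integral; all fractional parts pairwise distinct apart from the equalities the resets force; all clock values bounded). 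I then take $n$ large, so that $n_1:=n-m\ge N$ and the construction of the first half below fits in $n_1$ moves.

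The first half exhibits a fixed $\delta_0>0$ and a measurable set $C\subseteq\Gamma_\product$ of the form ``the states of certain fixed regions, among them the first region $R_0$ of $\rho$, lying at distance $\ge\delta_0$ from the faces of their region'', such that from \emph{every} $z\in\Gamma_\product$ one reaches $R_0\cap C$ in exactly $n_1$ steps with probability bounded below by some $p_1>0$ independent of $z$. The ingredients are: reachability of $R_0$ from every region (strong connectivity); positivity of the pertinent probabilities, using the standing assumption on $\delay$ (each delay density is $\ge b_0>0$ on a fixed compact sub-interval of the interior of its support) together with Lemma~\ref{lem:action-in-region-leads-to-same-non-zero-regions} (the admissible delay window of an edge is a nondegenerate interval whose length is bounded below once the source state lies at distance $\ge\delta_0$ from the region faces); and the walk-length fact, applicable as $n_1\ge N$, to pin the step count. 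What keeps the bound uniform near region boundaries is that passing time adds the \emph{same} increment to all clocks, so the set of ``bad'' delay values — those placing some clock at an integer, which is also where fractional orders can flip — has length bounded independently of $z$; refreshing each clock in turn to a fresh delay then drives the state into generic position with probability bounded below.

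The second half analyses a run that starts in $R_0\cap C$ and follows $\rho$. Because every clock is reset along $\rho$, the final valuation is a fixed affine (in fact linear and triangular) function $\Phi$ of the delays taken along $\rho$, independent of where the run started. Confining those delays to compact sub-intervals on which the densities are $\ge b_0$, the constraints ``the run stays on $\rho$'' still leave a positive-measure set of admissible delay vectors whose $\Phi$-image contains a fixed box $S$, full-dimensional in the valuation set of $R$ (i.e.\ in the affine subspace on which clocks with the same last reset are equal); a change-of-variables estimate bounds the $\Phi$-push-forward of the delay density from below by a constant $b_2>0$ on $S$, uniformly over the admissible starting points in $R_0\cap C$. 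Taking $\smallmeasure$ to be the normalized Lebesgue measure on $S$, this reads $\kernel_\product^{m}(z',T)\ge\pmin^{m}\,b_2\cdot\smallmeasure(T)$ for all $T\subseteq S$ and all $z'\in R_0\cap C$, where $\pmin$ bounds below the positive transition probabilities of $\smp$. Composing the two halves through the Markov property, $\kernel_\product^{n}(z,T)=\int_{\Gamma_\product}\kernel_\product^{n-m}(z,\de{y})\,\kernel_\product^{m}(y,T)\ge p_1\,\pmin^{m}\,b_2\cdot\smallmeasure(T)$, which is the claim with $b=p_1\,\pmin^{m}\,b_2$.

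I expect the main obstacle to be making the two uniform estimates airtight. In the second half one must verify that the delays along the fixed resetting path $\rho$ can be varied freely enough — simultaneously keeping the run on $\rho$ and staying where the densities are bounded below — that their $\Phi$-image really covers a full-dimensional box of $R$ with controlled Jacobian, with a single $S$ and a single constant $b_2$ valid for all admissible starting points; this is where the freedom to route $R$ and $\rho$ through generic regions, and to lengthen $\rho$ so as to leave ``slack'' delays, is essential, and where the bookkeeping of the reset-induced identifications of clocks is heaviest. In the first half the delicate point is states close to a region face recording a fractional-order comparison: such a face is \emph{not} removed by a single generic delay, since fractional-part differences are shifted uniformly and hence preserved, so it must be removed by resetting the offending clocks to fresh delays over a bounded number of steps; carrying out this navigation with a probability bounded below uniformly in the starting state is the crux of the whole proof.
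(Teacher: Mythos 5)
Your plan is correct in substance and rests on the same pillars as the paper's proof: aperiodicity is used only through the exact-length-walk fact (the paper's Lemma~\ref{lem:ergo-paths}), genericity/wideness (the paper's $\delta$-separated states and $\delta$-wide paths, Lemmata~\ref{lem:separation} and~\ref{lem:wide-path}) keeps the probability bounds uniform near region faces, the absence of growing clocks is used to traverse a fixed region path resetting every clock, and the minorizing measure is Lebesgue measure on a box inside one region, of the reduced dimension forced by clocks sharing a last reset. The organization differs in a way worth noting. The paper builds, from \emph{every} start $z$, a single $\gamma$-wide path of the same length ending at one literally fixed state $z^\ast$ (five segments: separate, reach a fixed region in an exact number of steps, squeeze fractional parts to a corner, re-separate, then reset all remaining clocks to synchronize), and then applies one ``fuzzying'' estimate (Lemma~\ref{lem:fuzzying}) to the whole path, so that a fixed neighbourhood of $z^\ast$ serves as $S$ for all starts. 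You instead compose two halves by the Markov property and place the minorization only on the last $m$ steps: start-independence of the target comes not from landing on a fixed $z^\ast$ but from the observation that once every clock is reset along the fixed region path $\rho$, the final valuation is a fixed triangular linear image $\Phi$ of the delay vector, so a push-forward/change-of-variables bound yields a common box $S$ with a uniform density lower bound. This buys you a shorter deterministic skeleton (no corner-squeezing or synchronization segments), at the price of exactly the uniformity check you flag yourself: that the constraints coming from clocks not yet reset (which do depend on the starting valuation) still leave, for all starts in your generic set $R_0\cap C$, a common full-dimensional sub-box of admissible delay vectors mapping onto one and the same $S$ — this is precisely the work the paper's third through fifth sub-paths discharge by brute synchronization, so nothing essential is missing, only that bookkeeping remains to be written out. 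Your diagnosis of the other crux — that time passage preserves fractional-part differences, so separation must be manufactured by resets over several steps, uniformly in the start — matches the paper's grid construction in Lemma~\ref{lem:separation}, and your constant $b=p_1\,\pmin^m\,b_2$ plays the role of the paper's $(\pmin\cdot\densb\cdot\delta/n)^n/\sqrt{|\calX|}$.
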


\begin{proof}[Sketch]
  We show that there exist \mbox{$z^\ast \in\Gamma_\product$}, 
  $n \in \Nset$, and $\gamma > 0$ such that for an arbitrary 
  starting state $z\in\Gamma_\product$ there is a path from $z$
  to $z^\ast$ of length exactly~$n$ that is \emph{$\gamma$-wide}
  in the sense that the waiting time of any transition
  in the path can be changed by $\pm\gamma$ without ending up in a
  different region in the end.  The target set $S$ then corresponds
  to a ``neighbourhood'' of $z^\ast$ within the region of~$z^\ast$.
  Any small enough sub-neighbourhood of $z^\ast$ is visited by 
  a set of runs that follow the $\gamma$-wide path closely enough.  
  The probability of this set of runs then depends linearly on the size of the
  sub-neighbourhood when measured 
  by~$\smallmeasure$, where $\smallmeasure$ is essentially
  the Lebesgue measure restricted to~$S$.

  So, it remains to find suitable $z^\ast$, $n$, and $\gamma$.
  For a given starting state \mbox{$z \in\Gamma_\product$}, we 
  construct a path of fixed
  length~$n$ (independent of $z$) that always 
  ends in the same state $z^\ast$. Further, the path is $\gamma$-wide
  for some  $\gamma > 0$ independent of~$z$. Technically, the path
  is obtained by concatenating five sub-paths each of which has a fixed
  length independent of~$z$. These sub-paths are described in greater
  detail below.

  In the first sub-path, we move to a \emph{$\delta$-separated} state for some
  fixed $\delta > 0$ independent of~$z$. A state is $\delta$-separated 
  if the fractional parts of all relevant clocks are approximately 
  equally distributed
  on the $[0,1]$ line segment (each two of them have distance at least
  $\delta$).  We can easily build the first sub-path so that it is 
  $\delta$-wide.

  For the second sub-path, we first fix some region $R_1$. Since
  $\regiongraph_\product$ is strongly connected and aperiodic,
  there is a fixed~$n'$ such that $R_1$ is reachable from 
  an arbitrary state of $\Gamma_\product$ in exactly $n'$~transitions.
  The second sub-path is chosen as a $(\delta/n')$-wide path of 
  length $n'$ that leads to a $(\delta/n')$-separated state
  of $R_1$ (we show that such a sub-path is guaranteed to exist;
  intuitively, the reason why the separation and wideness may 
  decrease proportionally to~$n'$ is that the fractional parts of 
  relevant clock may be forced to move closer and closer to each other 
  by the resets performed along the sub-path).

  In the third sub-path, we squeeze the fractional parts of all relevant
  clocks close to~$0$.  We go through a fixed region path $R_1 \cdots
  R_k$ (independent of $z$) so that in each step we shift the 
  time by an integral value minus a small constant~$c$ (note that
  the fractional parts of clocks reset during this path
  have fixed relative distances).
  Thus, we reach a state $z'_k$ that is ``almost fixed'' in the sense
  that the values of all relevant clocks in $z'_k$ are the same
  for every starting state~$z$. Note that the third sub-path
  is $c$-wide. At this point, we should note that if we defined 
  the product process somewhat differently by identifying all states
  differing only in the values of irrelevant clocks (which does not
  lead to any technical complications), we would be done, i.e., 
  we could put $z^\ast = z'_k$. We have neglected this possibility 
  mainly for presentation reasons. So, we need two more sub-paths
  to fix the values of irrelevant clocks.

  In the fourth sub-path, we act similarly as in the first sub-path
  and prepare ourselves for the final sub-path.
  We reach a $\delta$-separated state that is
  almost equal to a fixed state $z_{\ell} \in R_{\ell}$.  
  Again, we do it by a $\delta$-wide path of a fixed length.

  In the fifth sub-path, we follow a fixed region path 
  $R_{\ell} \cdots R_{\ell+m}$ such that each clock not relevant 
  in $R_\ell$ is reset along this path, and hence we reach a fixed 
  state $z^\ast \in R_{\ell+m}$. Here we use our assumption that 
  every clock can be reset to zero (i.e., there are   
  no growing clocks).
\end{proof}

Now we may finish the proof of Theorem~\ref{thm:mainthm}.
By Theorem~\ref{thm:gen_inv}~(\ref{thm:gen_inv_1}), there 
is a unique invariant distribution $\pi$ on $\Gamma_\product$.
For every $q\in Q$, we denote by $A_q$ the set of all
states of $\product$ of the form $(s,q,\nu) \in \Gamma_\product$.  By
Theorem~\ref{thm:gen_inv}~(\ref{thm:gen_inv_2}), for almost all runs
$\sigma$ of $\product$ we have that $\dme(\sigma)$ is
well-defined and $\dme_{q}(\sigma)= \sum \pi(A_{q})$. By
Lemma~\ref{lem:product-correctness}, we obtain the same for almost
all runs of~$\smp$.

\subsubsection*{Strongly connected \& periodic region graph}

Now we consider a general BSCC $\mathcal{C}_i$ of the region
graph $\regiongraph_\product$. Technically, we adopt the same setup
as the previous part but remove the aperiodicity condition.
That is, we assume that $\regiongraph_\product$ is strongly 
connected, $\Gamma_\product$ is equal to the union of all regions 
of $\regiongraph_\product$, and there are no
growing clocks. 

Let $p$ be the period of $\regiongraph_\product$.  
In this case, $\product$ is not necessarily
small in the sense of Definition~\ref{def:small}. By
employing standard methods for periodic Markov chains,
we decompose $\product$ into $p$
stochastic processes $\Phi_0,\ldots,\Phi_{p-1}$ where each $\Phi_k$
makes steps corresponding to $p$ steps of the original process
$\product$ (except for the first step which corresponds just to~$k$ 
steps of $\product$).  Each $\Phi_k$ is aperiodic and hence small
(this follows by slightly generalizing the arguments of the 
previous part; see Proposition~\ref{prop:main-lem-period}). 
Thus, we can apply Theorem~\ref{thm:gen_inv} to each $\Phi_k$ 
separately and express the frequency of visits to $q$ in $\Phi_k$ in terms
of a unique invariant distribution $\pi_k$ for $\Phi_k$. Finally, we
obtain the frequency of visits to $q$ in
$\product$ as an average of the corresponding frequencies in $\Phi_k$.

Let us start by decomposing the set of nodes~$V$ of~$\regiongraph_\product$ 
into $p$ classes that constitute a cyclic structure (see e.g.~\cite[Theorem~4.1]{Bremaud:book}).
\begin{lemma}\label{lem:periodic-graph}
  There are disjoint sets $V_0,\ldots,V_{p-1}\subseteq V$ such that
  $V=\bigcup_{k=0}^{p-1} V_k$ and for all $u,v\in V$ we have that
  $(u,v)\in E$ iff there is $k\in \{0,\ldots,p-1\}$ satisfying $u\in
  V_k$ and $v\in V_{j}$ where $j=(k+1)\mod p$.
\end{lemma}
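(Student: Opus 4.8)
The plan is to carry out the standard construction of the cyclic classes of a strongly connected periodic digraph, specialized to $\regiongraph_\product=(V,E)$ (this is essentially \cite[Theorem~4.1]{Bremaud:book}). Fix a vertex $r\in V$; since $\regiongraph_\product$ is strongly connected, every $v\in V$ is reachable from $r$ by a directed walk and $r$ is reachable from every $v$. For $v\in V$ let $L(v)\subseteq\Nseto$ be the set of lengths of directed walks from $r$ to $v$, and put $V_k=\{\,v\in V\mid L(v)\cap(k+p\Nseto)\neq\emptyset\,\}$ for $k\in\{0,\ldots,p-1\}$. Granting the key fact that all walks from $r$ to a fixed $v$ have the same length modulo~$p$, the sets $V_0,\ldots,V_{p-1}$ are automatically pairwise disjoint, their union is $V$, and each $v$ lies in exactly one of them.

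I expect that key fact to be the main obstacle. I would prove it as follows: let $a,b$ be the lengths of two walks from $r$ to $v$, and fix (by strong connectivity) a walk from $v$ back to $r$ of some length $c$; then $a+c$ and $b+c$ are lengths of \emph{closed} walks at $r$. Every closed walk decomposes into simple cycles of $\regiongraph_\product$ (repeatedly excise a cycle at the first repeated vertex), so its length is a sum of cycle lengths. Since $p$ is the greatest common divisor of all cycle lengths, $p$ divides each cycle length, hence $p\mid(a+c)$ and $p\mid(b+c)$, so $a\equiv b\pmod p$. The same observation also gives that each $V_k$ is non-empty (not required by the statement but convenient later): $r\in V_0$, every vertex of a non-trivial strongly connected digraph has an outgoing edge, and by the forward implication below such an edge maps $V_k$ into $V_{(k+1)\bmod p}$, so all $p$ classes are non-empty by induction.

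For the edge characterization, ``$\Rightarrow$'' is immediate: if $(u,v)\in E$ and $u\in V_k$, take a walk from $r$ to $u$ of length $a\equiv k\pmod p$, append the edge $u\to v$, and obtain a walk from $r$ to $v$ of length $a+1$, so $v\in V_j$ with $j=(k+1)\bmod p$. For ``$\Leftarrow$'' I would exploit strong connectivity once more: every vertex of $V_k$ starts an infinite walk whose successive vertices cycle through $V_k,V_{(k+1)\bmod p},V_{(k+2)\bmod p},\ldots$, so each class has both incoming and outgoing edges, and together with ``$\Rightarrow$'' this shows the classes form a directed cycle $V_0\to V_1\to\cdots\to V_{p-1}\to V_0$ along which \emph{every} edge of $\regiongraph_\product$ runs. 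This cyclic shape of the decomposition is precisely what is used afterwards when $\product$ is split into the sub-processes $\Phi_0,\ldots,\Phi_{p-1}$. (Strictly speaking the literal universal ``$\Leftarrow$'' would also claim that every vertex of $V_k$ has an edge to every vertex of $V_{(k+1)\bmod p}$; this is more than holds for general digraphs and more than is used, so the statement is to be understood with the cyclic structure just described.)
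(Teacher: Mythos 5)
Your proof is correct, and the main thing to know is that the paper does not actually prove this lemma: it is stated as the standard cyclic-class decomposition of a strongly connected periodic graph and discharged by the pointer to \cite[Theorem~4.1]{Bremaud:book}. What you wrote is exactly the elementary argument behind that citation --- fix a reference vertex $r$, use the decomposition of closed walks into cycles (each of whose lengths is divisible by $p$) to show all $r$-to-$v$ walk lengths agree modulo $p$, let $V_k$ collect the vertices with residue $k$, and read off that the $V_k$ partition $V$ and that every edge advances the residue by one. So in substance you follow the same route the paper relies on; the only difference is that you make it self-contained instead of outsourcing it to the textbook, and your infinite-walk observation additionally gives non-emptiness of all $p$ classes, which the later construction of the processes $\Phi_0,\ldots,\Phi_{p-1}$ implicitly needs.

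Your closing caveat is also accurate and is not a gap in your argument but a (minor) overstatement in the lemma itself: read literally, the ``$\Leftarrow$'' half asserts that \emph{every} $u\in V_k$ has an edge to \emph{every} $v\in V_{(k+1)\bmod p}$, which already fails for $p=1$ on any strongly connected aperiodic graph that is not complete, and fails for $p\geq 2$ as well. Neither \cite[Theorem~4.1]{Bremaud:book} nor the paper's subsequent use of the lemma claims or needs this: splitting $\product$ into $\Phi_0,\ldots,\Phi_{p-1}$ with state spaces $\Gamma^k_\product=\bigcup_{R\in V_k}R$ and kernel $\kernel^p$ only requires that the $V_k$ are a partition and that all edges run from $V_k$ to $V_{(k+1)\bmod p}$, i.e.\ exactly what you proved. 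Flagging the discrepancy and proving the true, used statement is the right call.
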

For each $k\in \{0,\ldots,p-1\}$ we construct a GSSMC $\Phi_k$ with
state space $\Gamma^k_\product = \bigcup_{R \in V_k} R$, a transition
kernel $\kernel^p(\cdot,\cdot)$ restricted to $\Gamma^k_\product$, and
an initial probability measure $\initmeasure_k$ defined by
$\initmeasure_k(A)=\int_{z\in \Gamma_\product} \initmeasure(dz)\cdot
\kernel^k(z,A)$.  For each $k$, we define the discrete frequency
$\dme_{q}^k$ of visits $q$ in the process
$\Phi_k$. Then we show that if $\dme^k$ is well-defined
in $\Phi_k$, we can express the frequency $\dme_{q}$ in~$\product$.

Note that for every run $z_0 \, z_1 \cdots$ of $\product$, the 
word $z_k\, z_{p+k}\, z_{2p+k}$ is a run of $\Phi_k$.
For a run $\sigma = (s_0,q_0,\nu_0)\, (s_1,q_1,\nu_1) \cdots$, 
$k \in \{0,\ldots,p-1\}$, and a location $q\in Q$, let
define $1^{i,k}_{q}(\sigma)$ to be either $1$ or $0$ depending on
whether $q_{ip+k}=q$ or not, respectively. Further, we put
\[
 \dme^k_{q}(\sigma) = \limsup_{n\to\infty} \frac{\sum_{i=1}^n 1^{i,k}_{q}(\sigma)}{n}
\]
Assuming that each $\dme^k$ is well-defined, for almost 
all runs $\sigma$ of $\product$ we have the following:
\begin{eqnarray*}
 \dme_{q}(\sigma) & = & 
\lim_{n\to\infty} \frac{\sum_{i=1}^n 1^{i}_{q}(\sigma)}{n} 
  \ = \  
\lim_{n\to\infty} \frac{\sum_{i=1}^n \sum_{k=0}^{p-1} 1^{i,k}_{q}(\sigma)}{np} \\
 & = & 
\frac{1}{p} \sum_{k=0}^{p-1} \lim_{n\to\infty} 
      \frac{\sum_{i=1}^n 1^{i,k}_{q}(\sigma)}{n} 
  \ = \  \frac{1}{p} \sum_{k=0}^{p-1} \dme_{q}^k(\sigma) 
\end{eqnarray*}
So, it suffices to concentrate on $\dme_{q}^k$. The following 
proposition is a generalization of Proposition~\ref{prop:small-product} 
to periodic processes.
\begin{proposition}\label{prop:main-lem-period}
  Assume that $\regiongraph_{\product}$ is strongly connected and has a
  period~$p$.  For every $k\in \{0,\ldots,p-1\}$ there exist a 
  region $R_k \in V_k$, a measurable $S_k \subset R_k$, $n_k\in\Nset$, 
  $b_k>0$, and a probability measure
  $\smallmeasure_k$ such that $\smallmeasure_k(S_k)=1$ and for every
  measurable $T\subseteq S_k$ and $z \in\Gamma^k_\product$ we have
  $\kernel_\product^{n_k\cdot p}(z,T)>b_k\cdot\smallmeasure_k(T)$.  In
  other words, $\Phi_k$ is \mbox{$(n_k,b_k,\smallmeasure_k)$-small}.
\end{proposition}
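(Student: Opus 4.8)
The plan is to mimic the five-sub-path construction of Proposition~\ref{prop:small-product}, but carried out inside the period-$p$ cyclic structure of $\regiongraph_\product$ provided by Lemma~\ref{lem:periodic-graph}. Fix $k \in \{0,\ldots,p-1\}$. The process $\Phi_k$ lives on $\Gamma^k_\product = \bigcup_{R \in V_k} R$ and takes one step for every $p$ steps of $\product$, so a ``$\Phi_k$-path of length $m$'' is exactly a $\product$-path of length $mp$ that stays synchronized with the residue class $k$ of the cyclic decomposition. The key point is that the contraction graph obtained from $\regiongraph_\product$ by taking $p$-th powers, restricted to $V_k$, is strongly connected \emph{and aperiodic} (this is the standard fact that the $p$-step graph of a strongly connected graph of period $p$ splits into $p$ aperiodic strongly connected components, one per class $V_j$). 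Hence, from the point of view of $\Phi_k$, we are exactly in the situation of Proposition~\ref{prop:small-product}, and the argument there should go through almost verbatim after the obvious bookkeeping changes.

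Concretely, first I would note that every lemma about $\product$ used in the proof of Proposition~\ref{prop:small-product} has an immediate analogue for $\kernel_\product^p$ on $\Gamma^k_\product$: qualitative reachability is region-determined (Lemma~\ref{lem:action-in-region-leads-to-same-non-zero-regions} applied iteratively), and $p$-step reachability within a fixed number of $\Phi_k$-steps to any prescribed region of $V_k$ is guaranteed by strong connectivity and aperiodicity of the $p$-th power graph on $V_k$. Then I would replay the five sub-paths: (1) move to a $\delta$-separated state, now along a $\Phi_k$-path, i.e.\ a $\product$-path of length a multiple of $p$ landing back in $V_k$; (2) route to a fixed region $R_1 \in V_k$ in a fixed number $n'$ of $\Phi_k$-steps, with wideness and separation degrading by a factor $n'$; (3) squeeze fractional parts of relevant clocks toward $0$ along a fixed region cycle inside $V_k$; (4) re-separate in preparation for the last sub-path; (5) run through a fixed region path inside $V_k$ resetting every clock irrelevant at the landing region, reaching a fixed state $z^\ast_k \in R_k$ for some $R_k \in V_k$. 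Each sub-path has length a fixed multiple of $p$ and is $\gamma$-wide for some $\gamma>0$ independent of the starting state $z \in \Gamma^k_\product$; concatenating gives a $\gamma$-wide $\product$-path of total length $n_k \cdot p$ from any $z$ to $z^\ast_k$. As in Proposition~\ref{prop:small-product}, the small set $S_k$ is a sufficiently small neighbourhood of $z^\ast_k$ inside $R_k$, the measure $\smallmeasure_k$ is (a normalization of) Lebesgue measure restricted to $S_k$, and $b_k>0$ comes from the fact that the bundle of runs shadowing the $\gamma$-wide path closely enough hits any measurable $T \subseteq S_k$ with probability depending linearly (via the bounded-below densities, assumption on $\delay$) on $\smallmeasure_k(T)$. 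This yields $\kernel_\product^{n_k \cdot p}(z,T) > b_k \cdot \smallmeasure_k(T)$ for all $z \in \Gamma^k_\product$ and measurable $T \subseteq S_k$, i.e.\ $\Phi_k$ is $(n_k,b_k,\smallmeasure_k)$-small.

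The one genuinely new obstacle — and the step I expect to be the main one — is verifying that the constructed path can always be forced to have length \emph{exactly} $n_k \cdot p$ (a fixed multiple of $p$, independent of $z$) while remaining synchronized with the class $V_k$ at every ``$\Phi_k$-tick''. In the aperiodic proof one freely pads a path to a fixed length using self-loops or short cycles at the target region; here the padding cycles must have length divisible by $p$ to preserve the class, so I need that the target region $R_k$ lies on a $\product$-cycle whose length is a multiple of $p$ — which holds precisely because $p$ is the gcd of all cycle lengths and $R_k$, being in a BSCC, lies on some cycle, whose length is then a multiple of $p$ by definition of the period. A secondary technical point is that the intermediate fixed regions chosen in sub-paths (2)–(5) must all be picked inside the appropriate classes $V_j$ so that the running step-count stays a multiple of $p$ at each $\Phi_k$-tick; this is just careful indexing, but it has to be stated. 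Everything else is a transcription of the proof of Proposition~\ref{prop:small-product} with $\product$ replaced by $\Phi_k$ and $\kernel_\product$ by $\kernel_\product^p$, so I would present it by pointing to that proof and spelling out only these period-related adjustments. The full details are deferred to the appendix.
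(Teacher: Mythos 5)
Your proposal is correct and follows essentially the same route as the paper: the appendix proof of Proposition~\ref{prop:main-lem-period} likewise replays the five wide sub-paths, routing through the aperiodic $p$-th-power graph on $V_k$ (Lemma~\ref{lem:path}) and taking for $S_k$ a Lebesgue-measured neighbourhood of a fixed target state, with $b_k$ coming from the wideness lemma (Lemma~\ref{lem:fuzzying}). The only difference is cosmetic: the per-tick synchronization you flag as the main new obstacle is automatic by the cyclic structure of Lemma~\ref{lem:periodic-graph} (any prefix of length divisible by $p$ starting in $\Gamma^k_\product$ lands back in $V_k$, and intermediate classes do not matter), so the paper simply stretches the final sub-path by a constant $c<p$, as permitted by Lemma~\ref{lem:synchronization}, to make the total length a multiple of $p$.
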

By Theorem~\ref{thm:gen_inv}~(\ref{thm:gen_inv_1}), for every $k\in
\{0,\ldots,p-1\}$, there is a unique invariant distribution $\pi_k$ on
$\Gamma_\product$ for the process $\Phi_k$.  By
Theorem~\ref{thm:gen_inv}~(\ref{thm:gen_inv_2}), each $\dme^k$ is
well-defined and for almost all runs $\sigma$ we have that
$\dme^k_{q}(\sigma)=\pi_k(A_{q})$. Thus, we obtain
\begin{equation*}
 \dme_{q}(\sigma)  =  \frac{1}{p} \sum_{k=0}^{p-1} \pi_k(A_{q})
\end{equation*}

\section{Approximating DTA Measures}\label{SEC:ALGORITHMS}
In this section we show how to approximate the DTA measures for SMPs
using the $m$-step transition kernel $\kernel^m_\product$ of $\product$.
The procedure for computing $\kernel^m_\product$ up to a sufficient precision
is taken as a ``black box'' part of the algorithm, we concentrate
just on developing generic bounds on~$m$ that are sufficient to achieve
the required precision.  

For simplicity, we assume that the initial distribution 
$\distribution_0$ of $\smp$ assigns $1$ to some $s_0\in S$ (all 
of the results presented in this section can easily be generalized 
to an arbitrary initial distribution). The initial state in $\product$
is $z_0 = (s_0,q_0,\vec{0})$.

As we already noted in the previous section, the constant~$k$
of~Theorem~\ref{thm:mainthm} is the number of BSCCs of 
$\regiongraph_\product$. For the rest of this section, we fix
some $1 \leq j \leq k$, and write just $\mathcal{C}$, $\setofruns$ and $D$
instead of $\mathcal{C}_j$, $\setofruns_j$ and $D_j$, respectively.
We slightly abuse our notation by using $\mathcal{C}$ to denote 
also the set of configurations that belong to some region of $\mathcal{C}$
(particularly in expressions such as $\kernel_\product(z,\mathcal{C})$).

The probability $\probm_\smp(\setofruns)$ is equal to the probability of
visiting~$\mathcal{C}$ in $\product$. Observe that 
\[
\probm_\smp(\setofruns) = 
\lim_{i \rightarrow \infty} \kernel^i_\product(z_0,\mathcal{C})
\]
Let us analyze the speed of this approximation. First, we
need to introduce several parameters.  Let $\pmin$ be
the smallest transition probability in~$\smp$, and $\delay(\smp)$ the
set of delay densities used in $\smp$,
i.e., $\delay(\smp)=\{\D(s,s')\mid s,s'\in S\}$. Let
$|V|$ be the number of vertices (regions) of $\regiongraph_\product$.
Due to our assumptions imposed on delay densities,
there is a fixed bound $\densb>0$ such that, for all
$f\in\delay(\smp)$ and $x\in[0,\maxb]$, either $f(x)>\densb$
or $f(x)=0$. Further, $\int_{\maxb}^\infty f(x)dx$ is either
larger than $\densb$ or equal to~$0$.

\newcommand{\tmreachability}{
For every $i\in \Nset$ we have that
\[
 \probm_\smp(\setofruns)-\kernel^i_\product(z_0,\mathcal{C})
\quad \leq \quad
\left( 1- \left(\frac{p_{min} \cdot \densb}%
{c}\right)^c
\right)^{\lfloor i/c\rfloor}%
\]
where $c = 4\cdot |V|$.
}
\begin{theorem}
\label{thm:reachability}
\tmreachability
\end{theorem}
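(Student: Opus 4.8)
The plan is to reduce the claimed estimate to a geometric decay bound for the probability of not having entered \emph{any} bottom strongly connected component (BSCC) of $\regiongraph_\product$, and then to prove that bound by showing that from every reachable state a BSCC is reached within $c=4|V|$ steps with probability at least $\left(\pmin\cdot\densb/c\right)^{c}$.

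Write $\mathcal{C}_1,\dots,\mathcal{C}_k$ for the BSCCs of $\regiongraph_\product$ (so $\mathcal{C}=\mathcal{C}_j$ for the fixed $j$), put $B=\bigcup_{l=1}^{k}\mathcal{C}_l$, and let $r_i=\probm_\product(\Phi_i\notin B)$. First I would show $\probm_\smp(\setofruns)-\kernel^i_\product(z_0,\mathcal{C})\le r_i$. Indeed, each BSCC is absorbing (no region edge leaves it), so $\kernel^i_\product(z_0,\mathcal{C})$ equals the probability of visiting $\mathcal{C}$ within the first $i$ steps, whereas $\probm_\smp(\setofruns)=\probm_\product(\text{visit }\mathcal{C})$ by Lemma~\ref{lem:product-correctness}; hence the left-hand side is the probability of the event ``$\mathcal{C}$ is visited, but not within $i$ steps'', and on that event $\Phi_i$ lies in no BSCC: not in $\mathcal{C}$ (else $\mathcal{C}$ would have been visited in time), and not in another $\mathcal{C}_l$ (the $\mathcal{C}_l$ are vertex-disjoint and absorbing, so $\mathcal{C}$ could never be reached afterwards). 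Next, $r_i$ is non-increasing (a run cannot leave $B$), and conditioning on $\Phi_i$ together with the absorption of $B$ and the key claim $\kernel^c_\product(z,B)\ge q:=\left(\pmin\cdot\densb/c\right)^{c}$ (for every $z$ in a region of $\regiongraph_\product$) gives $r_{i+c}\le(1-q)\,r_i$; iterating yields $r_i\le r_{c\lfloor i/c\rfloor}\le(1-q)^{\lfloor i/c\rfloor}$, which is exactly the asserted bound.

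It then remains to prove the key claim. Here I would build, from an arbitrary state $z$ in a region of $\regiongraph_\product$, a path $z=y_0,y_1,\dots,y_\ell$ in $\product$ of length $\ell\le c$ whose last state lies in a region of some BSCC and in which every transition is taken with probability at least $\pmin\cdot\densb/c$; multiplying the $\ell$ step probabilities and noting that the remaining $c-\ell$ steps stay in $B$ with probability one gives $\kernel^c_\product(z,B)\ge\left(\pmin\cdot\densb/c\right)^{\ell}\ge q$. Each step of $\product$ is a forced discrete move of $\calA$, contributing a factor $\ge\pmin$ from $\prob$, followed by a time delay, which can be made to land in a prescribed time-successor region through a time window of length $\ge 1/c$, contributing a factor $\ge\densb/c$ by the density lower bound. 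The elementary observation that enables this is that, for any valuation, the time-successor regions reached within one unit of time partition $(0,1)$ into at most $n+1\le|V|$ windows, so one of them has length $\ge 1/(n+1)\ge 1/c$; the subtlety is that this ``fat'' successor is dictated by the current valuation, while progress towards a BSCC requires taking \emph{prescribed} region edges. This is handled by allowing the path to be up to four times longer than the shortest region-graph path to $B$ (which, being a simple path, has length $<|V|$): the surplus budget is spent on ``maneuvering'' steps that re-spread the fractional parts of the relevant clocks into a well-separated configuration before each prescribed edge, so that that edge too is realised through a window of length $\ge 1/c$. This is precisely the $\gamma$-wide path construction used to prove Proposition~\ref{prop:small-product}, now carried out in the (possibly non-strongly-connected) region graph and aimed at the set $B$ of BSCC regions rather than at a single fixed state; the constant $c=4|V|$ is chosen to accommodate the maneuvering together with the main path.

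The hard part is this last point: showing that a BSCC region can be reached from \emph{any} state in at most $4|V|$ steps while keeping \emph{every} time window of length at least $1/c$. The difficulty is that resets of clocks repeatedly collapse the fractional parts of several clocks together, so the naive ``immediate successor'' windows can be arbitrarily small and must be avoided by first re-separating; carrying out this region surgery within the stated length and width bounds is the technical core, and it reuses the machinery developed for Proposition~\ref{prop:small-product}.
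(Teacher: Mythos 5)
Your proposal is correct and follows essentially the same route as the paper: reduce the error to the probability of not yet having entered the union $B$ of BSCC regions, establish the uniform bound $\kernel^{c}_\product(z,B)\ge(\pmin\cdot\densb/c)^{c}$ with $c=4|V|$ via a wide-path construction reusing the machinery of Proposition~\ref{prop:small-product} (the paper does this through its separation and reachability lemmas), and iterate to get geometric decay. Your spelling-out of why the left-hand side is dominated by $\probm_\product(\Phi_i\notin B)$ is slightly more explicit than the paper's, but the argument is the same.
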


\begin{proof}[Sketch]
We denote by $B$ the union of all regions that belong to BSCCs of $\regiongraph_\product$.  
We show that for $c = 4 \cdot |V|$ there is a lower bound 
$p_{\mathit{bound}} = (\pmin \cdot \densb \cdot 1/c)^{c}$ on the probability
of reaching $B$ in at most $c$ steps from any state $z \in
\Gamma_\product$. Note that then the probability of not hitting $B$
after $i= m\cdot c$ steps is at most $(1-p_{bound})^m$. However, this
means that $\kernel^i_\product(z,\calC)$ cannot differ from the
probability of reaching $\calC$ (and thus also from
$\probm_\smp(\setofruns)$) by more than $(1-p_{bound})^m$ because
$\calC\subseteq B$ and the probability of reaching $\calC$ from
$B\smallsetminus \calC$ is $0$.

The bound $p_{\mathit{bound}}$ is provided by arguments similar to the proof of 
Proposition~\ref{prop:small-product}. From any state $z$ we build a $\delta$-wide 
path to a state in $B$ that has length bounded by $4 \cdot |V|$ such that $\delta = \pmin \cdot \densb \cdot 1/c$. 
The paths that follow this $\delta$-wide path closely enough (hence, reach $B$) 
have probability $p_{\mathit{bound}}$.
\end{proof}

Now let us concentrate on approximating the tuple $D$. This can be
done by considering just the BSCC~$\mathcal{C}$. Similarly as in
Section~\ref{sec:th-res}, from now on we assume that $\calC$ is the
set of nodes of $\regiongraph_\product$ (i.e., $\regiongraph_\product$
is strongly-connected) and that $\Gamma_\product$ is equal to the
union of all regions of $\calC$.

As in Section~\ref{sec:th-res}, we start with the aperiodic case. 
Then, Theorem~\ref{thm:gen_inv}~(\ref{thm:gen_inv_3}.) implies
that each $D_q$ can be approximated using
$\kernel^i_\product(u,A_q)$ where $u$ is an arbitrary state of
$\Gamma_\product$ and $A_q$ is the set of all states of $\product$
of the form $(s,q,\nu)$. More precisely, we obtain the following:
\begin{theorem}
Assume that $\regiongraph_\product$ is strongly connected and aperiodic.
Then for all $i\in \Nset$, $u \in \Gamma_\product$, and $q\in Q$
\begin{align*}
\left|D_q- \kernel^i_\product(u,A_q)\right| \  &\leq \
\left(1-\left(\frac{p_{\min} \cdot \densb}{%
r}\right)^{r}\right)^{\lfloor i/r \rfloor}
\end{align*}
where $r=\lfloor |V|^{4 \ln |V|} \rfloor$.
\end{theorem}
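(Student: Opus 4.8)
The plan is to derive the bound from the uniform-ergodicity estimate of Theorem~\ref{thm:gen_inv}~(\ref{thm:gen_inv_3}) applied to a \emph{quantitative} refinement of Proposition~\ref{prop:small-product}. Recall that in the strongly connected aperiodic case (the situation analysed at the end of Section~\ref{SEC:FINISH-MAIN-THM}) the unique invariant distribution $\pi$ of $\product$ satisfies $D_q = \pi(A_q)$; hence
\[
  \left|D_q - \kernel^i_\product(u,A_q)\right| \;=\; \left|\pi(A_q) - \kernel^i_\product(u,A_q)\right| ,
\]
and by Theorem~\ref{thm:gen_inv}~(\ref{thm:gen_inv_3}) the right-hand side is at most $(1-\varepsilon)^{\lfloor i/m\rfloor}$ whenever $\Gamma_\product$ is $(m,\varepsilon,\smallmeasure)$-small for some $m,\varepsilon,\smallmeasure$. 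Moreover smallness with parameter $m$ implies smallness with every parameter $m'\geq m$ and the same $\varepsilon,\smallmeasure$ (one extra integration against $\kernel_\product$), and $(1-t)^{\lfloor i/r\rfloor}$ is decreasing in $t$ while $(\pmin\densb/n)^n\geq(\pmin\densb/r)^r$ for all $1\leq n\leq r$ (here $\pmin\densb<1$, which follows from the integrality assumption on the supports of the delay densities). So it suffices to prove that $\Gamma_\product$ is $(m,\varepsilon,\smallmeasure)$-small for some $m\leq r$ and some $\varepsilon\geq(\pmin\densb/r)^r$, where $r=\lfloor|V|^{4\ln|V|}\rfloor$.

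This is obtained by re-running the proof of Proposition~\ref{prop:small-product} while keeping track of all constants. From an arbitrary state $z\in\Gamma_\product$ one builds a $\gamma$-wide path of a fixed length $n$ (independent of $z$) ending in a fixed state $z^\ast$; the target set $S$ is a Lebesgue-neighbourhood of $z^\ast$ inside its region, $\smallmeasure$ is Lebesgue measure normalised on $S$, and every measurable $T\subseteq S$ is reached in exactly $n$ steps with probability at least $b\cdot\smallmeasure(T)$, where $b=(\pmin\cdot\densb\cdot 2\gamma)^{\,n}$: each of the $n$ steps contributes a discrete factor $\geq\pmin$ and a delay factor $\geq\densb\cdot 2\gamma$, because the delay must land in an interval of width $\geq 2\gamma$ on which the relevant density is $\geq\densb$. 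Hence $\Gamma_\product$ is $(n,b,\smallmeasure)$-small, and everything reduces to checking that $n\leq r$ and $2\gamma\geq 1/r$ can be arranged.

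The bound $n\leq r$ is the technical heart and the step I expect to give the most trouble. The length $n$ is a sum of the lengths of the five sub-paths of the construction in the sketch of Proposition~\ref{prop:small-product}; four of them have length $O(|V|)$ (they follow fixed region paths, e.g.\ to reset all clocks that are not growing), and the dominant one is the sub-path forced to reach a fixed region $R_1$ from \emph{every} state in \emph{exactly} $n'$ steps. Such an $n'$ exists because the region graph is strongly connected and aperiodic, and it can be bounded using standard facts about primitive non-negative matrices (alternatively, by combining cycles whose lengths have greatest common divisor $1$ --- at most $\log_2|V|$ cycles of length $\leq|V|$ suffice --- and estimating the resulting Frobenius-type quantity); the quasi-polynomial value $r=\lfloor|V|^{4\ln|V|}\rfloor$ comfortably dominates $n'$ together with the $O(|V|)$ contributions of the other sub-paths, so $n\leq r$. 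For the wideness, the separation parameter $\delta$ is started at a fixed positive constant and is only ever divided by sub-path lengths, all of which are $\leq r$; therefore the final $\gamma$ satisfies $2\gamma\geq 1/r$. Combining these with the first paragraph gives $\varepsilon=b\geq(\pmin\densb/r)^r$ and $m=n\leq r$, and Theorem~\ref{thm:gen_inv}~(\ref{thm:gen_inv_3}) yields the claim.
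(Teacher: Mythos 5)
Your proposal follows essentially the same route as the paper: identify $D_q$ with $\pi(A_q)$ for the unique invariant measure, invoke the uniform-ergodicity bound of Theorem~\ref{thm:gen_inv}~(\ref{thm:gen_inv_3}), and feed it the quantitative smallness parameters $m\leq r$, $\varepsilon\geq(\pmin\densb/r)^r$ extracted from the constant-tracking version of Proposition~\ref{prop:small-product} (the paper's Appendix does exactly this, with the reachability-in-exactly-$n'$-steps bound supplied by the number-theoretic Lemma~\ref{lem:ergo-paths}, matching your primitive-matrix/B\'ezout argument). The only imprecision is in your smallness constant: hitting a specific measurable $T\subseteq S$ proportionally to $\smallmeasure(T)$ requires the linear-map/Lebesgue argument of Lemma~\ref{lem:fuzzying} rather than just ``each delay lands in a window of width $2\gamma$'' (which only bounds the probability of reaching all of $S$), but the resulting constant $(\pmin\densb\,\delta/n)^n/\sqrt{|\calX|}$ still dominates $(\pmin\densb/r)^r$, so the conclusion is unaffected.
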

\begin{proof}
From the proof of Proposition~\ref{prop:main-lem-period} (for
details see Appendix~\ref{app:sec-main-thm}), we obtain that 
$\Gamma_{\product}$ is $(m,\varepsilon,\smallmeasure)$-small 
with $m \leq r$ and
$\varepsilon = (\frac{\pmin \densb}{r}) ^{r}$, and the result follows
from Theorem~\ref{thm:gen_inv}~(\ref{thm:gen_inv_3}.).
\end{proof}

Now let us consider the general (periodic) case. We adopt the same
notation as in Section~\ref{sec:th-res}, i.e., the period of
$\regiongraph_\product$ is denoted by $p$, the decomposition
of the set $V$ by $V_0,\ldots,V_{p-1}$ (see Lemma~\ref{lem:periodic-graph}),
and $\Gamma^k_\product$ denotes the set
$\bigcup_{R\in V_k} R$ for every $k\in \{0,\ldots,p-1\}$.

\begin{theorem}
For every $i\in \Nset$ we have that
\begin{align*}
\left|D_q-\frac{1}{p} \cdot \sum_{k=0}^{p-1}
\kernel^{i \cdot p}_\product(u_k,A_q)\right| \  &\leq \
\left(1-\left(\frac{p_{\min} \cdot \densb}{%
r}\right)^{r}\right)^{\lfloor i/r \rfloor}
\end{align*}
where $u_k \in \Gamma^k_\product$ and  $r=\lfloor |V|^{4 \ln |V|} \rfloor$.
\end{theorem}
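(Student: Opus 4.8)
The plan is to reduce the statement to the uniform ergodicity clause of Theorem~\ref{thm:gen_inv}~(\ref{thm:gen_inv_3}), applied separately to the $p$ aperiodic sub-chains $\Phi_0,\dots,\Phi_{p-1}$ introduced in Section~\ref{SEC:FINISH-MAIN-THM}, and then to average the resulting error estimates. Recall from the periodic case of that section that $D_q = \frac1p\sum_{k=0}^{p-1}\pi_k(A_q)$, where $\pi_k$ is the unique invariant distribution of $\Phi_k$, and that the transition kernel of $\Phi_k$ is $\kernel^p_\product$ restricted to $\Gamma^k_\product$.

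First I would record the elementary observation that, for $u_k\in\Gamma^k_\product$, the $i$-step kernel of $\Phi_k$ at $u_k$ agrees with $\kernel^{ip}_\product(u_k,\cdot)$ on measurable subsets of $\Gamma^k_\product$: by Lemma~\ref{lem:periodic-graph} the set $\Gamma^k_\product$ is invariant under $\kernel^p_\product$, so $\kernel^{ip}_\product(u_k,\cdot)$ is concentrated on $\Gamma^k_\product$, and since $A_q$ is the disjoint union $\bigcup_k (A_q\cap\Gamma^k_\product)$ we get $\kernel^{ip}_\product(u_k,A_q)=\kernel^{ip}_\product(u_k,A_q\cap\Gamma^k_\product)$ and likewise $\pi_k(A_q)=\pi_k(A_q\cap\Gamma^k_\product)$. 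Next, Proposition~\ref{prop:main-lem-period} states that each $\Phi_k$ is $(n_k,b_k,\smallmeasure_k)$-small; inspecting the constants produced by its proof (carried out in Appendix~\ref{app:sec-main-thm}, along the same lines as the aperiodic case of the previous theorem) yields the \emph{uniform} bounds $n_k\le r$ and $b_k\ge\varepsilon:=(\pmin\cdot\densb/r)^{r}$ for all $k$, with $r=\lfloor |V|^{4\ln|V|}\rfloor$. Feeding these into Theorem~\ref{thm:gen_inv}~(\ref{thm:gen_inv_3}) for $\Phi_k$ gives
\[
\bigl|\kernel^{ip}_\product(u_k,A_q)-\pi_k(A_q)\bigr|\;\le\;(1-b_k)^{\lfloor i/n_k\rfloor}\;\le\;(1-\varepsilon)^{\lfloor i/r\rfloor},
\]
where the last inequality uses $b_k\ge\varepsilon$ and $n_k\le r$.

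Summing over $k$ and dividing by $p$, the triangle inequality together with $D_q=\frac1p\sum_{k}\pi_k(A_q)$ then yields $\bigl|D_q-\frac1p\sum_{k=0}^{p-1}\kernel^{ip}_\product(u_k,A_q)\bigr|\le(1-\varepsilon)^{\lfloor i/r\rfloor}$, which is exactly the claimed bound (note that no slack is lost in the averaging, since every summand is bounded by the same constant). The main obstacle is the extraction of the uniform constants $n_k\le r$ and $b_k\ge\varepsilon$ in Proposition~\ref{prop:main-lem-period}: one must verify that the five-sub-path construction behind Proposition~\ref{prop:small-product} generalizes to the sub-chains $\Phi_k$ with a path length bounded by $r$ independently of $k$ — this rests on the fact that in the strongly connected, period-$p$ region graph a fixed target region in class $V_k$ is reachable from every state in a number of $\product$-steps divisible by $p$ and bounded in terms of $|V|$ — and with wideness at least of order $\pmin\densb/r$, so that following such a path closely has probability at least $\varepsilon$. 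The remaining steps are routine bookkeeping.
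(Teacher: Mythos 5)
Your proposal is correct and follows essentially the same route as the paper: express $D_q=\frac1p\sum_k\pi_k(A_q)$, invoke the uniform smallness constants $n_k\le r$ and $b_k\ge(\pmin\densb/r)^r$ extracted from the proof of Proposition~\ref{prop:main-lem-period}, apply the uniform ergodicity clause of Theorem~\ref{thm:gen_inv} to each $\Phi_k$, and average. The only difference is that you spell out the bookkeeping (the identification of the $i$-step kernel of $\Phi_k$ with $\kernel^{ip}_\product$ on $\Gamma^k_\product$ and the triangle-inequality averaging) that the paper leaves implicit.
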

\begin{proof}
Due to the results of Section~\ref{sec:th-res} we have that
$D_q = \frac{1}{p} \cdot \sum_{k=0}^{p-1} \pi_k(A_q)$, where $\pi_k$
is the invariant measure for the $k$-th aperiodic 
decomposition $\Phi_k$ of the product process $\product$ (i.e.
$\pi_k$ is a measure over $\Gamma^k_\product$).
From the proof of Proposition~\ref{prop:main-lem-period} (for
details see Appendix~\ref{app:sec-main-thm}), $\Gamma^k_\product$
is $(m,\varepsilon,\smallmeasure)$-small with $m\leq r$ and
$\varepsilon = (\frac{\pmin \densb}{r})^{r}$, and the result follows
from Theorem~\ref{thm:gen_inv}~(\ref{thm:gen_inv_3}.) applied to
each $\Gamma^k_\product$ separately.
\end{proof}

\section{Conclusions}

We have shown that DTA measures over semi-Markov processes are
well-defined for almost all runs and assume only finitely 
many values with positive probability. We also indicated how to 
approximate DTA measures and the associated probabilities up to
an arbitrarily small given precision. 

Our approximation algorithm is quite naive and there is a lot of
space for further improvement. An interesting open question is whether
one can design more efficient algorithms with low complexity in the
size of SMP (the size of DTA specifications should stay relatively
small in most applications, and hence the (inevitable) exponential 
blowup in the size of DTA is actually not so problematic).

Another interesting question is whether the results presented in this paper
can be extended to more general stochastic models such as generalized
semi-Markov processes.

\subsubsection*{Acknowledgement}
The authors thank Petr Slov\'{a}k for his many
useful comments. 
The work has been supported by the 
Institute for Theoretical Computer Science, project No.~1M0545, 
and the Czech Science Foundation, grant No.~P202/10/1469 (T.~Br\'{a}zdil,
A.~Ku\v{c}era), No.~201/08/P459 (V.~\v{R}eh\'ak), and No.~102/09/H042 (J. Kr\v{c}\'al).

\bibliographystyle{plain}
\bibliography{str-long,concur}

\begin{thebibliography}{10}

\bibitem{ACD:verifying-automata-real-time}
R.~Alur, C.~Courcoubetis, and D.L.{} Dill.
\newblock Verifying automata specifications of probabilistic real-time systems.
\newblock In {\em Real-Time: Theory in Practice}, volume 600 of {\em Lecture
  Notes in Computer Science}, pages 28--44. Springer, 1992.

\bibitem{AD:Timed-Automata}
R.~Alur and D.~Dill.
\newblock A theory of timed automata.
\newblock {\em Theoretical Computer Science}, 126(2):183--235, 1994.
\newblock Fundamental Study.

\bibitem{ASSB:CTMC-model-checking-ACM-TCL}
A.~Aziz, K.~Sanwal, V.~Singhal, and R.K.{} Brayton.
\newblock Model-checking continuous-time {Markov} chains.
\newblock {\em {ACM} Transactions on Computational Logic}, 1(1):162--170, 2000.

\bibitem{BBBBG:One-clock-almost-sure}
C.~Baier, N.~Bertrand, P.~Bouyer, T.~Brihaye, and M.~Gr{\"{o}\ss}er.
\newblock Almost-sure model checking of infinite paths in one-clock timed
  automata.
\newblock In {\em Proceedings of LICS 2008}, pages 217--226. IEEE Computer
  Society Press, 2008.

\bibitem{BHHK:CTMC-model-checking-IEEE-TSE}
C.~Baier, B.R.{} Haverkort, H.~Hermanns, and J.-P.{} Katoen.
\newblock Model-checking algorithms for continuous-time {Markov} chains.
\newblock {\em {IEEE} Transactions on Software Engineering}, 29(6):524--541,
  2003.

\bibitem{BL:book}
V.L.{} Barbu and N.{} Limnios.
\newblock {\em {Semi-Markov} Chains and Hidden Semi-{Markov} Models toward
  Applications}.
\newblock Springer, 2008.

\bibitem{BBBM:One-clock-timed}
N.~Bertrand, P.~Bouyer, T.~Brihaye, and N.~Markey.
\newblock Quantitative model-checking of one-clock timed automata under
  probabilistic semantics.
\newblock In {\em Proceedings of 5th Int. Conf. on Quantitative Evaluation of
  Systems (QEST'08)}, pages 55--64. IEEE Computer Society Press, 2008.

\bibitem{Billingsley:book}
P.~Billingsley.
\newblock {\em Probability and Measure}.
\newblock Wiley, 1995.

\bibitem{BLM:book}
D.~Bini, G.~Latouche, and B.~Meini.
\newblock {\em Numerical methods for Structured {Markov} Chains}.
\newblock Oxford University Press, 2005.

\bibitem{BKKKR:GSMP-games-TA}
T.~Br\'{a}zdil, J.~Kr{\v{c}}\'{a}l, J.~K{\v{r}}et\'{\i}nsk\'{y},
  A.~Ku{\v{c}}era, and V.~\v{R}eh\'{a}k.
\newblock Stochastic real-time games with qualitative timed automata
  objectives.
\newblock In {\em Proceedings of CONCUR 2010}, volume 6269 of {\em Lecture
  Notes in Computer Science}, pages 207--221. Springer, 2010.

\bibitem{Bremaud:book}
P.~Br{\'e}maud.
\newblock {\em {Markov} chains: {Gibbs} fields, {Monte Carlo} simulation, and
  queues}.
\newblock Springer, 1998.

\bibitem{CHKM:CTMC-quant-timed}
T.~Chen, T.~Han, J.-P.{} Katoen, and A.~Mereacre.
\newblock Quantitative model checking of continuous-time {Markov} chains
  against timed automata specifications.
\newblock In {\em Proceedings of LICS 2009}, pages 309--318. IEEE Computer
  Society Press, 2009.

\bibitem{EC:CTL-SCP}
E.A.{} Emerson and E.M.{} Clark.
\newblock Using branching time temporal logic to synthesize synchronization
  skeletons.
\newblock {\em Science of Computer Programming}, 2:241--266, 1982.

\bibitem{HHK:CSL-dependability}
B.R.{} Haverkort, H.~Hermanns, and J.-P.{} Katoen.
\newblock On the use of model checking techniques for quantitative
  dependability evaluation.
\newblock In {\em Proceedings of IEEE Symp.{} Reliable Distributed Systems},
  pages 228--238, 2000.

\bibitem{Kal:book}
O.~Kallenberg.
\newblock {\em Foundations of Modern Probability}.
\newblock Probability and Its Applications. Springer, 1997.

\bibitem{Kress:book}
R.~Kress.
\newblock {\em Numerical Analysis}.
\newblock Springer, 1998.

\bibitem{Matthes:GSMP}
K.~Matthes.
\newblock {Zur Theorie der Bedienungsprozesse}.
\newblock {\em Transactions of the Third Prague Conference on Information
  Theory, Statistical Decision Functions, Random Processes}, pages 513--528,
  1962.

\bibitem{MT:book}
S.~Meyn and R.L.{} Tweedie.
\newblock {\em Markov Chains and Stochastic Stability}.
\newblock Cambridge University Press, 2009.

\bibitem{MSC:book-number-theory}
D.S.{} Mitrinovic, J.~S{\'a}ndor, and B.~Crstici.
\newblock {\em Handbook of number theory}.
\newblock Kluwer, 1996.

\bibitem{Norris:book}
J.R.{} Norris.
\newblock {\em {Markov} Chains}.
\newblock Cambridge University Press, 1998.

\bibitem{RR:GSSMC-PS}
G.O.{} Roberts and J.S.{} Rosenthal.
\newblock General state space {Markov} chains and {MCMC} algorithms.
\newblock {\em Probability Surveys}, 1:20--71, 2004.

\bibitem{Rosenthal:book}
J.S.{} Rosenthal.
\newblock {\em A first look at rigorous probability theory}.
\newblock World Scientific Publishing, 2006.

\bibitem{Ross:book}
S.M.{} Ross.
\newblock {\em Stochastic Processes}.
\newblock Wiley, 1996.

\end{thebibliography}

\newpage
\appendix

\section{Proof of Corollary~\ref{COR:CONT-MES}}\label{proof:cor:cont-mes}

\begin{refcorollary}{COR:CONT-MES}
  $\cme^{\calA}$ is well-defined for almost all runs of~$\smp$. Further,
  there are pairwise disjoint sets $\setofruns_1,\ldots,\setofruns_K$ of
  runs in $\smp$ such that $\probm(\setofruns_1 \cup \cdots \cup
  \setofruns_K)=1$, and for every \mbox{$1 \leq j \leq K$} there is a tuple
  $C_j$ such that $\cme^\calA(\sigma) = C_j$ for almost all \mbox{$\sigma \in
    \setofruns_j$}.
\end{refcorollary}
Most of the proof has already been presented in Section~\ref{sec:th-res}.
It remains to prove that for almost all runs $\sigma$ of $\setofruns_j$ we have
\begin{equation}\label{eq:finale}
  \frac{\sum_{s\in S} E_s \cdot D_{j,(s,q)}}{\sum_{p\in Q} \sum_{s\in S} E_s\cdot D_{j,(s,p)}} = \cme^\calA_q(\sigma)
\end{equation}
To simplify our notation we write $D_{s,p}$ and $1^i_{s,p}$ instead of
$D_{j,(s,p)}$ and $1^i_{(s,p)}$, respectively.  If $D_q=0$ then clearly both
sides of Equation~(\ref{eq:finale}) are $0$. Assume that $D_q>0$.

We prove that for almost all runs $\sigma$ of $\setofruns_j$,
\begin{equation}\label{eq:numer}
  \sum_{s\in S} E_s\cdot D_{s,q} = \lim_{n\rightarrow \infty} \frac{\sum_{i=1}^{n} \ntime{i}(\sigma)\cdot 1^i_q(\sigma)}{n}
\end{equation}
\begin{equation}\label{eq:denom}
  \sum_{p\in Q} \sum_{s\in S} E_s\cdot D_{s,p} = \lim_{n\rightarrow \infty} \frac{\sum_{i=1}^{n} \ntime{i}(\sigma)}{n}
\end{equation}
which proves Equation~(\ref{eq:finale}) because
\[\lim_{n\rightarrow \infty} \frac{\sum_{i=1}^{n} \ntime{i}(\sigma)\cdot
  1^i_q(\sigma)}{ \sum_{i=1}^{n} \ntime{i}(\sigma)}=\cme^\calA_q(\sigma)\,.\]

By the strong law of large numbers, for almost all runs $\sigma$ of $\setofruns$
we have
\begin{equation}\label{eq:slln}
  E_s=\lim_{n \rightarrow \infty} \frac{\sum_{i=1}^{n} \ntime{i}(\sigma) \cdot
    1^i_{s,p}(\sigma)}{\sum_{i=1}^{n} 1^i_{s,p}(\sigma)}
\end{equation}
for all $s\in S$ and $p\in Q$ satisfying $D_{s,p}>0$ (note that waiting
times in $s$ do not depend on $p$).
Let $\sigma$ be a run of $\setofruns$ which satisfies Equation~(\ref{eq:slln})
for all $s\in S$ and $p\in Q$ where $D_{s,p}>0$ and such that $\dme^{\calA
  \times S}$ is well-defined for $\sigma$.

For every $p\in Q$ we have that $\sum_{s\in S} E_s \cdot D_{s,p}$ is equal to
\begin{align*}
 \sum_{s\in S} & \lim_{n \rightarrow \infty} \frac{\sum_{i=1}^{n} \ntime{i}(\sigma) \cdot
  1^i_{s,p}(\sigma)}{\sum_{i=1}^{n} 1^i_{s,p}(\sigma)} \cdot D_{s,p} = \\
   & =  \sum_{s\in S} \lim_{n \rightarrow \infty} \frac{\sum_{i=1}^{n} \ntime{i}(\sigma) \cdot
  1^i_{s,p}(\sigma)}{\sum_{i=1}^{n} 1^i_{s,p}(\sigma)} \cdot
  \frac{\sum_{i=1}^{n} 1^i_{s,p}(\sigma)}{n}\\
  & =  \sum_{s\in S} \lim_{n \rightarrow \infty} \frac{\sum_{i=1}^{n} \ntime{i}(\sigma) \cdot
  1^i_{s,p}(\sigma)}{n} \\
  & =  \lim_{n \rightarrow \infty} \frac{\sum_{i=1}^{n} \ntime{i}(\sigma) \cdot
   \sum_{s\in S} 1^i_{s,p}(\sigma)}{n} \\
   &=  \lim_{n \rightarrow \infty} \frac{\sum_{i=1}^{n} \ntime{i}(\sigma) \cdot
   1^i_{p}(\sigma)}{n} \\
\end{align*}
which proves Equation~(\ref{eq:numer}). Also
$\sum_{p\in Q} \sum_{s\in S} E_s\cdot D_{s,p}$ is equal to
\begin{align*}
\sum_{p\in Q} \lim_{n \rightarrow \infty} \frac{\sum_{i=1}^{n} \ntime{i}(\sigma) \cdot 1^i_{p}(\sigma)}{n}
& = \lim_{n\rightarrow \infty} \frac{\sum_{i=1}^{n} \ntime{i}(\sigma) \cdot \sum_{p\in Q} 1^i_p(\sigma)}{n} \\
& = \lim_{n\rightarrow \infty} \frac{\sum_{i=1}^{n} \ntime{i}(\sigma)}{n}
\end{align*}
which proves Equation~(\ref{eq:denom}) and finishes the proof.
\qed

\section{Proofs of Section~\ref{SEC:PRODUCT-PROCESS}}
\label{app:product}

\subsection{Proof of Lemma~\ref{lem:product-kernel}}
\label{app:product-kernel}

\begin{reflemma}{lem:product-kernel}
  Let $A \in \G_\product$. Then $\kernel_\product(\cdot,A)$ is a
  measurable function, i.e., $\product$ is a GSSMC.
\end{reflemma}
\begin{proof}
  To prove this lemma, it is sufficient to show that
  $\kernel_\product(\cdot,A)$ is a measurable function from
  $\Gamma_\product$ to $[0,1]$ where $A$ ranges (only) over the generators
  of $\G_\product$, i.e. $A=\{s'\}\times\{q'\}\times\vec{I}$ where $s' \in
  S$, $q' \in Q$, and $\vec{I}=\prod_{x \in \calX} I_x$ such that $I_x$ is
  an interval for each $x \in \calX$ (see, e.g., \cite[Lemma~1.37]{Kal:book}).

  As the sets $S$ and $Q$ are finite, our goal is to show that a function
  $P_\product((s,q,\cdot),\{s'\}\times\{q'\}\times\vec{I})$ is measurable for
  $s,s' \in S$, $q,q' \in Q$, and a product of intervals $\vec{I}$.

  The rest of the proof is based on the fact that a real valued function is
  measurable, if it is piecewise continuous. Hence, we finish the proof
  showing that the function
  $P_\product((s,q,\cdot),\{s'\}\times\{q'\}\times\vec{I})$ is piecewise
  continuous when we fix valuation of all clocks but one.  Formally, we fix
  a valuation $\nu$ and a clock $x$ and show that the following function of
  a parameter $u \in \Rsetpo$ is piecewise continuous.
  \begin{multline*}
     P_\product((s,q,\nu[x:=u]),\{s'\}\times\{q'\}\times\vec{I})=
     \delta_{q'\bar{q}} \cdot \prob(s)(s') \cdot \int_0^\infty f(t) \cdot
        1_\vec{I}(\bar{\nu} + t) \de{t}
  \end{multline*}
  where 
  \begin{itemize}
  \item $\nu[x:=u]$ is the valuation $\nu$ where the value of the clock
    $x$ is set to $u$;
  \item $\delta$ is the Kronecker delta, i.e., $\delta_{q'\bar{q}} = 1$ if
    $q'=\bar{q}$, and $0$, otherwise;
  \item $f = \D(s,s')$ is the delay density function for $(s,s')$;
  \item $(\bar{q},\bar{\nu})$ is the timed automaton successor of the state
    $(s,q,\nu[x:=u])$, i.e.,
    $\successor((s,q,\nu[x:=u]))=(\bar{q},\bar{\nu})$;
  \item $1_\vec{I}$ is the indicator function of the set $\vec{I}$, i.e.,
    $1_\vec{I}(\nu') = 1$ if $\nu' \in \vec{I}$, and $0$, otherwise.
  \end{itemize}
  The function $\prob(s)(s')$ is constant (recall that $s$ and $s'$ are
  fixed).  Due to the standard region construction for $\calA$, it holds
  that $\bar{q}$ is piecewise constant and $\bar{\nu}$ is piecewise
  continuous with respect to $u$.

  Let $u$ be in one of the finitely many intervals where
  $\delta_{q'\bar{q}}$ is constant and the valuation $\bar{\nu}$ changes
  continuously, i.e. the automaton $\calA$ uses the same transition for all
  $u$ of this interval. As $1_\vec{I}$ is the indicator function and
  $\vec{I}$ is a product of intervals, it holds that
  \begin{multline*}
     \delta_{q'\bar{q}} \cdot \prob(s)(s') \cdot 
        \int_0^\infty f(t) \cdot 1_\vec{I}(\bar{\nu} + t) \de{t}= 
    \delta_{q'\bar{q}} \cdot \prob(s)(s') \cdot 
        \int_{\mathbf{a}(u)}^{\mathbf{b}(u)} f(t) \de{t}
  \end{multline*}
  where ${\mathbf{a}}(u)$ and ${\mathbf{b}}(u)$ are continuous functions of
  $u$ and so $\int_{\mathbf{a}(u)}^{\mathbf{b}(u)} f(t)\, \de{t}$ is also a
  continous function of $u$ (recall that $\int_{0}^{\infty} f(t)\, \de{t} =
  1$). Therefore, the function
  $P_\product((s,q,\nu[x:=u]),\{s'\}\times\{q'\}\times\vec{I})$ is a
  piecewise continuous function of $u$ and $\kernel_\product(\cdot,A)$ is a
  measurable function, i.e., $\product$ is a GSSMC.
\end{proof}

\subsection{Proof of Lemma~\ref{lem:product-correctness}}
\label{app:product-correctness}

\begin{reflemma}{lem:product-correctness}
  There is a measurable one-to-one mapping $\pr$ from the set of runs
  of $\smp$ to the set of runs of $\product$ such that
  \begin{itemize}
  \item $\pr$ preserves measure, i.e., for every measurable set $X$ 
    of runs of $\smp$ we have that $\pr(X)$ is also measurable and 
    $\probm_\smp(X) = \probm_\product(\pr(X))$;
  \item $\pr$ preserves $\dme$, i.e., for every run $\sigma$ of $\smp$
    and every $q \in Q$ we have that $\dme_{q}(\sigma)$ is well-defined 
    iff $\dme_{q}(\pr(\sigma))$ is well-defined, 
    and $\dme_{q}(\sigma) = \dme_{q}(\pr(\sigma))$.
  \end{itemize}
\end{reflemma}
\begin{proof}
  First, we define the function $\pr$. We use auxiliary functions $\pr_{0}:
  S \rightarrow \Gamma_\product$ that maps the initial states and a function
  $\pr_{\rightarrow}: \Gamma_\product \times (S \times \Rsetp \times S)
  \rightarrow \Gamma_\product$ that maps transitions.  First, we set
  $\pr_0(s) = (s,q_0,\vec{0})$ where $q_0$ is the initial location and
  $\vec{0}$ is the zero vector. Next, let $s,s' \in S$, $t\in\Rsetp$ and $z
  = (s'',q,\nu) \in \Gamma_\product$.  We define
  $\pr_\rightarrow(z,(s,t,s')) = (s',q',\nu'+t)$ such that $\successor(z) =
  (q',\nu')$.

  For a run $\sigma = s_0 t_0 s_1 t_1 \cdots$, we use these two functions to
  set $\pr(\sigma) = z_0 z_1 z_2\cdots$ such that $\pr_{0}(s_0) = z_0$ and
  for each $i \in \Nseto$ it holds that
  $\pr_{\rightarrow}(z_i,(s_i,t_i,s_{i+1})) = z_{i+1}$.  We need to show the
  following claims about the function $\pr$.

  \begin{claim}
    Let $\sigma$ be a run of $\smp$. We have for any $q \in Q$ that
    $\dme_q(\sigma)$ is well-defined if and only if $\dme_q(\pr(\sigma))$ is
    well-defined, and $\dme_q(\sigma) = \dme_q(\pr(\sigma))$.
  \end{claim}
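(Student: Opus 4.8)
The plan is to set up a precise, index-wise dictionary between the behaviour of $\calA$ on the timed word $w_\sigma$ and the location components of the states of $\pr(\sigma)$, and then to observe that the two discrete frequencies can differ only by a term that is swamped after division by~$n$.

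Concretely, write $\sigma = s_0\,t_0\,s_1\,t_1\cdots$ and $\pr(\sigma) = z_0\,z_1\,z_2\cdots$, where by the construction of $\pr$ we have $z_i = (s_i,\hat q_i,\hat\nu_i)$ for some location $\hat q_i$ and valuation $\hat\nu_i$. First I would prove, by a straightforward induction on $i\in\Nseto$, that $(\hat q_{i+1},\hat\nu_{i+1})$ is exactly the configuration of $\calA$ reached after reading the finite prefix $s_0\,t_0\cdots s_i\,t_i$ of $w_\sigma$. The base case uses $\pr_{0}(s_0)=(s_0,q_0,\vec 0)$ together with one application of $\pr_{\rightarrow}$; the inductive step just unfolds the definition of $\pr_{\rightarrow}$, which — via $\successor$ — performs precisely one discrete step of $\calA$ on the letter $s_i$ followed by a time step $t_i$, i.e.\ exactly the two steps $\calA$ makes when it reads $s_i$ and then $t_i$ in $w_\sigma$. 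Since a time step does not change the location, this identity says, in the notation of Section~\ref{sec:prelim}, that $\hat q_{i+1} = Q^i(w_\sigma)$ for all $i$, and hence $1^{i+1}_{q}(\pr(\sigma)) = \nstate{i}{q}(w_\sigma)$ for all $i\in\Nseto$ and all $q\in Q$.

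From this it follows that for every $n$
\[
 \sum_{i=1}^{n}\nstate{i}{q}(w_\sigma)
 \;=\; \sum_{i=1}^{n} 1^{i+1}_{q}(\pr(\sigma))
 \;=\; \sum_{i=1}^{n} 1^{i}_{q}(\pr(\sigma)) \;-\; 1^{1}_{q}(\pr(\sigma)) \;+\; 1^{n+1}_{q}(\pr(\sigma)),
\]
so the two partial sums differ by at most $1$ in absolute value. Dividing by $n$ and letting $n\to\infty$, the two Ces\`aro sequences have the same $\limsup$ and the same $\liminf$; in particular one converges iff the other does, and the limits agree. This yields simultaneously the equivalence of well-definedness and the equality $\dme_{q}(\sigma) = \dme_{q}(\pr(\sigma))$.

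The step I expect to require the most care is exactly this index matching: each step of $\product$ corresponds to two steps of $\calA$, and the location stored in $z_{i+1}$ is the one $\calA$ occupies \emph{after} processing both $s_i$ and $t_i$ — which is $Q^i(w_\sigma)$, not $Q^{i+1}(w_\sigma)$. Pinning down this off-by-one correctly is what produces the harmless boundary correction $-\,1^{1}_{q}(\pr(\sigma)) + 1^{n+1}_{q}(\pr(\sigma))$ above; once it is in place the remainder is pure bookkeeping. (That $\pr(\sigma)$ is genuinely an infinite word over $\Gamma_\product$, so that $\dme_{q}(\pr(\sigma))$ is even defined, is immediate from the construction of $\pr$; the measurability and measure-preservation of $\pr$ are dealt with in the other claims.)
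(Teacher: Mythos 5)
Your proof is correct and follows essentially the same route as the paper's: identify the location component of the $(i{+}1)$-st product state with $Q^i(w_\sigma)$ by unfolding $\pr_{0}$ and $\pr_{\rightarrow}$, and then compare the two Ces\`aro averages. You are in fact slightly more careful than the paper, which silently identifies the two location sequences without accounting for the index shift; your boundary term $-\,1^{1}_{q}(\pr(\sigma)) + 1^{n+1}_{q}(\pr(\sigma))$ makes that step explicit and is harmless after division by $n$.
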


  Let $\sigma = s_0 t_0 s_1 t_1 s_2 t_2 \cdots$ be a run of $\smp$.  Let us
  fix a location $q \in Q$. Recall the Figure~\ref{fig:product}.  The run of
  $\ta$ over $\sigma$ is a sequence
  \begin{align*}
    \ta(\sigma) &= (q_0,\nu_0) s_0 (q_1,\bar{\nu_0}) t_0 (q_1,\nu_1) s_1
    (q_2,\bar{\nu_1}) t_1 (q_2,\nu_2) s_2 \cdots.  
    \intertext{The corresponding run of the product is} 
    \pr(\sigma) &= (s_0,q_0,\nu_0)(s_1,q_1,\nu_1)(s_2,q_2,\nu_2) \cdots.
  \end{align*}
  The values $\dme_q(\sigma)$ and $\dme_q(\pr(\sigma))$ are limit superior
  of partial sums of ratio of $q$ in a sequence of locations. For
  $\dme_q(\sigma)$ the sequence is $Q^1(\sigma), Q^2(\sigma), Q^3(\sigma),
  \ldots = q_1, q_2, q_3, \ldots$ (recall that $Q^i(\sigma)$ is the location
  entered after reading the finite prefix $s_0\,t_0\cdots s_i$) and for
  $\dme_q(\pr(\sigma))$ the sequence is also $q_1, q_2, q_3, \ldots$.
  Hence, we get that $\dme_q(\sigma)$ is well-defined iff
  $\dme_q(\pr(\sigma))$ is well-defined and $\dme_q(\sigma) =
  \dme_q(\pr(\sigma))$.

  \begin{claim}
    For any measurable set $X$ of runs of $\smp$, the set $\pr(X)$ is
    measurable.
  \end{claim}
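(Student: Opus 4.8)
The plan is to show that the family $\mathcal{H}=\{X\subseteq\setofruns_\smp \mid \pr(X)\in\sigmafield_\product\}$ is a $\sigma$-field on $\setofruns_\smp$ containing every cylinder; since $\sigmafield_\smp$ is by definition the $\sigma$-field generated by the cylinders, this gives $\sigmafield_\smp\subseteq\mathcal{H}$, which is exactly the Claim. Closure of $\mathcal{H}$ under countable unions is automatic, since $\pr(\bigcup_i X_i)=\bigcup_i\pr(X_i)$ for any function. Closure under complementation uses that $\pr$ is one-to-one (already established): then $\pr(\setofruns_\smp\setminus X)=\pr(\setofruns_\smp)\setminus\pr(X)$ --- but this identity is only useful once we know the \emph{total image} $\pr(\setofruns_\smp)$ is itself measurable, and I expect proving that to be the main obstacle.

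To show $\pr(\setofruns_\smp)\in\sigmafield_\product$, I would first make the image explicit. Unfolding the definitions of $\pr_0$ and $\pr_\rightarrow$, a word $z_0\,z_1\,z_2\cdots$ over $\Gamma_\product$ lies in $\pr(\setofruns_\smp)$ precisely when: the location and valuation components of $z_0$ are $q_0$ and $\vec{0}$; and for every $i$, writing $\successor(z_i)=(\bar q_i,\bar\nu_i)$, the location component of $z_{i+1}$ equals $\bar q_i$ and its valuation component equals $\bar\nu_i+t\cdot\vec{1}$ for some $t\in\Rsetpo$ (the $S$-component of each $z_{i+1}$ being unconstrained, since every run of $\smp$ is allowed). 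Each such constraint is, after composing with the appropriate coordinate projections of the space of runs of $\product$, the preimage of a Borel subset of $\Gamma_\product$ or of $\Gamma_\product\times\Gamma_\product$: here I would invoke that $z\mapsto\successor(z)$ is piecewise constant in its location component and piecewise continuous in its valuation component --- this is the content of the region construction, already used in the proof of Lemma~\ref{lem:product-kernel} --- together with the elementary fact that $\{(v,w)\in(\Rsetpo^n)^2\mid w-v\in\Rsetpo\cdot\vec{1}\}$ is closed. A countable intersection of such sets lies in $\sigmafield_\product$, giving measurability of the image.

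For the cylinders, fix a template $B=s_0\,I_0\,s_1\,I_1\cdots s_{n+1}$. The same unfolding shows $\pr(\setofruns(B))$ is the set of $z_0\,z_1\cdots\in\pr(\setofruns_\smp)$ whose $S$-component at position $i$ equals $s_i$ for $0\le i\le n{+}1$ and for which the time increment $t_i=\nu_{i+1}(x)-\bar\nu_i(x)$ --- the same value for every clock $x$, assuming $\calA$ has at least one clock as is tacitly the case --- lies in $I_i$ for $0\le i\le n$. Since this increment is a Borel function of $(z_i,z_{i+1})$, the set is the intersection of the measurable set $\pr(\setofruns_\smp)$ with a set determined by finitely many coordinates through Borel maps, hence measurable. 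Combining this with the previous paragraph, $\mathcal{H}$ is a $\sigma$-field containing all cylinders, which finishes the argument.

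Finally, I would note a softer alternative that sidesteps the explicit image computation: $\setofruns_\smp$ and the space of runs of $\product$ are both standard Borel spaces (countable products of Polish spaces, with $\sigmafield_\smp$ and $\sigmafield_\product$ being the associated Borel $\sigma$-fields), and $\pr$ is an injective measurable map, so the Lusin--Souslin theorem already yields that $\pr$ is a Borel isomorphism onto a Borel subset; in particular $\pr(X)$ is measurable for every measurable $X$. I would keep the elementary argument above in the main text for self-containedness and mention this only as a remark.
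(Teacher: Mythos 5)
Your proof is correct, but it follows a genuinely different route from the paper's. The paper argues cylinder by cylinder: for a template $B$ it covers $\pr(\setofruns(B))$ by cylinders built from hypercubes of precision $k$, intersects over all $k$, asserts that the resulting set equals the image, and then appeals to ``standard arguments'' to pass to arbitrary measurable $X$. You instead run the good-sets principle explicitly, and in doing so you isolate precisely the two points that the paper's sketch leaves implicit: forward images commute with unions but not with complements, so the extension step needs both injectivity of $\pr$ and measurability of the total image $\pr(\setofruns_\smp)$. Your explicit Borel description of that image --- initial configuration $(q_0,\vec{0})$ plus, for every $i$, the closed constraint that $\nu_{i+1}-\bar{\nu}_i$ be a nonnegative multiple of $\vec{1}$ and that the location match $\successor(z_i)$, all measurable by the piecewise continuity of $\successor$ already used for Lemma~\ref{lem:product-kernel} --- is what makes this work, and the same description handles the cylinders directly. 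This coordinate-wise characterization also sidesteps a weakness of the paper's approximation identity, which constrains only the first $n{+}1$ coordinates of a product run and hence, read literally, describes a superset of $\pr(\setofruns(B))$ (the tails are unconstrained); your version constrains every coordinate pair and needs no closure/limit argument. Your caveat that $\calA$ must have at least one clock is apt, since otherwise $\pr$ fails to be injective and the lemma's statement itself degenerates; and the Lusin--Souslin remark is a legitimate high-level alternative, correctly relegated to a side comment. In short: the paper's proof is an outer-approximation sketch, yours is a complete and self-contained argument reaching the same conclusion.
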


  Recall that by $\setofruns(B)$ we denote a cylinder of runs that follow
  the given template $B$.  Let $X$ be a set of runs such that $X =
  \setofruns(B)$ for some template $B = s_0 I_0 \cdots s_n I_n$, i.e. $X$ is
  from the generator set.  We can cover the image of $X$ by cylinders
  composed of basic hybercubes. By decreasing the edge length of the
  hypercubes to the limit, we then get a set that equals the image of $X$.
  For $k \in \Nset$ and $\vecv \in (\Nseto)^{|\calX|}$ we denote by
  $C_\vecv^k$ a set of valuations $\prod_{x\in\calX}
  [\vecv(x)/k,(\vecv(x)+1)/k]$. The set of all cylinder templates composed
  of basic hypercubes of precision $k$ is
  \begin{align*}
    U_k &= \{A_0\cdots A_n \mid A_i = \{s_i\} \times \{q_i\} \times C_{\vecv_i}^k, s_i \in S, q_i \in Q, \vecv_i \in (\Nseto)^{|X|}\}
    \intertext{A run $\sigma = z_0 z_1 \cdots$ of $\product$ is in $\setofruns(A_0
      \ldots A_n)$ if for each $0 \leq i \leq n$ we have $z_i \in A_i$.  It
      is easy to show that} 
    \pr(X) &= \bigcap_{k\in\Nset} \bigcup \{\setofruns(C) \mid C \in U_k,
    \setofruns(C) \cap \pr(X) \neq \emptyset\}
  \end{align*}
  hence, $\pr(X)$ is a measurable set. By standard arguments we get the
  result for any measurable $X$.
  \begin{claim}
    For any measurable set $X$ of runs of $\product$, the set $\pr^{-1}(X)$
    is measurable.
  \end{claim}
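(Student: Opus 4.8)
\emph{Proof plan.} Since $\pr^{-1}$ commutes with complementation and countable unions, the family $\{X\in\sigmafield_\product \mid \pr^{-1}(X)\in\sigmafield_\smp\}$ is a $\sigma$-field, so it suffices to verify $\pr^{-1}(Y)\in\sigmafield_\smp$ for $Y$ in a generating family of $\sigmafield_\product$. The plan is to take $Y$ to be a cylinder $\setofruns(A_0\cdots A_n)$ with $A_i=\{s_i\}\times\{q_i\}\times\vec{I}_i$, where each $\vec{I}_i$ is a product of intervals in $\Rsetpo$ (the same generators of $\G_\product$ used in the proof of Lemma~\ref{lem:product-kernel}), and then to describe $\pr^{-1}(Y)$ explicitly.

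By the definition of $\pr$, a run $\sigma=\hat{s}_0\,t_0\,\hat{s}_1\,t_1\cdots$ of $\smp$ lies in $\pr^{-1}(Y)$ iff $\hat{s}_i=s_i$ for $0\le i\le n$, the location of $\calA$ reached after reading the prefix $\hat{s}_0\,t_0\cdots\hat{s}_i$ equals $q_i$, and the corresponding clock valuation lies in $\vec{I}_i$. The first observation is that, because $\calA$ has finitely many edges, a choice of letters $s_0,\ldots,s_n$ and locations $q_0,\ldots,q_n$ is compatible with only finitely many edge sequences of $\calA$ (those whose $j$-th edge is labeled $s_j$ and leads from $q_j$ to $q_{j+1}$), and $\pr^{-1}(Y)$ is the finite union, over these edge sequences, of the corresponding sets of runs. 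Fixing one such edge sequence fixes, symbolically, which clocks are reset at each step; hence it expresses every relevant valuation $\nu_i$ as an affine function of $(t_0,\ldots,t_{i-1})$ with coefficients in $\{0,1\}$, and it fixes the guards that have to hold along the way. The set of runs associated with this edge sequence therefore consists of the $\sigma$ with $\hat{s}_i=s_i$ for all $i$ and with $(t_0,\ldots,t_{n-1})$ in a set $B\subseteq(\Rsetpo)^n$ obtained by intersecting the preimages, under those affine maps, of the guards and of the boxes $\vec{I}_i$. Since every guard and every box is a finite conjunction of constraints of the form $x\bowtie c$, each such preimage is a finite intersection of half-spaces, so $B$ is a Borel (indeed semilinear) subset of $(\Rsetpo)^n$.

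The remaining step is routine: the coordinate map $\sigma\mapsto(\hat{s}_0,\ldots,\hat{s}_n,t_0,\ldots,t_{n-1})$ from $(\setofruns_\smp,\sigmafield_\smp)$ to $S^{n+1}\times(\Rsetpo)^n$ is measurable, directly from the fact that $\sigmafield_\smp$ is generated by the cylinders of templates. Hence each of the finitely many pieces of $\pr^{-1}(Y)$ described above, being the preimage of a set $\{(s_0,\ldots,s_n)\}\times B$ with $B$ Borel, lies in $\sigmafield_\smp$, and so does their union; combined with the $\sigma$-field argument of the first paragraph, this proves the claim.

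I expect the only delicate point to be the bookkeeping in the middle paragraph: one has to use determinism and totality of $\calA$ (more precisely, the region construction recalled in Section~\ref{sec:region-graph}) to see that along a fixed edge sequence the clock valuations depend affinely on the delays with a reset pattern determined purely symbolically, so that the location, guard, and membership conditions translate into a semilinear constraint on the delays. Everything else — the reduction to generators and the measurability of the coordinate projection — is completely standard. (Since the run spaces of $\smp$ and of $\product$ are standard Borel and $\pr$ is injective, once $\pr$ is known to be Borel measurable the conclusion also follows from the Lusin--Souslin theorem; but establishing Borel measurability of $\pr$ is precisely the content of the explicit argument above, so this reformulation does not save work.)
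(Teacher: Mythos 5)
Your proof is correct, but for the key step it takes a genuinely different route from the paper. The paper also reduces to generators $Y=\setofruns(C)$, but then it handles $\pr^{-1}(Y)$ the same way it handled the forward image in the preceding claim: it writes $\pr^{-1}(Y)$ as $\bigcap_{k}\bigcup\{\setofruns(B)\mid B\in T_k,\ \setofruns(B)\cap\pr^{-1}(Y)\neq\emptyset\}$, where $T_k$ is the family of cylinder templates built from grid intervals $[i/k,(i+1)/k]$ — an outer approximation by ever finer grid cylinders, with the identity itself left as "easy to show". You instead describe $\pr^{-1}(Y)$ exactly: using determinism and totality of $\calA$ you split it into finitely many pieces indexed by edge sequences, observe that along a fixed edge sequence the clock valuations are affine (with $\{0,1\}$ coefficients determined by the reset pattern) in the delays, and conclude that each piece is the preimage under the measurable coordinate projection of a state tuple times a semilinear Borel set of delay vectors. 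Your version buys an explicit, self-contained identification of $\pr^{-1}(Y)$ and sidesteps the one delicate point in the paper's argument (checking that the intersection of the grid approximations does not overshoot the preimage, which is sensitive to whether the intervals in $C$ are open or closed); the paper's version buys brevity and symmetry with the forward-image claim, at the cost of that unverified limit identity. Both are legitimate, and your affine/reset-pattern bookkeeping is exactly the kind of reasoning the paper itself uses elsewhere (e.g.\ in the proof of Lemma~\ref{lem:fuzzying}), so nothing in your argument is foreign to the setting.
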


  The arguments are similar as in the previous claim.  Let $Y$ be a set of
  runs such that $Y = \setofruns(C)$ for some template
\[
 C = \{s_0\}\times\{q_0\}\times \prod_{x\in\calX} I_{x,0} \quad \cdots \quad \{s_n\}\times\{q_n\}\times \prod_{x\in\calX} I_{x,n},
\]
i.e. $Y$ is from the generator set. By $I^k_i$ we denote an interval $[i/k,(i+1)/k]$. The set of all
cylinder templates in $\smp$ composed of basic lines of precision $k$ is
\begin{align*}
T_k &= \{s_0 I^k_{i_0} \cdots s_n I^k_{i_n} \mid s_i \in S, i_n \in \Nseto\}
\intertext{Again, it is easy to show that}
\pr^{-1}(Y) &= \bigcap_{k\in\Nset} \bigcup \{\setofruns(B) \mid  B \in T_k, \setofruns(B) \cap \pr^{-1}(Y) \neq \emptyset\} 
\end{align*}
hence, $\pr^{-1}(Y)$ is a measurable set. Again, by standard arguments we get the result for any measurable $X$.

\begin{claim}
For any measurable set $X$ of runs of $\smp$, we have $\probm_\smp(X) = \probm_\product(\pr(X))$.
\end{claim}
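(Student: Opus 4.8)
The plan is to invoke the standard uniqueness principle for measures (see, e.g., \cite{Billingsley:book}): both $X \mapsto \probm_\smp(X)$ and $X \mapsto \probm_\product(\pr(X))$ are probability measures on $(\setofruns_\smp,\sigmafield_\smp)$, and they agree on a $\pi$-system generating $\sigmafield_\smp$, hence they coincide. The cylinders $\setofruns(B)$ form such a $\pi$-system: the intersection of two cylinders is again a cylinder or empty, and by definition they generate $\sigmafield_\smp$. That $X \mapsto \probm_\product(\pr(X))$ is a measure follows from the preceding claims together with injectivity of $\pr$: $\pr$ commutes with countable unions and, being injective, sends disjoint sets to disjoint sets, and $\pr(X)\in\sigmafield_\product$ whenever $X\in\sigmafield_\smp$; its total mass is $1$ because $\probm_\product$-almost every run of $\product$ is consistent with the kernel $\kernel_\product$, and every $\kernel_\product$-consistent run starting from a state in $\mathrm{supp}(\initmeasure_\product)$ is of the form $\pr(\sigma)$ for some run $\sigma$ of $\smp$. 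So it remains to prove $\probm_\smp(\setofruns(B)) = \probm_\product(\pr(\setofruns(B)))$ for every template $B$.

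Fix $B = s_0\,I_0\,s_1\,I_1\cdots s_{n+1}$ and write $p_i = \prob(s_i)(s_{i+1})$, $f_i = \D(s_i,s_{i+1})$. First I would describe $\pr(\setofruns(B))$ explicitly. By the definitions of $\pr_0$ and $\pr_\rightarrow$, a run $z_0 z_1 z_2\cdots$ of $\product$ lies in $\pr(\setofruns(B))$ iff $z_0 = (s_0,q_0,\vec{0})$, the state component of $z_i$ equals $s_i$ for $1 \le i \le n{+}1$, and for $0 \le i \le n$ the valuation component of $z_{i+1}$ equals $\bar\nu_i + t_i\vec{1}$ for some $t_i \in I_i$, where $(\bar q_i,\bar\nu_i) = \successor(z_i)$ is the configuration reached by reading the letter $s_i$; the remaining tail may be any continuation that is itself $\pr$ of a run from $s_{n+1}$. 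Since $\kernel_\product(z,\cdot)$ only ever produces states of exactly this $\successor$-form, that tail event has conditional probability $1$ once $z_{n+1}$ has been reached, so up to a $\probm_\product$-null set $\pr(\setofruns(B))$ equals the ``Markov cylinder'' $\{z_0 = (s_0,q_0,\vec 0)\}\cap\bigcap_{i=0}^{n}\{z_{i+1}\in M_i(z_i)\}$, where $M_i(z) = \{s_{i+1}\}\times\{\bar q\}\times\{\bar\nu + t\vec1 : t\in I_i\}$ for $(\bar q,\bar\nu) = \successor(z)$. The finite-dimensional marginal formula~(\ref{eq:gssmc}), extended in the standard way from product sets to correlated events whose $i$-th slice $M_i(\cdot)$ depends measurably on the previous coordinate, then gives
\[
 \probm_\product(\pr(\setofruns(B))) \;=\; \int \initmeasure_\product(\de{z_0})\int \kernel_\product(z_0,\de{z_1})\cdots\int \kernel_\product(z_{n-1},\de{z_n})\,\kernel_\product(z_n, M_n(z_n)),
\]
with the outer integral restricted to $z_0 = (s_0,q_0,\vec 0)$ and the $i$-th restricted to $z_i \in M_{i-1}(z_{i-1})$. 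Plugging in the explicit kernel, its definition yields $\kernel_\product(z, M_i(z)) = p_i\cdot\int_0^\infty f_i(t)\,1_{I_i}(t)\,\de{t} = p_i\cdot\int_{I_i} f_i(t)\,\de{t}$ for every $z$ whose state component is $s_i$ --- a constant independent of $z$. Hence each integral collapses to a scalar factor and the nested integral telescopes to $\distribution_0(s_0)\cdot\prod_{i=0}^n p_i\cdot\int_{I_i} f_i(t)\,\de{t}$, which is exactly $\probm_\smp(\setofruns(B))$ by definition.

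The hard part will be the middle step: rigorously identifying $\probm_\product(\pr(\setofruns(B)))$ with that nested kernel integral. Two points need care. First, $\pr(\setofruns(B))$ is not a cylinder over independently chosen measurable coordinates --- the admissible valuation of $z_{i+1}$ is the translate $\bar\nu_i + I_i\vec 1$, which depends on $z_i$ through $\successor$ --- so~(\ref{eq:gssmc}) cannot be applied verbatim; instead one must use the Markov property $\probm_\product(\Phi_{i+1}\in A\mid \Phi_i,\ldots,\Phi_0)=\kernel_\product(\Phi_i,A)$ together with the tower rule, which requires that $\{(z,z') : z'\in M_i(z)\}$ be measurable --- and this is precisely the piecewise continuity of $\successor$ established in the proof of Lemma~\ref{lem:product-kernel}. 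Second, $\kernel_\product(z,\cdot)$ is concentrated on the one-dimensional curve $\{\bar\nu + t\vec1 : t\ge 0\}$, so the sets $M_i(z)$ are Lebesgue-null in $(\Rsetpo)^n$ once $n\ge 2$; this is harmless since the kernel lives on exactly these curves and the evaluation $\kernel_\product(z,M_i(z)) = p_i\int_{I_i} f_i$ is an honest consequence of the definition of $\kernel_\product$, but it does mean the argument cannot be shortcut by working with densities on $(\Rsetpo)^n$. Once these are settled, the passage from cylinders to all of $\sigmafield_\smp$ is the uniqueness theorem, completing the proof of Lemma~\ref{lem:product-correctness}.
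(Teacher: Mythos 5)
Your proposal is correct and follows essentially the same route as the paper's proof: define the pulled-back measure, check it is a probability measure, evaluate it on cylinder templates via the Markov property with the key observation that $\kernel_\product(z,M_i(z))=\prob(s_i)(s_{i+1})\cdot\int_{I_i}f_i(t)\,\de{t}$ is constant in $z$, and conclude by the extension/uniqueness theorem. Even your two ``points needing care'' (measurable dependence of the slice on the previous coordinate, and the kernel being concentrated on the diagonal of the hypercube) are exactly the points the paper handles with its sets $N_{i+1}(\cdot)$ and $\bar{N}_{i+1}(\cdot)$, so no further comparison is needed.
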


We define a new measure $\probm'_\smp$ over runs of $\smp$ by
\[
 \probm'_\smp(X) = \probm_\product(\pr(X))
\]
for any measurable set of runs $X$. First we need to show that $\probm'_\smp$ is
a probability measure, i.e. $\probm'_\smp(\emptyset) = 0$, $\probm'_\smp(\setofruns_\smp) = 1$, and
for any collection of pairwise disjoint sets $X_1,\ldots,X_n$ we have $\probm'_\smp(\bigcup_{i=1}^n X_i) = \sum_{i=1}^n \probm'_\smp(X_i)$.
The first and the second statement follows directly from the definition of $\pr$, 
the third statement follows from the fact that $\probm_\product$ satisfies this property
and that $\pr$-image of disjoint sets are disjoint sets of runs which can be easily checked.

Let $B = s_0 I_0 \cdots s_n I_n$ be a cylinder template. We show 
\[
 \probm'_\smp(\setofruns(B)) = \probm_\smp(\setofruns(B)).
\]
We obtain $\probm'_\smp = \probm_\smp$ by the extension theorem because $\probm'_\smp$ 
and $\probm_\smp$ coincide on the generators. From the definition of $\probm'_\smp$ we get the claim.
From the definition of semi-Markov process, we have
\begin{align*}
\probm_\smp(\setofruns(B)) &= \distribution_0(s_0) \cdot \prod_{i=0}^{n} \prob(s_i)(s_{i+1}) \cdot \int_{t_i \in I_i} f_i(t_i)\, dt_i \\
\intertext{where $f_i = \D(s_i, s_{i+1})$ is the density of the transition from $s_i$ to $s_{i+1}$. 
Now we turn our attention to the product. For the fixed template we define a set $N_0 = \{(s_0,q_0,\vec{0})\}$ 
and a sequence of functions $N_1,\ldots,N_n$ such that}
N_{i+1}(z_i) &= \{\pr_{\rightarrow}(z_i,(s_i,t_i,s_{i+1})) \mid t_i \in I_i \}.
\end{align*}
For an interval $I = [a,b]$ and valuation $\nu$ we define $C(I,\nu)$ to be a hypercube 
of edge length $b-a$ starting at $\nu+a$, i.e. $C(I,\nu) = \prod_{x\in\calX} [\nu(x)+a,\nu(x)+b]$.
Now for each $i \in \Nseto$ the conditional density
\begin{align*}
 \probm_\product( & \Phi_{i+1} \in N_{i+1}(\Phi_i) \mid \Phi_i) \\ 
 &= \probm_\product( \Phi_{i+1} \in \bar{N}_{i+1}(\Phi_i) \mid \Phi_i)
\intertext{%
where $\bar{N}_{i+1}(z) = \bigcup_{q_{i+1}\in Q} \{s_{i+1}\} \times \{q_{i+1}\} \times C(I_i,\bar{\nu})$ 
where $\ta(z) = (\bar{q},\bar{\nu})$.
The equality holds because $N_{i+1}(\Phi_i) \subseteq \bar{N}_{i+1}(\Phi_i)$ 
and $\probm_\product( \Phi_{i+1} \in (\bar{N}_{i+1}(\Phi_i) \setminus N_{i+1}(\Phi_i)) \mid \Phi_i) = 0$.
Indeed, the probability of hitting anything else but the diagonal of the hypercube $\bar{N}_{i+1}(\Phi_i)$ is clearly $0$.
Because $\bar{N}_{i+1}(z)$ is for each $z$ a union of basic cylinders for that we have explicit definition of the transition kernel,
we have}
 &= \sum_{q\in Q} \delta_{q\bar{q}} \cdot \prob(s_i)(s_{i+1}) \cdot \int_0^\infty f_i(t) \cdot 1_{C(I_i,\bar{\nu})}(\bar{\nu}+t) \de{t} \\
\intertext{%
where $\delta$ is the Kronecker delta, i.e., $\delta_{q\bar{q}} = 1$ if
    $q=\bar{q}$, and $0$, otherwise. Notice that here $\bar{q}$ and $\bar{\nu}$ are random variables
such that $\ta(\Phi_i) = (\bar{q},\bar{\nu})$.
The rest of the formula after $\delta_{q\bar{q}}$ does not depend on $\bar{q}$, we can write}
 &= \prob(s_i)(s_{i+1}) \cdot \int_0^\infty f_i(t) \cdot 1_{C(I_i,\bar{\nu})}(\bar{\nu}+t) \de{t}. \\
\intertext{%
Furthermore, hitting the hypercube $C(I_i,\bar{\nu})$ equals to waiting for a time from $I_i$}
 &= \prob(s_i)(s_{i+1}) \cdot \int_{I_i} f_i(t) \de{t}. 
\end{align*}
Hence, the conditioned probability is a constant random variable that does not depend on $\Phi_i$.
Finally,
\begin{align*}
 \probm'_\smp(\setofruns(B)) &= \probm_\product(\pr(\setofruns(B))) \\
&= \probm_\product(\Phi_0 \in N_0, \Phi_1 \in N_1(\Phi_0),\ldots, \Phi_n \in N_n(\Phi_{n-1})) \\
&= \probm_\product(\Phi_n \in N_n(\Phi_{n-1}) \mid \Phi_0,\ldots,\Phi_{n-1}) \; \cdot \; \; \cdots \\
& \quad \cdot \; \probm_\product(\Phi_1 \in N_n(\Phi_0) \mid \Phi_0) \cdot \probm_\product(\Phi_0 \in N_0) \\
&= \prod_{i=0}^{n} \left( \prob(s_i)(s_{i+1}) \cdot \int_{t_i \in I_i} f_i(t_i)\, dt_i \right) \cdot \distribution_0(s_0) \\
&= \probm_\smp(\setofruns(B))
\end{align*}
which concludes the proof.
\end{proof}

\section{Proofs of Section~\ref{SEC:FINISH-MAIN-THM}}
\label{app:sec-main-thm}
\subsection{Proofs of Proposition~\ref{prop:small-product}}
\begin{refproposition}{prop:small-product}
 Assume that $\mathcal{G}_{\product}$ is strongly connected and aperiodic.
 Then there exists a
region $R$ and $S\subseteq R$, $n\in\Nset$, $b>0$, and a probability measure $\mu$ such that $\mu(S)=1$ and for
every measurable $T\subseteq S$ and $z\in\Gamma_\product$ we have $\kernel_\product^{n}(z,T)>b\cdot\mu(T)$.
In other words, the set $\Gamma_\product$ of all states of the GSSMC $\product$ is $(n,b,\mu)$-small.
\end{refproposition}
\begin{proof}
 Follows easily from Proposition~\ref{prop:main-lem-period} by considering the period $p$ equal to $1$.
\end{proof}

\subsection{Proofs of Proposition~\ref{prop:main-lem-period}}

\begin{refproposition}{prop:main-lem-period}
 Assume that $\mathcal{G}_{\product}$ is strongly connected and has a period $p$.
 For every $k\in \{0,\ldots,p-1\}$ there exists a
set of states $S_k$, $n_k\in\Nset$, $b_k>0$, and a probability measure $\smallmeasure_k$ such that $\smallmeasure_k(S_k)=1$ and for
every measurable $T\subseteq S_k$ and $z\in\Gamma^k_\product$ we have $\kernel_\product^{n_k\cdot p}(z,T)>b_k\cdot\smallmeasure_k(T)$.
In other words, $\Phi_k$ is $(n_k,b_k,\smallmeasure_k)$-small.
\end{refproposition}

In the following text we formulate the definitions and lemmata needed to prove the proposition.
The actual proofs of the lemmata are in next subsections (grouped by proof techniques).

Let us fix a $k\in \{0,\ldots,p-1\}$. We show that there is a state $z^\ast \in\Gamma^k_\product$ such that for each starting state
$z\in\Gamma^k_\product$ there is a path $z\cdots z^\ast$ of length $n_k\cdot p$ that is \emph{$\delta$-wide}.
For a fixed $\delta>0$, it means that the waiting time of any transition in the path
can be changed by $\pm\delta$ without ending up in a different region in the end.
Precise definition follows.

\begin{definition}
Let $z = (s,q,\nu)$ and $z' = (s',q',\nu')$ be two states. For a waiting time $t \in \Rsetp$
we set $\trans{z}{t}{z'}$ if $\successor(z) = (q',\bar{\nu})$ and $\nu' = \bar{\nu} + t$.
We set $\trans{z}{}{z'}$ if $\trans{z}{t}{z'}$ for some $t\in\Rsetp$ and call it
a \emph{feasible transition}.

For $\delta>0$, we say that a feasible transition $\trans{z}{}{z'}$ is $\delta$-wide
if for every $x \in \calX$ relevant for $\nu'$ we have
$\fr(\nu_i(x)) \in [\delta,1-\delta]$.

Let $z_1 \cdots z_n$ be a path.
It is \emph{feasible} if for each $1 \leq i < n$ we have that $\trans{z_i}{}{z_{i+1}}$.
It is \emph{$\delta$-wide} if for each $1 \leq i < n$ we have that $\trans{z_i}{}{z_{i+1}}$ is a $\delta$-wide transition.
\end{definition}

By next lemma, we reduce the proof of Proposition~\ref{prop:main-lem-period} to finding
$\delta$-wide paths from any $z$ to the fixed $z^\ast$.

First, we recall the following notation that is necessary for analyzing the computational complexity.
Let $\pmin$ denote the smallest probability in $\smp$. Further, let us denote by
$\delay(\smp)$ the set of delay densities used in $\smp$, i.e.~$\delay(\smp)=\{\D(s,s')\mid s,s'\in S\}$.
From our assumptions imposed on delay densities we obtain the following uniform
bound $\densb>0$ on delay densities of $\delay(\smp)$. For every $f\in\delay(\smp)$
and for all $x\in[0,\maxb]$, either $f(x)>\densb$ or $f(x)=0$, and moreover,
$\int_{\maxb}^\infty f(x)dx>c$ or equals $0$.

\begin{lemma}\label{lem:fuzzying}
For every $\delta > 0$ and $n>1$ there is a probabilistic measure $\smallmeasure$
and $b >0$ such that the following holds.
For every $\delta$-wide path
$\sigma = z_0z_1 \cdots z_n$, there is a $\smallmeasure$-measurable set of states
$Z$ with $\smallmeasure(Z)=1$ such that $z_n \in Z$ and for any measurable subset $Y
\subseteq Z$ it holds $\kernel_\product^n(z_1,Y) \geq b \cdot \smallmeasure(Y)$.

Moreover, we can set $b=(\pmin \cdot \densb \cdot \delta/n)^n/\sqrt{|\calX|}%
$.
\end{lemma}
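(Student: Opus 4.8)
The plan is to ``fuzzy'' the given $\delta$-wide path: perturb the waiting times of its transitions so as to obtain a whole bundle of runs that, after $n$ steps, is spread out enough to hit every small piece $Y$ of a neighbourhood of $z_n$ with probability at least $b\cdot\smallmeasure(Y)$. I would fix the perturbation scale $\gamma=\delta/n$ and, for a perturbation vector $\epsilon=(\epsilon_i)$ indexed by the transitions of $\sigma$ with every $\epsilon_i\in[0,\gamma)$, consider the run that copies the discrete (automaton) choices of $\sigma$ but waits $t_i+\epsilon_i$ instead of $t_i$ at the $i$-th transition, where the $t_i$ are the waiting times realising $\sigma$.

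First I would show that each such perturbed run is feasible and takes \emph{exactly the same edges of $\calA$} as $\sigma$. Since every transition of $\sigma$ is $\delta$-wide, the fractional part of each relevant clock lies in $[\delta,1-\delta]$, whereas the perturbation accumulated by any clock along the run is a sum of at most $n$ of the $\epsilon_i$, hence strictly below $n\gamma=\delta$. As the $\epsilon_i$ are nonnegative, clock values only grow, so no clock falls below the relevance threshold $\maxb$; moreover no relevant clock crosses an integer (its value stays inside one unit interval), and none rises above $\maxb$ (a relevant clock of a $\delta$-wide transition has value at most $\maxb-\delta$). Hence the same guards hold at every step, the same clocks are reset, and the valuation reached at the end of the perturbed run is the affine image $z_n+M\epsilon$, where $M$ is the $\{0,1\}$-matrix whose $(x,i)$-entry is $1$ iff clock $x$ is not reset after the $i$-th transition of $\sigma$. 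Each row of $M$ is a ``suffix of ones'', so $M$ has a simple staircase structure; let $r$ be its rank.

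Next, the probability side. The set of perturbed runs with $\epsilon$ in a measurable $E\subseteq[0,\gamma)^n$ has probability at least $\pmin^{\,n}\densb^{\,n}\,\mathrm{Leb}_n(E)$: each discrete transition has probability at least $\pmin$, and on each perturbed waiting interval the delay density is at least $\densb$ (treating separately the cases where the waiting time lies in $[0,\maxb]$ and where it lies in the tail $(\maxb,\infty)$, the uniform bound $\densb$ applying in both). As the endpoint of the run is $z_n+M\epsilon$, this gives $\kernel_\product^{\,n}(z_1,Y)\ge \pmin^{\,n}\densb^{\,n}\,\mathrm{Leb}_n\!\big(M^{-1}(Y-z_n)\cap[0,\gamma)^n\big)$ for every $Y$ contained in the image polytope $Z_0:=z_n+M\cdot[0,\gamma)^n$.

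It then remains to choose $\smallmeasure$ and finish. I would take $Z$ to be a suitable closed sub-polytope of $Z_0$ still containing $z_n$ (and, for the paths built in Proposition~\ref{prop:main-lem-period}, contained in a single region), and let $\smallmeasure$ be the normalised $r$-dimensional Lebesgue measure on $Z$; crucially $\smallmeasure$ then depends only on $z_n$ and $Z$, not on the internal structure of $\sigma$, which is what makes it usable when several such fuzzying steps are concatenated. The claim reduces to a lower bound on $\mathrm{Leb}_n\!\big(M^{-1}(Y-z_n)\cap[0,\gamma)^n\big)$ in terms of $\mathcal H^r(Y)$, and this is the technical heart: by the coarea formula it equals $\tfrac1J\int_Y \mathrm{Leb}_{n-r}\!\big(M^{-1}(y-z_n)\cap[0,\gamma)^n\big)\,d\mathcal H^r(y)$ with $J=\sqrt{\det(MM^\top)}$ the $r$-dimensional Jacobian of $M$; for $y\in Z$ the fibre contains a translate of $\ker M\cap[-\gamma/4,\gamma/4)^n$ and so has $(n-r)$-volume bounded below, while $J$, $\mathcal H^r(Z)$ and that fibre volume are all computable from the staircase structure of $M$ (the gaps between consecutive reset times). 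Pushing this bookkeeping through — which is where the factor $1/\sqrt{|\calX|}$ appears, bounding the geometric distortion and using $r\le|\calX|$ — yields the inequality with $b=(\pmin\cdot\densb\cdot\delta/n)^n/\sqrt{|\calX|}$. I expect this final volume estimate to be the main obstacle; everything else (preservation of guards and resets, the elementary probability bound) is routine once $\delta$-wideness is used as above.
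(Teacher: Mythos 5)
Your proposal is correct and follows essentially the same route as the paper: perturb the waiting times of the $\delta$-wide path by up to $\delta/n$, use wideness to argue that the guards, resets and regions along the way are unchanged, observe that the endpoint is an affine image of the perturbation cube determined by the last-reset (staircase) structure, lower-bound the probability of any perturbation set by $(\pmin\cdot\densb)^n$ times its Lebesgue volume, and take $\smallmeasure$ to be the normalised lower-dimensional Lebesgue measure on a neighbourhood $Z$ of $z_n$ inside that image. The only real difference is presentational: where you defer to the coarea formula and fibre-volume bounds, the paper carries out the same volume comparison by a direct iterated-integral computation on hypercube generator sets $Y(z,\varepsilon)$, integrating the $n-d$ coordinates that are not last resets over the full width $\delta/n$ and the $d$ last-reset coordinates over the slice cut out by $Y$, which is how the constant $b=(\pmin\cdot\densb\cdot\delta/n)^n/\sqrt{|\calX|}$ is obtained without further geometric machinery.
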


Now it remains to find a state $z^\ast$ and a $\delta$-wide path to $z^\ast$ for any $z$.
Such path is composed of five parts, each having a fixed length.
The target state $z^\ast$ is then the first state where all
these paths from all starting states $z$ meet together.

In the first part, we move to a \emph{$\delta'$-separated} state for some $\delta' > 0$.
\begin{definition}\label{def:delta-separation}
  Let $\delta > 0$. We say that a set $X \subseteq \Rsetpo$ is
  $\delta$-separated if for every $x,y\in X$ either $\fr(x) =
  \fr(y)$ or $|\fr(x) - \fr(y)| > \delta$.

Further, we say that $(s,q,\nu)\in\Gamma_\product$ is
$\delta$-separated if the set
  $$
    \{ 0 \} \cup \{ \nu(x) \mid x \in \calX, x \textrm{ is relevant
for }\nu \}
  $$
  is $\delta$-separated.
\end{definition}

Now we can formulate the first part of the path precisely.

\begin{lemma}\label{lem:separation}
There is $\delta > 0$ and $n \in \Nset$ such that for any $z_1 \in
\Gamma_\product$ there is a $\delta$-wide path $z_1 \cdots z_n$ such
that $z_n$ is $\delta$-separated.

\noindent
Moreover, we can set $n = \maxb \cdot |\calX|$ and $\delta = 1 / ( 2 (|\calX| + 2))$.
\end{lemma}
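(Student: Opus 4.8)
The plan is to construct the path explicitly, using the two degrees of freedom available when extending a path out of a product state $z_1=(s,q,\nu)$: at every step we may send $\smp$ to an arbitrary successor $s'$ with $\prob(s)(s')>0$ (at least one exists, since $\prob(s)$ is a distribution), and we may pick the waiting time $t$ freely inside the support of the density $\D(s,s')$. By the standing assumptions on $\delay$, that support contains a unit-length interval $[\ell,\ell+1]$ with $\ell\in\Nseto$ on which the density is positive, so at each step we can realise an arbitrary prescribed fractional part of $t$ and can also advance time by (close to) a full unit. The only thing outside our control is which clocks $\calA$ resets on a step, but those resets are a deterministic function of the history so far.

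First I would group the $n=\maxb\cdot|\calX|$ steps into $|\calX|$ rounds of $\maxb$ steps each and fix, inside $[\delta,1-\delta]$, a family of more than $|\calX|$ pairwise $>\delta$-separated ``target'' fractional parts (possible since $\delta=1/(2(|\calX|+2))$, so e.g.\ $2\delta,4\delta,\dots$ work, and each such target also has distance $>\delta$ from $0$). The delays are then chosen so as to keep two things true. (i) After \emph{every} step, the fractional parts of all clocks relevant in the state just reached lie in $[\delta,1-\delta]$; in particular no relevant clock is ever at an integer, so the whole path is $\delta$-wide. (ii) The relevant clocks get split into ``parked'' ones, sitting on distinct targets, and ``pending'' ones, with the invariant that after round $j$ at most $|\calX|-j$ clocks are pending. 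A clock that $\calA$ never resets while it is relevant can only be handled by running its value past $\maxb$, which the generous step budget allows; a clock that $\calA$ does reset re-enters the window at value $0$, and the fractional part of the very next waiting time is then chosen to drop it onto a still-free target, while the net fractional shift that the round imposes on the already-parked clocks is kept small (if necessary, exactly $0$ modulo $1$), so their separation and the $\delta$-wideness of all transitions are preserved. When no pending clock remains, the relevant clocks of $z_n$ occupy distinct targets separated by $>\delta$ from one another and from $0$, which is precisely $\delta$-separation.

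The hard part will be executing one round, the obstacle being that the passage of time moves all clocks in lockstep — one cannot shift a single clock's fractional part in isolation — whereas the only slack is created by resets that $\calA$ imposes rather than ones we choose. The bookkeeping above is designed around this: clocks $\calA$ leaves untouched are simply run out of $[0,\maxb]$ and thereby removed from both the $\delta$-wideness and the $\delta$-separation requirements; a clock is placed on a target only on the step immediately following a reset of it, where a single delay both positions that clock and merely translates the surviving parked clocks by a controllable amount; and all fractional parts of the delays are picked to steer every relevant clock clear of the integers. The residual checks — that $\maxb$ steps per round really suffice to eject every clock that must be ejected, and that $\delta=1/(2(|\calX|+2))$ leaves enough target slots with the stated margins — are routine, if a little fiddly.
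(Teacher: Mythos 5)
Your overall strategy is the same as the paper's: use the freedom in the fractional parts of the waiting times to drop each clock, on the step right after $\calA$ resets it, onto one of a fixed family of well-separated target values, keep every relevant clock's fractional part away from the integers so the path is $\delta$-wide, and let the clocks that $\calA$ never resets age out of $[0,\maxb]$. (The paper organizes this more rigidly: all targets are the nonzero multiples of $1/r$ with $r=|\calX|+2$, and at each single step a pigeonhole argument — $|\calX|$ clocks cannot block all $r$ grid lines — yields a free line $j\neq 0$ so that shifting time by $1-j/r$ keeps grid clocks on the grid, puts newly reset clocks on the grid, and keeps everything $1/2r$ away from the integers. There are no rounds and no drift to compensate.)

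The genuine gap is your treatment of the clocks that $\calA$ never resets. You write that such a clock ``can only be handled by running its value past $\maxb$, which the generous step budget allows,'' but this is exactly the point that fixes the constant $n$, and with $n=\maxb\cdot|\calX|$ it fails. The waiting time of a step that must realise a prescribed fractional part $f$ can be forced to equal $f$ itself: if the delay density has support exactly $[0,1]$, the only admissible $t$ with $\fr(t)=f$ is $t=f<1$. So a round of $\maxb$ steps may advance total time well below $\maxb$, and even the full path of $\maxb\cdot|\calX|$ steps may advance less than $\maxb$ (e.g.\ the parking steps advance only their target value, as small as $2\delta$). A never-reset clock that stays relevant at $z_n$ is fatal, because all never-reset clocks are translated by the same total amount, so their pairwise fractional distances are whatever they were in $z_1$ and cannot be made $>\delta$ by any choice of delays. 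The paper closes this hole by making every chosen fractional part at least $1/r$, so each step advances at least $1/r$ time and $\maxb\cdot r$ steps advance at least $\maxb$; accordingly its proof establishes the lemma with $n=\maxb\cdot(|\calX|+2)$ — the constant $\maxb\cdot|\calX|$ in the main-text statement (which you inherited) is inconsistent with the paper's own appendix. To repair your argument you would need a uniform positive lower bound on the time advanced per step and a step count of the form $\maxb$ divided by that bound, which is precisely the paper's accounting; the remaining bookkeeping about drifting parked clocks would then also be simpler to discharge if, as in the paper, every step's shift is constrained to move the parked clocks from targets to targets rather than being compensated only at the end of a round.
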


At the beginning of the second part, we are in a $\delta$-separated state $z_1$
in some region $R \in V_{k'}$ for some $k' \in \{0,\ldots,p-1\}$.
For the given $k'$, we fix a region $R_1 \in V_{k'}$. Due to strong connectedness, reaching $R_1$ is possible
from any state in $V_{k'}$ in a fixed sufficiently large number of steps $n'$.
By a path of length $n'$ that is $(\delta/n')$-wide, we reach a $(\delta/n')$-separated state in $R_1$.
The separation and wideness decreases with length because the fractional values
of relevant clock may be forced to get closer and closer to each other by resets on the path to $R_1$.
The reason for the first part of the path was only to bound the wideness of the second part.

\begin{lemma}\label{lem:path}
Let the region graph $\regiongraph_\product$ be strongly connected and let $p$ be the period of $\regiongraph_\product$.
Let $k\in\{0,\ldots,p-1\}$, $\delta > 0$ and $R \in V_k$ be a region. Then there is $n \in \Nset$ such that for
every $\delta$-separated $z_1 \in \Gamma^k_\product$ there is
a $(\delta/n)$-wide path $z_1 \cdots z_n$ such that $z_n$ is $(\delta/n)$-separated and $z_n \in R$.

\noindent Moreover, we can set $n= \lfloor
\sizereg^{4\ln\sizereg-1}/6 \rfloor \cdot p$.
\end{lemma}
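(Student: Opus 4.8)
The plan is to reduce Lemma~\ref{lem:path} to two ingredients and then combine them: \textbf{(A)} a purely combinatorial statement that guarantees a walk of length \emph{exactly} $n$ in $\regiongraph_\product$ from any source region of $V_k$ to $R$, passing only through \emph{interior} regions (regions in which no relevant clock is pinned to an integer); and \textbf{(B)} a geometric \emph{realization} that turns any such region walk, started from a $\delta$-separated state, into a feasible product path that is $(\delta/n)$-wide and ends $(\delta/n)$-separated.

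\emph{Ingredient (A).} Since $\regiongraph_\product$ is strongly connected with period $p$ and $R\in V_k$, the ``reachable in exactly $p$ steps'' relation turns $V_k$ into a strongly connected aperiodic graph. Moreover, a single product transition may jump over arbitrarily many region boundaries (its time-elapse part has an arbitrary positive length, and the kernel $\kernel_\product$ is positive on a whole interval of delays), so from any vertex one can always step \emph{into} an interior region while still making progress toward $R$; hence the ``interior reachability'' relation is itself strongly connected on $V_k$ with the same period. From a quantitative version of reachability in such graphs one obtains: $n$ is a multiple of $p$, and for every region $T_0$ of $V_k$ and every residue-consistent length in the relevant range there is an exact-length interior walk $T_0,\dots,R$; the somewhat unusual value $n=\lfloor\sizereg^{4\ln\sizereg-1}/6\rfloor\cdot p$ is exactly what the constructive argument behind this count produces (it need not be optimal, only large enough and uniform in $T_0$). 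We then fix such a walk; in fact it is cleanest to build it together with its realization in Ingredient~(B), choosing the next region greedily so that both ``still reach $R$ in the remaining steps'' and the wideness condition below can be met.

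\emph{Ingredient (B).} I would construct the product path $z_1=z'_0,z'_1,\dots,z'_n$ with $z'_i$ in the $i$-th region of the walk, choosing at step $i$ an SMP successor realizing the region-graph edge (one exists, since these edges come from positive-probability transitions) and a waiting time $t_i$ with $\trans{z'_{i-1}}{t_i}{z'_i}$. The invariant carried is: the fractional parts of the relevant clocks of $z'_i$, together with $0$, fall into at most $|\calX|+1$ groups that are pairwise at distance $\ge\delta/n$ and each at distance $\ge\delta/n$ from the integers. A product step first resets a set of clocks — which only glues them into the $0$-group and never decreases separation — and then shifts every fractional part rigidly by $\fr(t_i)$ on the unit circle, placing the reset clocks at $\fr(t_i)$. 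The ordering demanded by an interior target region confines $\fr(t_i)$ to one of at most $|\calX|+1$ arcs of the circle cut out by the ``critical'' values $\{0\}\cup\{-\fr(\nu(x))\bmod 1\}$; choosing the target region so that this arc is the longest one (length $\ge 1/(|\calX|+1)$) and placing $\fr(t_i)$ at its centre keeps every relevant clock of $z'_i$ at distance $\ge 1/(2(|\calX|+1))\ge\delta/n$ from every critical value and from the integers. This makes the $i$-th transition $(\delta/n)$-wide; and because the non-reset groups are shifted rigidly (their mutual distances are preserved, hence still $\ge\delta/n$) and the reborn group lands $\ge\delta/n$ from all of them and from $0$, the invariant is re-established. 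After $n$ steps we are in $R$ with a $(\delta/n)$-separated state, as required; the slack $\delta/n$ in the statement is deliberately generous — with the longest-arc placement there is essentially no compounding loss.

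I expect Ingredient~(A) to be the main obstacle: producing exact-length walks (as opposed to ``all sufficiently long lengths''), staying inside interior regions, \emph{uniformly} in the source vertex, is what forces the quantitative detour and the bound $\sizereg^{4\ln\sizereg-1}/6$, and the careful argument behind that estimate is the technically demanding part. Ingredient~(B) is comparatively soft; its only delicate point is the bookkeeping of resets — once clocks are reset together they are identified forever, so one must note that $R$ (and each intermediate region) never asks to distinguish already-identified clocks, which is automatic since regions are closed under such identifications, and that a free arc of length $\ge 1/(|\calX|+1)$ for the reborn group is always available because at most $|\calX|+1$ groups are ever alive at once.
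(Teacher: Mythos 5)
Your overall architecture coincides with the paper's: pass to the aperiodic digraph $(V_k,E_k)$ of exact-$p$-step reachability, extract a region walk to $R$ of a length that is uniform in the source vertex, and then realize that walk from the given $\delta$-separated state by a wide path. The first genuine gap is in Ingredient (A): the quantitative heart of the ``Moreover'' clause --- that a strongly connected aperiodic digraph on $N$ vertices admits a walk of length \emph{exactly} $m$ between any two vertices for every $m\geq\lfloor N^{4\ln N-1}/6\rfloor$ --- is asserted as ``what the constructive argument produces'' but never argued. The paper proves this separately (Lemma~\ref{lem:ergo-paths}) by combining at most $\omega(N)+1$ cycle lengths at a vertex via B\'ezout's identity and bounding the negative part of the B\'ezout combination using $\omega(N)<1.39\ln N/\ln\ln N$; without some such argument the stated value of $n$ is unsupported. (Your extra requirement that the walk stay in ``interior'' regions is essentially automatic and needs no separate irreducibility argument: edges of $\regiongraph_\product$ are defined by $\kernel_\product(z,R')>0$, and the kernel integrates a density over a set of delays, so degenerate target regions never carry incoming edges.)

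The second gap is in Ingredient (B), and it is the more serious one. Your wideness bound $1/(2(|\calX|+1))$ comes from placing $\fr(t_i)$ in the \emph{longest} free arc, which presupposes the freedom to choose the next region; but the next region is dictated by the exact-length walk to $R$ (in the final steps it is forced outright), and re-planning greedily destroys the exact-length guarantee once the remaining budget drops below the threshold of the exact-length lemma. If instead you respect the prescribed region, the admissible arc for $\fr(t_i)$ may have length only equal to the current minimal gap between critical values, and centre placement then halves the separation at every step, yielding $\delta/2^n$ rather than $\delta/n$. The paper's Lemma~\ref{lem:wide-path} closes exactly this hole: because the entire region sequence, and hence the final interleaving order of all resets, is known in advance, and at most $n$ new fractional values are ever inserted between two original ones that start more than $\delta$ apart, one can pre-assign each reset clock to one of $n$ slots of width $\delta/n$ inside the appropriate gap and aim each waiting time at its slot. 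You need this pre-planning device (or an equivalent); the longest-arc heuristic cannot replace it.
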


In the third part, we squeeze the fractional values of all relevant clocks close to $0$.
We go through a fixed region path $R_1 \cdots R_k$ such that in each step
we shift the time by an integral value minus a small constant $c$.
This way the reset clocks are fractionally placed to $0$
and the other clocks decrease their fractional values only by the small constant $c$.
Since we go through a fixed region path, we have a fixed sequence of sets of clocks $\calX_1,\ldots,\calX_{k-1}$
reset in respective steps. Hence, the fractional values of clocks reset during this path have fixed relative distances.
For any starting state $z'_1$ we reach a state $z'_k$ that \emph{almost equals} a fixed ``reference'' state $z_k \in R_k$.
\begin{definition}
 Let $z,z' \in \Gamma_\product$. We say that state $z$ \emph{almost equals} state $z'$ if $z\region z'$ and
each clock relevant in $z$ has the same value in $z$ and $z'$.
\end{definition}
Notice that clocks not relevant in $z_k$ may still have different values.
\begin{lemma}\label{lem:corner}
Let $R$ be a region. For each $\delta > 0$ there is $\delta' > 0$, $n \in \Nset$ and
$z' \in \Gamma_\product$ such that for every $\delta$-separated $z_1 \in R$ there is a $\delta'$-wide path $z_1 \cdots z_n$ such that
$z_n$ and $\bar{z}$ almost equal.

\noindent
Moreover, we can set $n = \maxb + 1$ and $\delta' = \delta / (\maxb+2)$.
\end{lemma}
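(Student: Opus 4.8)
The plan is to build, for an arbitrary $\delta$-separated state $z_1\in R$, an explicit path of the claimed length $n=\maxb+1$ (that is, with $\maxb$ transitions) in which \emph{every} waiting time equals $1-c$, where $c:=\delta/(\maxb+2)$; this is precisely the ``integral value ($1$) minus a small constant $c$'' from the proof outline, and I will take $\delta':=c$. Concretely, at the $i$-th step the DTA reads the current $\smp$-state, which forces the reset of a set of clocks that depends only on the current region; we move in $\smp$ to a successor of positive probability, chosen once and for all so that the sequence of $\smp$-states along the path is the same for every starting state; and then we wait exactly the time $1-c$.

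The bulk of the work is to follow how the clock valuation evolves, using throughout that $z_1$ is $\delta$-separated, i.e.\ that every nonzero fractional part among the relevant clocks of $\nu_1$ exceeds $\delta$, and $\delta>c$. A just-reset clock, or a clock whose fractional part is $0$, has fractional part $1-c$ after one wait and then loses exactly $c$ of it at each later step; in particular the clocks reset along the path differ in fractional part only by multiples of $c$ (the ``fixed relative distances'' remark). A never-reset clock with nonzero start fraction never wraps around (each wait removes only $c<\delta$ from its fractional part), its integer part grows by exactly one per step, and --- because separation forces its start fraction to exceed $\maxb c$ --- the step at which its value first exceeds $\maxb$ is determined by its integer part inside $R$. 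Finally, all along the path the ``variable'' fractional parts (those of never-reset clocks with nonzero start fraction) stay strictly below every ``fixed'' fractional part of the form $1-\ell c$, and keep their mutual order inherited from $R$. Assembling these facts, an induction on the step index shows that each $z_i$ lies in a region determined solely by $R$; hence the whole sequence of regions, and with it the set of clocks reset at each step, is independent of the starting state.

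Once the region path is pinned down, the rest is short. After $\maxb$ waits the elapsed time is $\maxb(1-c)>\maxb-\delta$, so a never-reset clock whose value in $z_1$ is positive --- hence, by separation, greater than $\delta$ --- has value larger than $\maxb$ in $z_n$ and is irrelevant there. Thus the relevant clocks of $z_n$ are exactly the clocks reset along the path, with value $(\maxb+1-j)(1-c)$ where $j$ is the index of the last reset, together with the never-reset clocks of value $0$, with value $\maxb(1-c)$; by the fixedness of the region path, which clocks fall in which class --- and hence all these values --- depend only on $R$. I therefore take $z'$ to be the state of that fixed region which carries these values on its relevant clocks and, say, value $\maxb+1$ on the rest, so that $z_n$ and $z'$ almost equal. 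A final brief computation, using $(\maxb+1)c\le1$ and $\delta-\maxb c=2\delta'$ (both immediate from $\delta\le1$), shows that in every intermediate valuation each relevant clock has fractional part in $[\delta',1-\delta']$, so the path is $\delta'$-wide.

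I expect the main obstacle to be the claim, made in the second paragraph, that the sequence of regions --- in particular the set of clocks reset at each step --- does not depend on which $\delta$-separated $z_1\in R$ we start from. This is exactly the point at which the $\delta$-separation hypothesis becomes indispensable: without it a fractional part could wrap past $0$, an integer part could come out off by one, or a clock could cross the threshold $\maxb$ at different steps for different $z_1$, after which the paths would land in different regions and ``almost equals a \emph{fixed} $z'$'' would fail. Everything beyond that --- the fixed final clock values and the $\delta'$-wideness estimate --- is routine arithmetic with the chosen constants.
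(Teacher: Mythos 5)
Your overall strategy is the one the paper uses: fix the sequence of $\smp$-successors once and for all (so it depends only on $R$), wait ``an integer minus a small constant $c=\delta/(\maxb+2)$'' at every step, track the fractional parts to see that the region path and the reset sets are independent of the particular $\delta$-separated $z_1\in R$, and conclude that all relevant clocks of $z_n$ carry values determined by $R$ alone while every never-reset clock with positive initial value has climbed past $\maxb$. The wideness arithmetic and the constants also match. However, there is one genuine gap: you insist on waiting \emph{exactly} $1-c$ at every step, reading the ``integral value'' in the outline as necessarily being $1$. This waiting time need not be feasible. The delay density $\D(s_i,s_{i+1})$ is only assumed to be positive on an interval $I=[\ell,u]$ or $[\ell,\infty)$ with integer endpoints, so its support may be, say, $[2,3]$ or $[3,\infty)$, in which case the density vanishes at $1-c$ and on a whole neighbourhood of it. The resulting ``path'' then carries zero probability mass, and Lemma~\ref{lem:fuzzying} --- which is what ultimately converts the $\delta'$-wide path into the lower bound $\kernel^n_\product(z_1,Y)\geq b\,\smallmeasure(Y)$ needed for smallness --- gives nothing. (This is exactly why Lemma~\ref{lem:separation} explicitly argues that a waiting time with the prescribed fractional part ``must be indeed possible because the interval where the density function is positive has integral bounds.'')

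The repair is exactly what the paper's proof does: for the chosen successor $s_{i+1}$, let $b_i$ be an integer such that $b_i-c$ lies inside the support of $\D(s_i,s_{i+1})$ (the paper takes $b_i$ to be the integer upper bound $u$ of the support, or $\ell+1$ when the support is unbounded above), and wait $t_i=b_i-c$. Since $b_i$ is determined by $s_i,s_{i+1}$ and hence by $R$, and since only the fractional parts of the waiting times enter your induction on regions and your wideness estimate, the entire remaining argument goes through verbatim; the only change is that the integer parts of never-reset clocks grow by $b_i\geq 1$ per step instead of by exactly $1$, which only helps the claim that such clocks exceed $\maxb$ by the end (the elapsed time is still at least $\maxb(1-c)>\maxb-\delta$). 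With that substitution your proof coincides with the paper's, which phrases the ``fixed region path, fixed final values'' conclusion as an induction on a parametrized almost-equality $\calX_i$-equal rather than on explicit fractional-part bookkeeping, but the content is the same.
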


In the fourth part, we somewhat repeat the first part and prepare for the fifth part.
We reach a $\delta$-separated state that is almost equal to a fixed state $z_l \in R_l$.
Again, we do it by a $\delta$-wide path.
\begin{lemma}\label{lem:almost-equal-separation}
Let $z$ be a state. There is a $\delta > 0$, $n\in\Nseto$, and $z'$
such that for any state $z_1$ almost equal to $z$ there is a $\delta$-wide path
$z_1 \cdots z_n$ such that $z_n$ is $\delta$-separated and $z_n$ almost equals $z'$.

\noindent
Moreover, we can set $n = \maxb \cdot |\calX|$ and $\delta = 1 / ( 2 (|\calX| + 2))$.
\end{lemma}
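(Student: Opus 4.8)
The plan is to obtain Lemma~\ref{lem:almost-equal-separation} as a mild strengthening of Lemma~\ref{lem:separation}. The only new content is the clause ``$z_n$ almost equals a fixed state $z'$'': if the separation procedure of Lemma~\ref{lem:separation} is run on a state whose relevant clocks are pinned to those of the fixed state $z$, its entire course --- and in particular the values of the clocks relevant at its endpoint --- is determined by $z$ alone. So I would first make that procedure \emph{symbolic}, i.e.\ depending only on the current location, the set of relevant clocks, and their values, and then simply read off $z'$ from the (fixed) symbolic data of $z_n$.

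First I would record two elementary facts. (a) If $z_1$ almost equals $z$, then $z_1$ and $z$ share the location, the set of relevant clocks, and the values of the relevant clocks; only clocks irrelevant in $z$ --- all exceeding $\maxb$ --- may differ, and along any feasible continuation such a clock stays above $\maxb$ (hence irrelevant) unless it is reset, in which case it drops to $0$ and thereafter evolves deterministically with the chosen delays. (b) Guard satisfaction is symbolic: for a clock with value above $\maxb$, every constraint $x<c$ or $x\le c$ with $c\le\maxb$ fails and every constraint $x>c$ or $x\ge c$ holds, so the edge taken by $\calA$ on reading a letter, together with its reset set, is a function of the symbolic part of the current state; and, by the standing lower-bound assumption on delay densities, every transition admits a feasible delay with any prescribed fractional part, so the delay can likewise be chosen from the symbolic part.

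Next I would invoke Lemma~\ref{lem:separation} for $z_1$, but fix all of its nondeterministic choices --- the $\smp$-successor taken at each step and the concrete feasible delay --- as functions of the symbolic part of the current state only; this is legitimate because $\delta$-width and $\delta$-separation (Definition~\ref{def:delta-separation}) refer solely to locations, relevant clocks and their fractional parts, so the argument behind Lemma~\ref{lem:separation} never needs the values of irrelevant clocks. This yields a $\delta$-wide path $z_1\cdots z_n$ of length $n=\maxb\cdot|\calX|$ ending in a $\delta$-separated $z_n$, with $\delta=1/(2(|\calX|+2))$. A straightforward induction on the step index, using (a) and (b), shows that the symbolic part of each $z_i$ is determined by the symbolic part of $z_1$, hence by $z$; in particular the location and the values of all clocks relevant in $z_n$ depend only on $z$. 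Taking $z'$ to be the state with that location, those relevant-clock values, and value $\maxb+1$ on the remaining clocks, we get $z_n \region z'$ with agreement on every relevant clock, i.e.\ $z_n$ almost equals $z'$, and $z'$ depends only on $z$, as required; the path is then available for use in the proof of Proposition~\ref{prop:main-lem-period} via Lemma~\ref{lem:fuzzying}.

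The only real obstacle is the ``symbolic'' claim, namely that the construction behind Lemma~\ref{lem:separation} can be carried out without ever inspecting the values of the irrelevant clocks and with every choice depending only on the symbolic data. Since the conclusion of Lemma~\ref{lem:separation} is phrased purely in terms of relevant clocks, this should require nothing more than rereading its proof with that invariant in mind, and the quantitative bounds are then inherited verbatim.
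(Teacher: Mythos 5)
Your proposal is correct and follows essentially the same route as the paper: reduce to Lemma~\ref{lem:separation} and observe that, since $\delta$-wideness and $\delta$-separation mention only relevant clocks, the relevant part of the run is determined by the relevant part of the start, so the endpoint is almost equal to a fixed $z'$. The only cosmetic difference is that the paper keeps Lemma~\ref{lem:separation} as a black box (it fixes one witness path from a reference state and replays its control states and delays, using that almost equality is preserved under identical letter-and-delay steps), whereas you re-open its construction to make the choices canonical functions of the symbolic data — both hinge on the same invariant and yield the same bounds.
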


In the fifth part, we go through a fixed region path $R_l \cdots R_{l+m}$
such that each clock not relevant in $R_l$ is reset during this path and
hence we reach a fixed $z^\ast \in R_{l+m}$.
Such path exists from the assumption that it is possible to reset every clock.
The (arbitrary) values of clocks not relevant in $R_l$ do not influence the behavior
of the timed automaton before their reset and we indeed follow a fixed region path.
Furthermore, we can stretch the path to arbitrary length so that the length
of the whole path is a multiple of the period $p$.
Again, the fifth part of the path is $\delta/n''$-wide where $n''$ is the number of steps.
\begin{lemma}\label{lem:synchronization}
Let the region graph $\regiongraph_\product$ be strongly connected.
Let $\delta > 0$. Let $z$ be a $\delta$-separated state. Then there is $n\in\Nset$, such
that for any $n' \geq n$ there is a state $z^\ast$ such that the following holds.
For any state $z_1$ almost equal to $z$ there is a $(\delta/n)$-wide path
$z_1 \cdots z_{n'}$ such that $z_{n'} = z^\ast$.

\noindent
Moreover, we can set $n = \sizereg \cdot |\calX|$.
\end{lemma}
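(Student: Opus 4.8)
The plan is to drive every starting state, along one and the same fixed walk through $\regiongraph_\product$ equipped with one and the same sequence of waiting times, to a common endpoint $z^\ast$; the walk will be chosen so that it resets every clock, which is what forces all the irrelevant coordinates to a fixed value by the end.

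First I would record what ``almost equal to $z$'' gives us. If $z_1$ is almost equal to $z$, then $z_1\region z$ and $z_1$ agrees with $z$ on every clock relevant for $z$; the only coordinates in which different $z_1$'s disagree are clocks with value above $\maxb$, and such clocks influence neither the region nor which edge of $\calA$ fires. Hence, if we follow a fixed region walk and at the $i$-th step use a waiting time $t_i$ that depends only on $z$ and $i$, an easy induction shows that the \emph{relevant part} of the $i$-th configuration is the same for all starting states $z_1$: the fired edge is determined by the current region (hence uniform) and resets a fixed set of clocks, and adding the fixed time $t_i$ preserves uniformity of the relevant part (irrelevant clocks only grow and stay irrelevant). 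In particular, once the walk has reset every clock, the \emph{entire} valuation is a fixed vector --- each clock equals a fixed reset value plus a fixed sum of waiting times --- so the final configuration $z_{n'}=z^\ast$ does not depend on $z_1$, which already yields the ``there is a state $z^\ast$'' part.

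Next I would build the region walk and the waiting times together. Since $\regiongraph_\product$ has no growing clocks, for each clock $x$ there is an edge of $\regiongraph_\product$ whose product transition resets $x$, and (after a reset, a generic positive waiting time keeps every relevant clock non-integral) this edge can be taken into a region in which no relevant clock is integral. Using strong connectedness I would, for each clock in turn, travel --- through regions in which no relevant clock is integral --- to the source of such a reset edge and take it, obtaining a walk of length $\calO(\sizereg\cdot|\calX|)$ (a careful count gives the stated $n=\sizereg\cdot|\calX|$) that resets all clocks; for a larger prescribed length $n'\ge n$ I would pad by following further out-edges into such regions, which is always possible and, since it merely extends a fixed walk, keeps the endpoint fixed. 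To realise this walk concretely and $(\delta/n)$-widely, note that at each step the feasible waiting times contain an interval of length at least $1$ (support intervals of the densities in $\delay(\smp)$ have integer endpoints and positive length, or are unbounded), and along such an interval the fractional parts $\fr(\cdot+t)$ of the at most $|\calX|$ relevant clocks all sweep $[0,1)$ at unit speed; hence the set of $t$ for which some relevant clock lands within $\delta/n$ of an integer has measure $\le 2|\calX|\cdot\delta/n<1$ per unit interval (using $\delta\le\tfrac12$ and $n=\sizereg|\calX|$). So a good $t_i$ exists, and it can be taken as a function of $z$ and $i$ only, because the relevant part of $z_i$ is uniform; this gives a $(\delta/n)$-wide path $z_1\cdots z_{n'}$ ending in the fixed $z^\ast$. (No aperiodicity of $\regiongraph_\product$ is needed, since only a walk to \emph{some} fixed vertex of the prescribed length is required.)

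The step I expect to be the main obstacle is precisely the coordination in the previous paragraph: the walk must reach a reset edge for every clock while staying in regions from which it can be realised with all intermediate configurations bounded away from integral clock values, and it must do so within the length bound $\sizereg\cdot|\calX|$ regardless of the (possibly periodic) structure of $\regiongraph_\product$. This is the same difficulty that underlies Proposition~\ref{prop:small-product}, and it is resolved by combining the absence of growing clocks with the standard properties of the region construction --- in particular, that from a region in which no relevant clock is integral one can always make a wide step, and that resets can be performed from (and while passing through) such regions.
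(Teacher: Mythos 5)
Your overall skeleton is the same as the paper's: fix one walk through $\regiongraph_\product$ that resets all the clocks not already pinned down by almost-equality, use one and the same sequence of control states and waiting times for every almost-equal start, propagate a parametrized equality (``relevant part uniform'' / the paper's $\calY$-equality) along the walk to get a common endpoint $z^\ast$, and pad to reach any prescribed length $n'\geq n$. That part of your argument is sound and matches the paper.

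The gap is in how you certify that the walk can be realised $(\delta/n)$-widely, and it is exactly the step you flag as ``the main obstacle''. Your measure argument shows that within a feasible window of waiting times of length $\geq 1$, the set of $t$ that brings some relevant clock within $\delta/n$ of an integer has measure less than $1$, so a wide transition to \emph{some} successor region exists. But your construction must steer the walk into \emph{specific} regions (the sources of the reset edges for each clock, reached via a region-graph path of length $\leq\sizereg$ guaranteed by strong connectedness), and the set of waiting times that lands in a \emph{prescribed} successor region is not a unit-length interval: its width is governed by the gaps between the fractional parts of the relevant clocks in the current state, and can be arbitrarily small. This is precisely where the $\delta$-separation hypothesis must enter, and your proposal never uses it. The paper resolves this with Lemma~\ref{lem:wide-path}: starting from a $\delta$-separated state, any feasible region path of length $n$ can be realised $(\delta/n)$-widely, because separation guarantees each prescribed-region window has width bounded below, at the cost of the separation degrading from $\delta$ to $\delta/n$ as reset clocks are squeezed between existing fractional parts. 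Appealing to ``standard properties of the region construction'' does not substitute for this: the statement ``from a region with no integral relevant clock one can always make a wide step'' gives you a wide step somewhere, not a wide step along the region path you need. To close the proof you should replace your per-step measure estimate by (a restatement of) Lemma~\ref{lem:wide-path}, applied to a feasible path that visits each reset region, which is how the paper obtains both the $(\delta/n)$-wideness and the bound $n=\sizereg\cdot|\calX|$.
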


Now we can finally prove the main proposition.

\begin{proof}[of Proposition~\ref{prop:main-lem-period}]
We fix $k\in \{0,\ldots,p-1\}$. By Lemmata~\ref{lem:separation},
\ref{lem:path}, \ref{lem:corner}, \ref{lem:almost-equal-separation},
and \ref{lem:synchronization} we get for any state
$z_1\in\Gamma^k_\product$ a $\delta$-wide path $z_1 \cdots z_x$ of
length $x = n_k \cdot p$ such that
\begin{align*}
 x &= \maxb \cdot |\calX| + M + (\maxb + 1) + \maxb \cdot |\calX| + \sizereg\cdot|\calX| + c \leq 2 \cdot M \\
 \delta &= \frac{1}{((\maxb+2) \cdot 2(|\calX|+2) \cdot M} \geq \frac{1}{4 \cdot \maxb \cdot |\calX| \cdot M}
\end{align*}
where $M = \lfloor (\sizereg^{4 \ln \sizereg-1})/6 \rfloor \cdot
p$ and $c < p$ is the constant such that $x$ is a multiple of $p$
(we stretch the path by $c$ in the fifth part). Therefore, by
Lemma~\ref{lem:fuzzying}, $\Phi_k$ is
$(n_k,b_k,\smallmeasure_k)$-small for $n_k \leq \lfloor \sizereg^{4
\ln \sizereg-1} \rfloor \cdot p \leq \lfloor \sizereg^{4
\ln \sizereg} \rfloor=:r$ and
\begin{align*}
b_k &= (\pmin \cdot \densb \cdot \delta/x)^x/\sqrt{|\calX|}
\\
  &\geq \left( \frac{\pmin \cdot \densb}{8 \cdot \maxb \cdot |\calX| \cdot M^2} \right)^{2M} \cdot \frac{1}{\sqrt{|\calX|}}
\\
  &\geq \left( \frac{\pmin \cdot \densb}{M^3} \right) ^{2M}
  \geq \left( \frac{\pmin \cdot \densb}{%
  r} \right) ^{r}
\end{align*}
Notice that in the calculation above we ignore the cases of
trivially small region graphs with less than $3$ vertices.
\end{proof}

From the proof we directly get the following bound on constants.

\begin{corollary}
 Assume that $\mathcal{G}_{\product}$ is strongly connected and has a period $p$.
 For every $k\in \{0,\ldots,p-1\}$ we have $\Phi_k$ is $(n,b,\smallmeasure)$-small
 for some $n \leq r$ divisible by $p$,
and $b = ( \pmin \cdot \densb/ r ) ^{r}$, where $r = \lfloor \sizereg^{4 \ln \sizereg} \rfloor$.
\end{corollary}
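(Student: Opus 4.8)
The plan is to read the statement straight off the proof of Proposition~\ref{prop:main-lem-period}: that proof already constructs, for each $k$, the data witnessing smallness of $\Phi_k$, so the only remaining work is to bound its length by $r$ (and to make it a multiple of $p$) and to bound its constant from below by the $k$-independent value $(\pmin\cdot\densb/r)^{r}$. Concretely I would (a) add the lengths of the five sub-paths supplied by Lemmata~\ref{lem:separation}, \ref{lem:path}, \ref{lem:corner}, \ref{lem:almost-equal-separation}, and~\ref{lem:synchronization}; (b) track the wideness $\delta$ of their concatenation; and (c) feed the resulting $\delta$-wide path into Lemma~\ref{lem:fuzzying} and coarsen the constant it returns.

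For (a), fix $k\in\{0,\dots,p-1\}$. The five lengths are $\maxb|\calX|$, $M:=\lfloor\sizereg^{4\ln\sizereg-1}/6\rfloor\cdot p$, $\maxb+1$, $\maxb|\calX|$, and, for the fifth sub-path --- which Lemma~\ref{lem:synchronization} permits to have any length $\geq\sizereg|\calX|$ --- a length chosen so that the total $x$ is the least multiple of $p^{2}$ above the sum of the first four; then $n:=x/p$ is an integer multiple of~$p$. Once $\sizereg$ is not trivially small, $M$ dominates every other summand, so $x\leq 2M$ and hence $n=x/p\leq 2M/p=2\lfloor\sizereg^{4\ln\sizereg-1}/6\rfloor\leq\lfloor\sizereg^{4\ln\sizereg-1}\rfloor\leq\lfloor\sizereg^{4\ln\sizereg}\rfloor=r$, using $p\leq\sizereg$. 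For (b), the wideness of a concatenation is the minimum of the widenesses of its parts, and each part degrades at worst proportionally to its own length, so the ``moreover'' clauses of the five lemmata combine into $\delta\geq 1/\bigl((\maxb+2)\cdot 2(|\calX|+2)\cdot M\bigr)\geq 1/(4\maxb|\calX|M)$.

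For (c), since all these paths end in one common state $z^{\ast}$, Lemma~\ref{lem:fuzzying} (applied with this $\delta$ and this length $x$) produces a probability measure $\smallmeasure_k$ concentrated on a small set $S_k\ni z^{\ast}$ and a constant $b_k=(\pmin\cdot\densb\cdot\delta/x)^{x}/\sqrt{|\calX|}$ with $\kernel_\product^{x}(z,T)\geq b_k\,\smallmeasure_k(T)$ for every $z\in\Gamma^k_\product$ and measurable $T\subseteq S_k$; since $\smallmeasure_k(S_k)=1$ this extends to all measurable $T$, so (recalling $x=np$) $\Phi_k$ is $(n,b_k,\smallmeasure_k)$-small. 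Now bound $b_k$: $\delta/x\geq 1/(8\maxb|\calX|M^{2})$, then $8\maxb|\calX|M^{2}\leq M^{3}$ and $x\leq 2M$, and finally $2M\leq r$, $M^{6M}\leq r^{r}$ and $\pmin\densb\leq 1$ (all valid once $\sizereg\geq 3$), whence $b_k\geq(\pmin\cdot\densb/M^{3})^{2M}\geq(\pmin\cdot\densb/r)^{r}=:b$. Since decreasing the constant preserves smallness (Definition~\ref{def:small}), $\Phi_k$ is also $(n,b,\smallmeasure_k)$-small with a $b$ independent of~$k$, which is the claim. I do not expect a genuine obstacle here: the mathematical content lives entirely in Proposition~\ref{prop:main-lem-period}, and what is left is the routine (if slightly tedious) checking of the coarse numeric inequalities above for $\sizereg\geq 3$, together with the observation that uniformity in $k$ is automatic because every $n_k\leq r$ and every $b_k\geq(\pmin\densb/r)^{r}$.
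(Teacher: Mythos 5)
Your proposal is correct and is essentially the paper's own argument: the corollary is read off the proof of Proposition~\ref{prop:main-lem-period} by summing the five sub-path lengths, tracking the wideness, feeding the result into Lemma~\ref{lem:fuzzying}, and coarsening $(\pmin\cdot\densb\cdot\delta/x)^{x}/\sqrt{|\calX|}$ through $(\pmin\cdot\densb/M^{3})^{2M}$ down to $(\pmin\cdot\densb/r)^{r}$ exactly as you do (the paper likewise excludes trivially small region graphs). Your only deviation is cosmetic: you secure ``$n$ divisible by $p$'' by stretching the fifth sub-path to make the total a multiple of $p^{2}$, whereas the paper counts $n$ in $\product$-steps so divisibility by $p$ is automatic; both readings are fine.
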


\subsubsection{Proof of Lemma~\ref{lem:fuzzying}}

\begin{reflemma}{lem:fuzzying}
For every $\delta > 0$ and $n>1$ there is a probabilistic measure $\smallmeasure$
and $b >0$ such that the following holds.
For every $\delta$-wide path
$\sigma = z_0z_1 \cdots z_n$, there is a $\smallmeasure$-measurable set of states
$Z$ with $\smallmeasure(Z)=1$ such that $z_n \in Z$ and for any measurable subset $Y
\subseteq Z$ it holds $\kernel_\product^n(z_1,Y) \geq b \cdot \smallmeasure(Y)$.

Moreover, we can set $b=(\pmin \cdot \densb \cdot \delta/n)^n/\sqrt{|\calX|}%
$.
\end{reflemma}

\begin{proof}
Recall that we assume that all delays' densities are bounded by some
$\densb>0$ in the following sense. For every $d\in\delay$ %
and for all $x\in[0,B]$, $d(x)>\densb$ or equals $0$. Similarly, $\int_{B}^\infty d(x)dx>\densb$ or equals $0$.

Let $\sigma=z_0z_1\cdots
z_n=(s_0,q_0,\nu_0)(s_1,q_1,\nu_1)\cdots(s_n,q_n,\nu_n)$. For $1\leq
i\leq n$, let $t_i$ be the waiting times such that
$\trans{z_{i-1}}{t_i}{z_{i}}$, and let $X_i=\{x\in\calX\mid
\successor(z_{i-1}) = (q,\nu),\; \nu(x)=0\}$ be the set of clocks reset right before waiting $t_i$.

For $\varepsilon>0$, we define an $\varepsilon$-neighbourhood of
$\sigma$ to be the set of paths of the form
$\trans{z_0}{t'_1}(s_1,q_1,\nu'_1)\cdots\trans{}{t'_n}{(s_n,q_n,\nu'_n)}$
where $t'_i\in (t_i-\varepsilon,t+\varepsilon)$. Due to
$\delta$-separation of $\sigma$, all paths of its
$\delta/n$-neighbourhood are feasible. Considering this
$\delta/n$-neighbourhood, the set of all possible $\nu'_n$s forms
the sought set $Z$. We may compute this set as follows. We define a
mapping $\alpha_\sigma:(-\varepsilon,\varepsilon)^{n}\to \Rsetpo^{|\calX|}$
so that $\alpha_\sigma(\zeta_1,\ldots,\zeta_n)=\nu'_n$ for $t'_i=t_i+\zeta_i$. This can
be done by setting
$\alpha_\sigma(\zeta_1,\ldots,\zeta_n)(x)=\sum_{r_x}^n(t_i+\zeta_i)$, where the clock
$x$ was reset in the $r_x$th step for the last time in $\sigma$, i.e.~$r_x=\max\{i\mid x\in X_i\}$.
Obviously, $\alpha_\sigma$ is a restriction of a linear mapping. Therefore,
$\alpha_\sigma((-\varepsilon,\varepsilon)^{n})$ is an open
rhombic hypercube of a dimension $1\leq d\leq |\calX|$.
Due to the last summand, it has a positive
$\smallmeasure_d$-measure. (Here $\smallmeasure_d$ is the standard
Lebesgue measure on the $d$-dimensional affine space that contains
$\alpha_\sigma((-\varepsilon,\varepsilon)^{n})$. Equivalently, it is
the $d$-dimensional Hausdorff measure multiplied by the volume of
unit $d$-ball.)

We set $Z:=\alpha_\sigma((-\delta/2n,\delta/2n)^n)$. Thus, for every $z\in Z$ there is a $\delta/2$-separated path $\tau$ from $z_0$ to $z$. We need to construct $b>0$ such that for all $Y\subseteq Z$, we have $\kernel_\product^n(z_0,Y)\geq b \smallmeasure_d(Y)/\smallmeasure_d(Z)=:b\smallmeasure(Y)$. It is sufficient to prove this for generators of the same topology. We pick the generators as follows. For $z\in Z$ and $\varepsilon<\delta/2n$, we denote $Y(z,\varepsilon)=\alpha_\sigma((-\delta/n,\delta/n)^n)\cap C^{|\calX|}_{z,\varepsilon}$, where $C^{|\calX|}_{z,\varepsilon}$ is a hypercube with dimension $|\calX|$ and size $\varepsilon$ centered in $z$. Clearly, the set of all $Y(z,\varepsilon)\subseteq Z$ form a generator set. We now construct $b>0$ so that for every such $Y:=Y(z,\varepsilon)$ we have
$\kernel_\product^n(z_0,Y)\geq b \smallmeasure_d(Y)/\smallmeasure_d(Z)$. To this end, we prove later on that
\begin{equation}\label{eq:rozpliz}
\kernel_\product^n(z_0,Y)\quad\geq\quad (\pmin\densb/n)^n\delta^{n-d}\varepsilon^d
\end{equation}
Since $\smallmeasure_d(Y)\leq \sqrt{|\calX|}%
\cdot (\varepsilon)^d$ and $\smallmeasure_d(Z)\geq (\delta/n)^d$, we can set $b=(\pmin\densb\delta/n)^n/\sqrt{|\calX|}%
$.

It remains to prove (\ref{eq:rozpliz}).
Let $k_1,\ldots,k_d$ be the elements of $\{r_x\mid x\in\calX\}$ in the increasing order, and $\ell_1,\ldots,\ell_{n-d}$ the remaining numbers in $\{1,\ldots,n\}$. Note that since $\alpha$ is linear, $\alpha^{-1}(Y)$ is $\lambda_n$-measurable ($\lambda_n$ denotes the standard Lebesgue measure on $\Rset^n$). Intuitively, if we want to make clock $x$ hit $Y$, it is sufficient to adjust the waiting time after the last reset of $x$. Let $Y|X_i$ denote the projection of $Y$ to coordinates in $X_i$ (setting other components to zero). The first equation makes use of the facts that (1) all components of each point of $Y|X_i$ have the same value (because $Y$ is a subset of image of $\alpha_\sigma$) and (2) when factoring out all (identical) components but one of each $X_i$, the image of $Y$ is a $d$-hypercube (due to the intersection with $C^{|\calX|}_{z,\varepsilon}$), so we can use projections independently.

\begin{align*}
\kernel_\product^n(z_0,Y) \;\; \geq & \;\; \int_{\{\vec{x}\mid\alpha_\tau(\vec{x})\in Y\}} (\pmin\densb)^n d\lambda_n\ \\=&\;\; (\pmin\densb)^n
\int_{t_{\ell_1}-\delta/2n}^{t_{\ell_1}+\delta/2n} \cdots
\int_{t_{\ell_{n-d}}-\delta/2n}^{t_{\ell_{n-d}}+\delta/2n} \\&\;\;
\int_{\alpha_\tau(0,\ldots,0,d_{k_d},\ldots,d_n)|X_{k_d}\in Y|X_{k_d}} \cdots
\int_{\alpha_\tau(0,\ldots,0,d_{k_1},\ldots,d_n)|X_{k_1}\in Y|X_{k_1}} \\&\;\;
d\zeta_{k_1}\cdots d\zeta_{k_d}d\zeta_{\ell_{n-d}}\cdots d\zeta_{\ell_1}\qquad\quad\ \\=&\;\; (\pmin\densb)^n
\underbrace{\int_{-\delta/2n}^{\delta/2n} \cdots
\int_{-\delta/2n}^{\delta/2n}}_{n-d}
\underbrace{\int_{-\varepsilon/2n}^{\varepsilon/2n} \cdots
\int_{-\varepsilon/2n}^{\varepsilon/2n}}_{d} \\& \;\;
d\zeta_{k_1}\cdots d\zeta_{k_d}d\zeta_{\ell_{n-d}}\cdots d\zeta_{\ell_1}\qquad\quad\ \\= &\;\;
(\pmin\densb)^n(\delta/n)^{n-d}(\varepsilon/n)^d= (\pmin\densb/n)^n\delta^{n-d}\varepsilon^d \qed
\end{align*}
\end{proof}

\subsubsection{Proofs of Lemmata~\ref{lem:separation} and \ref{lem:almost-equal-separation}}

\begin{reflemma}{lem:separation}
There is $\delta > 0$ and $n \in \Nset$ such that for any $z_1 \in
\Gamma_\product$ there is a $\delta$-wide path $z_1 \cdots z_n$ such
that $z_n$ is $\delta$-separated.

\noindent
Moreover, we can set $n = \maxb \cdot (|\calX| + 2)$ and $\delta = 1 / ( 2 (|\calX| + 2))$.
\end{reflemma}

\begin{proof}
To simplify the argumentation we introduce a notion of a \emph{$r$-grid} that marks $r$ distinguished
points (called \emph{lines}) on the $[0,1]$ line segment. In the proof we show that we can place
fractional values of all relevant clocks on such distinguished points.
Let $r \in \Nset$.
We say that a set of clocks $\calY \subseteq \calX$ is \emph{on $r$-grid} in $z$ if
for every $x \in \calY$ relevant in $z$ we have $\fr(\nu(x)) = n/r$ for some $0 \leq n < r$.
For $0 \leq n < r$, we say that the \emph{$n$-th line of the $r$-grid is free} in $z$ if there
is no relevant clock in the $1/2k$-neighborhood of the $n$-th line,
i.e. for any relevant $x\in\calX$ we have $\fr(\nu(x)) \not\in (n/r - 1/2r,n/r + 1/2r)$.

Let $r = |X|+2$. We inductively build a $1/2r$-wide path $z_1\cdots z_n$ where $n = \maxb \cdot r$.
The set $\emptyset$ is on $r$-grid in $z_1$. We show that if a set $\calY_i$
is on $r$-grid in state $z_i$, there is
a $1/2k$-wide transition to $z_{i+1}$ such that $(\calY_i \cup \mathcal{Z})$ is on $r$-grid in $z_{i+1}$
where $\mathcal{Z}$ is the set of clocks newly reset in $z_i$.
There are $|X|+2$ lines on the grid and only $|X|$ clocks. At least two of these lines must be free.
Let $j \neq 0$ be such a line. Let $t$ be a waiting time and $z_{i+1}$ a state
such that $\fr(t) = 1 - j/r$ and $\trans{z_i}{t}{z_{i+1}}$. Such waiting time must be indeed possible
because the interval where the density function of any transition is positive has integral bounds.
The transition $\trans{z_i}{t}{z_{i+1}}$ is $1/2r$-wide because the line $j$ is free in $z_i$.
Furthermore, the set $(\calY_i \cup \mathcal{Z})$ is on $r$-grid in $z_{i+1}$ because the fractional value of
each clock that was previously on $r$-grid was changed by a multiple of $1/r$. The newly reset clocks have
fractional value $1 - j/r$ which is again a multiple of $1/r$.

Next, we show that $\calX$ is on $r$-grid in $z_n$. Clocks reset in this path on $r$-grid in $z_n$.
The remaining clocks are all irrelevant because the path of $\maxb \cdot r$ steps takes at least $\maxb$ time units.
Indeed, each transition in this path takes at least $1/r$ time unit.
According to the definition, $\calX$ is on $r$-grid in $z_n$.
Hence, the state $z_n$ is $1/r$-separated because the distance between two adjacent grid lines is $1/r$.
By setting $\delta = 1/2r$ we get the result.
\end{proof}

\begin{reflemma}{lem:almost-equal-separation}
Let $z$ be a state. There is a $\delta > 0$, $n\in\Nseto$, and $z'$
such that for any state $z_1$ almost equal to $z$ there is a $\delta$-wide path
$z_1 \cdots z_n$ such that $z_n$ is $\delta$-separated and $z_n$ almost equals $z'$.

\noindent
Moreover, we can set $n = \maxb \cdot |\calX|$ and $\delta = 1 / ( 2 (|\calX| + 2))$.
\end{reflemma}
\begin{proof}
Let us fix a state $z_1$ almost equal to $z$. By Lemma~\ref{lem:separation} we get
a $\delta$-wide path $z_1 \ldots z_n$ such that $z_n$ is $\delta$-separated.

Notice that for a fixed state $z$, control state $s$ and time $t$ there is a unique
location $q$ and valuation $\nu$, hence a unique state $z' = (s,q,\nu)$ such that $\trans{z}{t}{z'}$.

Let $t_1,\ldots,t_{n-1}$ be the waiting times and $s_1,\ldots,s_n$ the control states
on the path $z_1,\ldots,z_n$.
For any $\bar{z}_1$ almost equal to $z_1$ we can build using the same waiting times and control states
a path $\bar{z}_1 \cdots \bar{z}_n$.
It is easy to see that for two almost equal states $z,\bar{z}$ a control state $s$ and a time $t>0$ the
states $z',\bar{z'}$ determined by $s$ and $t$ are also almost equal.
Inductively, we get that $\bar{z}_n$ is almost equal to $z_n$.
It also holds that $\bar{z}_n$ is $\delta$-separated since $\delta$-separation is defined
only with respect to relevant clocks.
\end{proof}

\subsubsection{Proofs of Lemmata~\ref{lem:path} and \ref{lem:synchronization}}

For the proof of Lemma~\ref{lem:path} we need the following result from graph theory.

\begin{lemma}\label{lem:ergo-paths}
Let $G$ be a strongly connected and aperiodic oriented graph with
$N>2$ vertices. Then for each $n \geq \lfloor N^{4\ln N-1}/6 \rfloor$,
there is a path of length precisely $n$ between any two vertices of
$G$.
\end{lemma}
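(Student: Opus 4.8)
The plan is to reduce the claim to a classical ``postage-stamp'' fact about numerical semigroups. The idea is that a walk of a prescribed length $n$ from $u$ to $v$ can be obtained by taking a fixed short walk from $u$ to $v$ that passes through a handful of short cycles, and then looping around those cycles the right number of times; aperiodicity is exactly what makes the available loop lengths globally coprime, so that all large $n$ become realisable.

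First I would extract a small family of short cycles. Every \emph{simple} cycle of $G$ has length at most $N$, and since $G$ is aperiodic the greatest common divisor of all simple-cycle lengths is $1$. Pick a simple cycle $C_1$ of length $\ell_1$ and keep track of the running value $d_t=\gcd(\ell_1,\dots,\ell_t)$. As long as $d_t>1$, not all cycle lengths can be divisible by $d_t$ (otherwise the period would exceed $1$), so there is a simple cycle $C_{t+1}$ whose length $\ell_{t+1}\le N$ is not a multiple of $d_t$; adjoining it replaces $d_t$ by a proper divisor of itself, and hence at least halves it. Since $d_1\le N$, after $r\le\lfloor\log_2 N\rfloor+1$ steps we reach $d_r=1$. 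Thus we obtain simple cycles $C_1,\dots,C_r$ with $\ell_i\le N$ and $\gcd(\ell_1,\dots,\ell_r)=1$.

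Second I would use the numerical fact that if positive integers $\ell_1,\dots,\ell_r$ have gcd $1$, then every integer at least $\prod_{i=1}^r\ell_i$ is a non-negative integer combination $\sum_i k_i\ell_i$; this follows by a short induction on $r$ whose base case $r\le 2$ is the classical identity that the Frobenius number of two coprime integers $a,b$ equals $ab-a-b<ab$. In our case $\prod_i\ell_i\le N^r\le N^{\lfloor\log_2 N\rfloor+1}$. Now, given $u,v$, choose a vertex $w_i$ on each $C_i$; strong connectedness supplies walks of length at most $N-1$ from $u$ to $w_1$, from $w_i$ to $w_{i+1}$ ($1\le i<r$), and from $w_r$ to $v$. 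Concatenating them and inserting $k_i$ loops around $C_i$ at $w_i$ produces, for every choice of $k_i\ge 0$, a walk from $u$ to $v$ of length $B+\sum_i k_i\ell_i$, where $B\le(r+1)(N-1)$ is a fixed constant. By the postage-stamp bound, all integers $n$ with $n\ge B+\prod_i\ell_i$ arise this way, which proves the existence of a path of length exactly $n$ for every such $n$.

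The remaining step — and the one I expect to be the real work — is the quantitative one: checking that the threshold $B+\prod_i\ell_i\le(r+1)(N-1)+N^{\lfloor\log_2 N\rfloor+1}$ is at most $\lfloor N^{4\ln N-1}/6\rfloor$ for every integer $N>2$. For large $N$ this has a lot of slack, because $\lfloor\log_2 N\rfloor+1$ is roughly $1.443\,\ln N$, far below $4\ln N-1$; a routine estimate disposes of all $N\ge 4$. The single value $N=3$ does not fit the crude estimate and must be handled separately, by direct inspection of the (few, up to isomorphism) strongly connected aperiodic digraphs on three vertices — which is precisely the kind of small-case bookkeeping that pins down the specific constants $4$, $\ln$ and $1/6$ appearing in the statement.
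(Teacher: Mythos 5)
Your argument is correct, but it takes a genuinely different route from the paper. The paper collects $C$ cycles through a \emph{single} vertex $u$, bounds $C$ via the number of distinct prime divisors ($C\leq \omega(N)+1 < 1.39\ln N/\ln\ln N + 1$), writes $1=\sum_i n_ic_i$ by B\'ezout with possibly negative coefficients, and then invokes a cited result (Br\'emaud, Theorem A.1.1) giving the threshold $N+P(P-1)$ with $P$ the negative part of that representation, bounded by $(C-1)N^{C}$. You instead bound the number of cycles by the gcd-halving argument ($r\leq\lfloor\log_2 N\rfloor+1$), allow the cycles to sit anywhere in the graph and stitch them together with shortest connecting walks, and replace B\'ezout-plus-Br\'emaud by the elementary ``postage-stamp'' fact that all integers $\geq\prod_i\ell_i$ lie in the numerical semigroup generated by $\ell_1,\dots,\ell_r$. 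Both ingredients of yours are sound: the semigroup bound does follow by the induction you sketch (pass to $\ell_1/d,\dots,\ell_{r-1}/d$ with $d=\gcd(\ell_1,\dots,\ell_{r-1})$ and absorb the correction $k_r\ell_r$, $0\leq k_r<d$; the slack $\prod_{i<r}\ell_i\geq d^{r-1}$ makes the estimate close), and the stitched walk realizes every length $B+\sum_ik_i\ell_i$. What your approach buys is self-containedness (no citation to Markov-chain or number-theoretic results) and in fact a better main term, $N^{\lfloor\log_2 N\rfloor+1}$ versus the paper's roughly $N^{2+2.78\ln N/\ln\ln N}$; what it costs is exactly the final numeric comparison you flag. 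That comparison is genuinely needed: for $N=3$ your crude threshold is $15$ while $\lfloor 3^{4\ln 3-1}/6\rfloor=6$, so the separate inspection of the three-vertex case is not optional — though it does succeed (all primitive digraphs on $3$ vertices have exponent at most $(N-1)^2+1=5\leq 6$, e.g.\ by Wielandt's bound or by exhausting the finitely many cases), and for $N\geq 4$ the tightest point is $N=4$, where $12+64=76\leq 90=\lfloor 4^{4\ln 4-1}/6\rfloor$, after which the exponent gap $2.557\ln N-2$ only grows. It is worth noting that the paper's own chain of inequalities is similarly loose for small $N$ (its displayed bound on $P$ does not imply the stated threshold for, say, $N=3$), so your explicit acknowledgement of the small-case bookkeeping is, if anything, the more careful treatment of the constants $4$, $\ln$, and $1/6$.
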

\begin{proof}
It is a standard result from the theory of Markov chains, see
e.g.~\cite[Lemma~8.3.9]{Rosenthal:book}, that in every
ergodic Markov chain there is $n_0$ such that between any two states
there is a path of any length greater than $n_0$. In the following,
we give a simple bound on $n_0$.

Let $u,v$ be vertices. By aperiodicity, there are $C$ cycles on $u$
of lengths $c_1,\ldots,c_C\leq N$ with $\gcd(c_1,\ldots,c_C)=1$.
Thus by B\'ezout's identity, there are $m_i\in\Nseto$ such that
$1=\sum_{i=1}^C m_ic_i$. Hence also $1=\sum_{i=1}^C
(m_i+k_i/c_i\cdot\prod_{j=1}^Cc_j)c_i$ for any $k_1+\cdots +k_C=0$.
Therefore, $1=\sum_{i=1}^j n_ic_i$ with some
$0>n_i>-1/c_i\cdot\prod_{j=1}^C c_j$ for $i<C$ and $n_C>0$.
By~\cite[Theorem A.1.1]{Bremaud:book}, $n_0$ can be chosen $N+P(P-1)$,
where $P=\sum_{i=1}^{C-1} |n_i|c_i$, i.e.~the absolute value of the
negative part of the sum. Note that $P<(C-1)N^{C}$.

Let $c_1$ have $F$ different prime factors. Then $c_2$ can be chosen
indivisible by some of the factors. Then $c_3$ can be chosen
indivisible by some of the remaining factors and so on. Therefore,
we can choose $c_i$ so that $C\leq F+1$.
By~\cite[V.15.1.b]{MSC:book-number-theory}, %
for the number $\omega(N)$ of distinct prime factors of $N$, we have
$F\leq \omega(N)<1.39\ln N/\ln\ln N$. Hence $P<1.39\ln N/\ln\ln N
\cdot N^{1+1.39\ln N/\ln\ln N}$ and thus $n_0<N^{4\ln N-1}/6$.

\end{proof}

Both proofs of Lemmata~\ref{lem:path} and \ref{lem:synchronization} use a technique
expressed by the next lemma.

\begin{lemma}\label{lem:wide-path}
Let $\delta > 0$, $n \in \Nset$, $z_1$ be a $\delta$-separated state and $z_1 z'_2 \cdots z'_n$ be a feasible path.
Then there is a $(\delta/n)$-wide path $z_1 z_2 \cdots z_n$ such that $z_n$ is $(\delta/n)$-separated and
for each $1 \leq i \leq n$ we have $z_i \region z'_i$.
\end{lemma}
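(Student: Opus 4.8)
The plan is to build the shadow path $z_2,\ldots,z_n$ one transition at a time, each time landing inside the region $R'_i$ of $z'_i$ but deliberately in its interior, keeping enough room for the next step. (For $n=1$ there is nothing to do: $z_1$ itself is $\delta$-separated, hence $(\delta/n)$-separated, and the empty path is vacuously $(\delta/n)$-wide.) Throughout I would maintain the invariant that $z_i\region z'_i$, that the transition into $z_i$ is $(\delta/n)$-wide, and that the fractional parts of the clocks relevant in $z_i$ are separated by more than $\delta/n$ both from each other and from the integers, with a slack parameter that is allowed to shrink slowly.

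First the structural facts. Since $z_{i-1}\region z'_{i-1}$, reading $s_{i-1}$ triggers the same edge of $\calA$ from $z_{i-1}$ as from $z'_{i-1}$ --- the edge depends only on the region, by the region graph construction and Lemma~\ref{lem:action-in-region-leads-to-same-non-zero-regions} --- so the reset set and the target location agree with the original path, and the valuation $\bar{\nu}_{i-1}$ obtained after this reset is region-equivalent to the corresponding one along the original path. Hence there is a nondegenerate interval $(l_i,u_i)$ of waiting times $t_i$ for which $\trans{z_{i-1}}{t_i}{z_i}$ gives $z_i\region z'_i$: feasibility of the original path exhibits a suitable $t'_i$, and the endpoints of this interval are the nearest values of $t_i$ at which some clock relevant in $z_i$ hits an integer or reaches $\maxb$. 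The key estimate is that $u_i-l_i$ is at least the separation of the relevant fractional parts of $\bar{\nu}_{i-1}$, because those critical values of $t_i$ are, up to a reflection and the integer $\maxb$, exactly one minus the fractional offsets of the relevant clocks, and so inherit that separation.

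Inside $(l_i,u_i)$ I would pick $t_i$ so that every clock relevant in $z_i$ has fractional part in the open interval $(\delta/n,1-\delta/n)$ --- this makes the transition $(\delta/n)$-wide and keeps $z_i$ off the integer grid --- and so that the separation of $z_i$ is damaged as little as possible. It is convenient to parametrise by the fractional parts of the cumulative waiting times $\tau_i=t_2+\cdots+t_i$: a clock relevant in $z_i$ has fractional part either $\fr(\nu_1(x)+\tau_i)$, a rigid rotation of the $\delta$-separated set carried by $z_1$, or $\fr(\tau_i-\tau_{j-1})$, where $j$ is the step of its last reset. The clocks reset at the same step collapse to one fractional value, which I place in the middle of the slot prescribed by $R'_i$, while the remaining (non-reset) relevant clocks keep their mutual fractional distances, since inside a fixed region the fractional parts rotate rigidly without wrapping. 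Starting from slack $\delta$ at $z_1$ and spending at most $\delta/n^2$ of slack at each of the at most $n$ steps --- the loss is additive rather than multiplicative because only at most $|\calX|$ distinct fractional values ever occur at a state, so inserting one new value into a configuration of at most $|\calX|$ already well-spread points costs only $O(\delta/n^2)$ --- the slack stays above $\delta(1-i/n^2)\ge\delta/n$, which is precisely what is needed to keep every transition $(\delta/n)$-wide and to make $z_n$ $(\delta/n)$-separated.

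I expect the quantitative bookkeeping of the last paragraph to be the main obstacle. One has to check at once that the slot forced by $R'_i$ for the just-reset clocks never drives the slack down faster than the budgeted $\delta/n^2$ per step; that a never-reset clock lying near the boundary of its admissible band is not carried across an integer by the accumulated rotation $\fr(\tau_i)$, which is arranged by keeping this accumulated rotation inside the largest gap of the $\delta$-separated set of $z_1$; and that clocks which become irrelevant somewhere along the path are simply dropped from all these requirements --- the region and the wideness condition constrain only relevant clocks --- which is exactly what lets the construction terminate cleanly after $n$ steps.
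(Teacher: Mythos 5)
Your overall construction is the same as the paper's: build the path inductively, use the fact that the set of waiting times from $z_{i-1}$ that land in the region of $z'_i$ is an interval whose length is bounded below by the current separation of the (post-reset) valuation, and choose the waiting time in the interior of that interval. Up to that point the proposal is sound. The genuine gap is in the quantitative bookkeeping that you yourself flag as the main obstacle: the claim that placing the newly reset clocks ``in the middle of the slot prescribed by $R'_i$'' costs only an additive $\delta/n^2$ of separation per step is false. Inserting a point into a gap of length $g$ and maximizing its distance to both neighbours yields separation $g/2$; if the region path forces consecutive insertions into nested sub-gaps of the same original $\delta$-gap (which it may well do), the separation halves at every such step. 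After $i$ steps you can be left with separation $\delta/2^i$, which for $i\geq 2$ already violates your invariant $\delta(1-i/n^2)$. The observation that only $|\calX|$ distinct fractional values occur does not repair this: even $\min(n,|\calX|)$ nested insertions give $\delta/2^{\min(n,|\calX|)}$, not $\delta-O(\delta/n)$.

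The paper handles exactly this point with a two-phase argument. It first runs your greedy construction and honestly concedes only $\delta/2^n$-separation. It then recovers $\delta/n$ by a separate idea that is missing from your proposal: because the region path $R'_1\cdots R'_n$ is fixed in advance, the \emph{order} in which the at most $n$ new fractional values will be interleaved between any two clock values of the $\delta$-separated state $z_1$ is known before any waiting time is chosen. One can therefore reserve, inside each original gap of length at least $\delta$, a grid of $n$ positions spaced $\delta/n$ apart and steer each insertion to its pre-assigned reserved position (which always lies in the interior of the prescribed slot, so it is reachable by an admissible waiting time). This pre-planning is what converts the multiplicative loss into the stated $\delta/n$ bound; without it, or some equivalent device, the lemma's conclusion of $(\delta/n)$-wideness and $(\delta/n)$-separation does not follow from your construction. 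The remaining remarks in your last paragraph (about unreset clocks being carried across integers) are not needed: the admissible interval already encodes when clocks must or must not cross integer boundaries, since crossing or not is part of being in the region of $z'_i$.
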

\begin{proof}
For simplicity, we first transform this path into a
$\delta/2^n$-wide one. We then show how
to improve the result to $\delta/n$-wideness.

For $j\leq n$, we successively construct paths $z_1\cdots z_j$ that are
$\delta/2^j$-wide and $z_j$ is in the same region as $z_j'$ and now is
also $(\delta/2^j)$-separated. The state $z_1$ satisfies all requirements as $z_1$ is $\delta$-separated.
Let $z_1\cdots z_j$ satisfy the requirements. In particular,
$z_j=(s_j,q_j,\nu_j)$ is in the same region as $z_j'=(s_j,q_j,\nu_j')$. Since
there is a waiting time $t'$ with
$\trans{z_j'}{t'}{z_{j+1}'}=(s_{j+1},q_{j+1},\nu_{j+1}')$, there is also an
interval of waiting times $(a,b)$
such that for every $t\in(a,b)$ we end up in the same region,
i.e.~$\trans{z_j}{t}{z}$ for some $z$ of the region containing
$z_{j+1}'$. Moreover, due to
$\delta/2^j$-separation of $\nu_j$, we obtain $b-a\geq \delta/2^j$. Therefore,
we can choose the waiting time $t=a+\text{\textonehalf}\cdot\delta/2^j$ so
that also $\nu_{j}+t$ is
$\delta/2^{j+1}$-separated. Hence also
$\nu_{j+1}:=(\nu_j+t)[\{x\mid\nu_{j+1}'(x)=0\}:=0]$ is $\delta/2^{j+1}$-separated. We
set $z_{j+1}:=(s_{j+1},q_{1+1},\nu_{j+1})$.

Notice, that this approach guarantees that the fractional parts of the just
reset clock are ``in the middle'' between the surrounding clocks. That is why
we needed exponential, i.e.~$2^n$, deminution of the separation. Nevertheless,
due to $\delta$-separation, for every $x,y\in\calX$ there are at least $n$
values between $\fr(\nu(x))$ and $\fr(\nu(y))$ such that even if all were
fractional values of other clocks, the state would be $\delta/n$-separated.
Also note that as the path is only $n$ steps long, there can be at most $n$
different clocks set between any two clocks. Since we know their ordering in
advance, these $n$ different positions are sufficient.
\end{proof}

Now, we can finally start with the promised proofs. Lemma~\ref{lem:path} is a
corollary of Lemmata~\ref{lem:ergo-paths} and \ref{lem:wide-path}.

\begin{reflemma}{lem:path}
Let the region graph $\regiongraph_\product$ be strongly connected and let $p$ be the period of $\regiongraph_\product$.
Let $k\in\{0,\ldots,p-1\}$, $\delta > 0$ and $R \in V_k$ be a region. Then there is $n \in \Nset$ such that for
every $\delta$-separated $z_1 \in \Gamma^k_\product$ there is
a $(\delta/n)$-wide path $z_1 \cdots z_n$ such that $z_n$ is $(\delta/n)$-separated and $z_n \in R$.

\noindent Moreover, we can set $n= \lfloor
\sizereg^{4\ln\sizereg-1}/6 \rfloor \cdot p$.
\end{reflemma}

\begin{proof}
In the region graph we have a partition of vertices to sets $V_0,\ldots,V_{p-1}$ due to Lemma~\ref{lem:periodic-graph}.
Let us fix a $k\in\{0,\ldots,p-1\}$. We can define an aperiodic oriented graph
$(V_k,E_k)$ where $(R,R') \in E_k$ if there is a path from the region $R$ to the region $R'$
of length exactly $p$ in the region graph $\regiongraph_\product$.

Let us fix $\delta > 0$ and a region $R \in V_k$. Due to the strong
connectedness and aperiodicity of $(V_k,E_k)$ we have by
Lemma~\ref{lem:ergo-paths} in the graph $(V_k,E_k)$ from any region
$R' \in V_k$ a path to $R$ of length $x  = \lfloor \sizereg^{4\ln
\sizereg-1}/6 \rfloor > \lfloor |V_k|^{4\ln |V_k|-1}/6 \rfloor$.
Hence, in the graph $\regiongraph_\product$, we have from $R'$ to
$R$ a path of length $n = x \cdot p$.

For every $z_1 \in \Gamma^k_\product$ we have a feasible path
$z_1z_2'\cdots z_n'$ with $z_n'\in R$.
We get the $(\delta/n)$-wide path by applying Lemma~\ref{lem:wide-path}.
\end{proof}

\begin{reflemma}{lem:synchronization}
Let the region graph $\regiongraph_\product$ be strongly connected.
Let $\delta > 0$. Let $z$ be a $\delta$-separated state. Then there is $n\in\Nset$, such
that for any $n' \geq n$ there is a state $z^\ast$ such that the following holds.
For any state $z_1$ almost equal to $z$ there is a $(\delta/n)$-wide path
$z_1 \cdots z_{n'}$ such that $z_{n'} = z^\ast$.

\noindent
Moreover, we can set $n = \sizereg \cdot |\calX|$.
\end{reflemma}
\begin{proof}
Let $\mathcal{Z}$ be the set of clocks that are not relevant in $z$.
For each clock $x \in \mathcal{Z}$ there is a region $R_x$ such that clock $x$ is reset in region $R_x$
(we make this assumption in Section~\ref{SEC:FINISH-MAIN-THM}). Let us fix a state $z_1$ almost equal to $z$.
From the strong connectedness we get a feasible path $z_1 z'_2 \cdots z'_n$
that for each $x \in \mathcal{Z}$ visits the region $R_x$.
Furthermore, $n \leq \sizereg \cdot |\mathcal{Z}| \leq \sizereg \cdot |\calX|$.
From Lemma~\ref{lem:wide-path} we get a $(\delta/n)$-wide path $z_1 z_2 \cdots z_n$ that also
for each $x \in \mathcal{Z}$ visits the region $R_x$.

Notice that for a fixed state $z$, control state $s$ and time $t$ there is a unique
location $q$ and valuation $\nu$, hence a unique state $z' = (s,q,\nu)$ such that $\trans{z}{t}{z'}$.

Let $t_1,\ldots,t_{n-1}$ be the waiting times and $s_1,\ldots,s_n$ the control states
on the path $z_1,\ldots,z_n$.
For any $\bar{z}_1$ almost equal to $z_1$ we can build using the same waiting times and control states
a path $\bar{z}_1 \cdots \bar{z}_n$.
It is easy to see that for two almost equal states $z,\bar{z}$ a control state $s$ and a time $t>0$ the
states $z',\bar{z'}$ determined by $s$ and $t$ are also almost equal.
Inductively, we get that $\bar{z}_i$ is almost equal to $z_i$ for each $1 \leq i \leq n$.
Hence, the path $\bar{z}_1 \cdots \bar{z}_n$ is also
$(\delta/n)$-wide because $\delta$-wideness is defined only with respect to relevant clocks.
We show that $z_n = \bar{z}_n ( = z^\ast)$.

We need a parametrized version of almost equality.
For a set of clocks $\calY$ and two states $z = (s,q,\nu)$ and
$\bar{z} = (\bar{s},\bar{q},\bar{\nu})$ we say that they are $\calY$-equal if
$z \region \bar{z}$ and for each $x\in\calY$ we have $\nu(x) = \bar{\nu}(x)$.
The states $z_1$ and $\bar{z}_1$ are $\calX_1$-equal where $\calX_1 = \calX \setminus \mathcal{Z}$.
Let $\calX_i$ be a set of clocks and $z_i$ and $\bar{z}_i$ be $\calX_i$-equal states.
For any $t > 0$ and two states $z_{i+1}$ and $\bar{z}_{i+1}$ such that $\trans{z_i}{t}{z_{i+1}}$ and
$\trans{\bar{z}_i}{t}{\bar{z}_{i+1}}$ we have $z_{i+1}$ and $\bar{z}_{i+1}$ are $(\calX_i \cup \calY)$-equal
where $\calY$ is the set of clocks reset in $z_i$.
We get that $z_n$ and $\bar{z}_n$ are $\calX$-equal, i.e. $z_n = \bar{z}_n$.
It holds because all clocks from $\mathcal{Z}$ are reset on the path $z_1 \cdots z_n$.

Now, for arbitrary $n' \geq n$, we can stretch the path to $z_1 \cdots z_n \cdots z_{n'}$.
We get $z_{n'} = \bar{z}_{n'}$ for any starting $\bar{z}_1$ almost equal to $z_1$ because
$z_n = \bar{z}_n$. From same states we can obviously take the same steps to the same successor states.
Furthermore, we can easily take $(\delta/n)$-wide transitions by similar arguments as in the proof
of Lemma~\ref{lem:separation}.
\end{proof}

\subsubsection{Proof of Lemma~\ref{lem:corner}}

\begin{reflemma}{lem:corner}
Let $R$ be a region. For each $\delta > 0$ there is $\delta' > 0$, $n \in \Nset$ and
$z' \in \Gamma_\product$ such that for every $\delta$-separated $z_1 \in R$ there is a $\delta'$-wide path $z_1 \cdots z_n$ such that
$z_n$ and $\bar{z}$ almost equal.

\noindent
Moreover, we can set $n = \maxb + 1$ and $\delta' = \delta / (\maxb+2)$.
\end{reflemma}

\begin{proof}
No relevant clock has in $z_1$ its fractional value in the interval $(0,\delta)$
because $z_1$ is $\delta$-separated. We divide this interval into $\maxb+2$ subintervals
of equal length and set $\delta' = \delta / (\maxb+2)$.

For a fixed $z_1$ we inductively build a $\delta'$-wide path $z_1 \cdots z_n$ where $n=\maxb+1$.
We fix an aribtrary linear order over the set of control states $S$ of the semi-Markov process.
Let $1\leq i < n$. For the state $z_i = (s_i,q_i,\nu_i)$ we choose as $s_{i+1}$ the first state
(in the fixed order) such that $\prob(s_i)(s_{i+1}) > 0$.
This gives us a delay function $f = \delay(s_i,s_{i+1})$. We set $b$ to the integral upper bound
of the interval where $f$ is positive if it is not infinity. Otherwise, we set $b = l+1$ where $l$ is the
lower bound of $f$. Now, we fix the waiting time $t_i = b-\delta'$ and the state $z_{i+1} = (s_{i+1},q_{i+1},\nu_{i+1})$
such that $\trans{z_i}{t_i}{z_{i+1}}$.

We show that it is a $\delta'$-wide transition. We divide the
set of clocks into two disjunct subsets: the set of clocks $\calY$ that have
been reset in one of the states $z_1,\ldots,z_i$ (have been reset at the beginning of the transition
to the next state), and all other clocks $\bar{\calY} = \calX \setminus \calY$.
For each $x \in \calY$ lastly reset in state $z_j$ where $j\leq i$ we have
$\fr(\nu_{i+1}(x)) = 1 - (i+1-j) \cdot \delta'$, i.e. $\fr(\nu_{i+1}(x)) \leq 1 - \delta'$ and
$\fr(\nu_{i+1}(x)) > 1-\delta > \delta'$. For each $x \in \bar{\calY}$ we have
$\fr(\nu_{i+1}(x)) = \fr(\nu_1(x)) - i \cdot \delta' \geq \delta - i \cdot \delta' \geq \delta'$.
Also, $\fr(\nu_{i+1}(x)) \leq 1 - \delta - i \cdot \delta' < 1 - \delta'$.

We show that for any $\delta$-separated starting state $\bar{z}_1 \in R$
we reach a state $\bar{z}_n$ almost equal to $z_n$.
We need a parametrized version of almost equality.
For a set of clocks $\calY$ and two states $z = (s,q,\nu)$ and
$\bar{z} = (\bar{s},\bar{q},\bar{\nu})$ we say that they are $\calY$-equal if
$z \region \bar{z}$ and for each $x\in\calY$ we have $\nu(x) = \bar{\nu}(x)$.
The states $z_1$ and $\bar{z}_1$ are $\emptyset$-equal.
Let $\calX_i$ be a set of clocks and $z_i$, $\bar{z}_i$ be $\calX_i$-equal states.
According to the inductive definition, we fix control states $s_{i+1}, \bar{s}_{i+1}$, waiting times $t,\bar{t}$,
and states $z_{i+1}$ and $\bar{z}_{i+1}$ such that $\trans{z_i}{t}{z_{i+1}}$ and
$\trans{\bar{z}_i}{\bar{t}}{\bar{z}_{i+1}}$. Notice that $s_{i+1}=\bar{s}_{i+1}$, hence $t=\bar{t}$.
We have $z_{i+1} \region \bar{z}_{i+1}$.
Furthermore, they are $(\calX_i \cup \calY)$-equal where $\calY$ is the set of clocks reset in $z_i$.
We get that $z_n$ and $\bar{z}_n$ are almost equal because the paths take at least
$\maxb + 1 \cdot (1-\delta') > \maxb$ time units. All clocks not reset during this path become irrelevant.
We finish the proof by setting $z' = z_n$.
\end{proof}

\section{Proofs of Section~\ref{SEC:ALGORITHMS}}

\subsection{Proof of Theorem~\ref{thm:reachability}}

\begin{reftheorem}{thm:reachability}
\tmreachability
\end{reftheorem}
As $\probm_\smp(\setofruns)$ is equal to the probability of reaching
$\calC$, the $i$-step transition
probabilities $\kernel^i_\product(z,\calC)$ converge to
$\probm_\smp(\setofruns)$ as $i$ goes to infinity. Our goal is to show
that they converge exponentially quickly.

Our proof proceeds as follows. Denote by $B$ the union of all regions
that belong to BSCCs of $\regiongraph_\product$.  We show that for $c
= 4 \cdot |V|$ there is a lower bound $p_{bound}>0$ on the probability
of reaching $B$ in at most $c$ steps from any state $z \in
\Gamma_\product$. Note that then the probability of not hitting $B$
after $i= m\cdot c$ steps is at most $(1-p_{bound})^m$. However, this
means that $\kernel^i_\product(z,\calC)$ cannot differ from the
probability of reaching $\calC$ (and thus also from
$\probm_\smp(\setofruns)$) by more than $(1-p_{bound})^m$ because
$\calC\subseteq B$ and the probability of reaching $\calC$ from
$B\smallsetminus \calC$ is $0$.  Moreover, we show that $p_{bound}$ can be
set to $(\pmin \cdot \densb \cdot 1/c)^{c}$, from which we obtain the desired upper bound on
$|\probm_\smp(\setofruns)-\kernel^i_\product(z,A)|$.

So to obtain the desired result, it suffices to prove the following

\begin{proposition}
\label{prop:reachability}
For every $z\in\Gamma_\product$ we have that 
\[
\kernel^{c}_\product(z,B)\quad \geq\quad p_{bound}
\]
Here
$c = 4 \cdot |V|$ and $p_{\mathit{bound}} = (\pmin \cdot \densb
\cdot 1/c)^{c}$.
\end{proposition}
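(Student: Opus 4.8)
The plan is to reduce the statement to a ``reach the set $B$'' problem and then exhibit, from every state, a single wide path into $B$ of length at most $c=4|V|$ whose ``fuzzy neighbourhood'' of runs has probability at least $p_{\mathit{bound}}$; this follows the pattern of Proposition~\ref{prop:small-product}, and is in fact easier because $B$ is a union of whole BSCCs rather than a single target state.

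First I would record that $B$ is absorbing for $\product$: each BSCC of $\regiongraph_\product$ is closed under $\regiongraph_\product$-edges, so by the definition of $\regiongraph_\product$ together with Lemma~\ref{lem:action-in-region-leads-to-same-non-zero-regions} we have $\kernel_\product(z,B)=1$ whenever $z\in B$. Hence $\kernel^c_\product(z,B)$ equals exactly the probability of reaching $B$ within $c$ steps, and it suffices to produce, for each $z\in\Gamma_\product$, some $n\le c$ and a set $Z\subseteq B$ with $\kernel^n_\product(z,Z)\ge p_{\mathit{bound}}$; indeed then $\kernel^c_\product(z,B)\ge\kernel^n_\product(z,Z)$ since $B$ is absorbing.

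To obtain such a $Z$, for a fixed $z$ I would build a $\delta$-wide path from $z$ into $B$ of length $n\le c$, using the same ingredients as for Proposition~\ref{prop:main-lem-period}. Starting from $z$, first drive the configuration into a suitably separated state (via Lemma~\ref{lem:separation}); then, since in $\regiongraph_\product$ every vertex has a simple --- hence length $<|V|$ --- path to some BSCC vertex, and since each edge of $\regiongraph_\product$ is realised by a feasible transition out of \emph{any} state of its source region, lift this region path to a feasible state path ending in some $z_n\in B$ and turn it into a wide path by Lemma~\ref{lem:wide-path}. Concatenating, one gets a $\delta$-wide path of length $n\le 4|V|=c$ ending in $z_n\in B$, with $\delta>0$ depending only on $|\calX|$ and $|V|$. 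Finally, Lemma~\ref{lem:fuzzying} applied to this path yields a probability measure $\smallmeasure$, a constant $b>0$ and a set $Z$ with $\smallmeasure(Z)=1$, $z_n\in Z$, and $Z$ contained in the region of $z_n$ (so $Z\subseteq B$), satisfying $\kernel^n_\product(z,Z)\ge b\cdot\smallmeasure(Z)=b$. It then remains only to check $b\ge p_{\mathit{bound}}=(\pmin\cdot\densb\cdot 1/c)^c$ from the explicit value $b=(\pmin\densb\delta/n)^n/\sqrt{|\calX|}$ supplied by Lemma~\ref{lem:fuzzying}, using $n\le c$ and $\pmin\densb\le 1$ --- a routine estimate, entirely parallel to the closing computation in the proof of Proposition~\ref{prop:main-lem-period}.

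The part I expect to be the main obstacle is the length bound $n\le 4|V|$. Exhibiting \emph{some} wide path into $B$ is routine; keeping its length linear in $|V|$ rests on two points: we only need to hit the whole set $B$, so a shortest region path of length $<|V|$ is enough and none of the ergodic-path and synchronisation machinery of Lemmata~\ref{lem:path}--\ref{lem:synchronization} (needed for Proposition~\ref{prop:main-lem-period}) is required here; and the cost of the initial ``un-bunching'' of the relevant clocks must be controlled. The second, more delicate, point is then to balance the wideness lost in the separating prefix against the length of the reachability suffix so that the final probability still collapses to the advertised $p_{\mathit{bound}}$.
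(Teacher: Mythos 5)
Your proposal follows essentially the same route as the paper's own proof: the paper likewise builds, from an arbitrary state, a wide path into $B$ consisting of a clock-separating prefix (its Lemma~\ref{lem:reach-separation}) followed by a lifted region path of length at most $|V|$ into a BSCC (Lemma~\ref{lem:reach-path}), lower-bounds the probability of the neighbourhood of that path by a fuzzying lemma (Lemma~\ref{lem:reach-fuzzying}), and closes with the same constant-balancing, passing from the actual path length (at most $2|V|$ for non-degenerate region graphs) to the horizon $c=4|V|$ exactly as you do via the closedness of $B$. The only difference is cosmetic: you reuse Lemmata~\ref{lem:separation}, \ref{lem:wide-path} and \ref{lem:fuzzying}, whereas the paper restates slightly simplified variants of them in the appendix devoted to Section~\ref{SEC:ALGORITHMS}.
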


Note that this section draws heavily on some of the methods and lemmas proved in the
previous section, though often in a slightly easier form. However, to keep
individual sections of the Appendix independent, we repeat the arguments
here once more.

Similarly to previous section, we are interested in paths $z\ldots z_n$  that
are \emph{$\delta$-wide}.  For a fixed $\delta>0$, it means that the waiting
time of any transition in the path can be changed by $\pm\delta$ without ending
up in a different region in the end. Precise definition follows.

\begin{definition}

Let $z = (s,q,\nu)$ and $z' = (s',q',\nu')$ be two states. For a waiting time $t
\in \Rsetp$ we set $\trans{z}{t}{z'}$ if $\successor(z) = (q',\bar{\nu})$ and
$\nu' = \bar{\nu} + t$.  We set $\trans{z}{}{z'}$, called a \emph{feasible
transition}, if  for some $t\in\Rsetp$ (i) $\trans{z}{t}{z'}$; and (ii) $f_d(t) > 0$, where
$f_d = \dist(s,s')$.

For $\delta>0$, we say that a feasible transition $\trans{z}{}{z'}$ is
$\delta$-wide if for every $x \in \calX$ relevant for $\nu'$ we have
$\fr(\nu_i(x)) \in [\delta,1-\delta]$.

Let $z_1 \cdots z_n$ be a path.  It is \emph{feasible} if for each $1 \leq i <
n$ we have that $\trans{z_i}{}{z_{i+1}}$.  It is \emph{$\delta$-wide} if for
each $1 \leq i < n$ we have that $\trans{z_i}{}{z_{i+1}}$ is a $\delta$-wide
transition.  

\end{definition}

We first show that any $\delta$-wide path of a finite length, say $n$, from any state
$z\in \Gamma_\product$ to a state $z_n$ in a region $R$, induces a set of paths
from $z$ to the region $R$, and that their probability is bounded below by a
positive constant.

\begin{lemma}\label{lem:reach-fuzzying}
For every $\delta > 0$ and $n>1$ there is $b >0$ such that the following holds.
For every $\delta$-wide path
$\sigma = z_0z_1 \cdots z_n$, there is a set of states
$Z \ni z_n$ such that it holds $\kernel_\product^n(z_1,Z) \geq b$.

Moreover, we can set $b=(\pmin \cdot \densb \cdot 2\delta/n)^n$.
\end{lemma}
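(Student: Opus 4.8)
The plan is to mirror the proof of Lemma~\ref{lem:fuzzying}, but in the easier situation where a single uniform lower bound on $\kernel^n_\product(\cdot,Z)$ suffices, so that no reference measure and no dimension bookkeeping are needed. Write $\sigma = z_0 z_1 \cdots z_n$ with $z_i = (s_i,q_i,\nu_i)$, and for $1 \le i \le n$ let $t_i \in \Rsetp$ be the waiting time with $\trans{z_{i-1}}{t_i}{z_i}$. First I would consider the \emph{$(\delta/n)$-tube around $\sigma$}: the family of paths obtained from $\sigma$ by keeping the discrete states $s_0,\dots,s_n$ and replacing each $t_i$ by an arbitrary $t_i' \in (t_i - \delta/n,\, t_i + \delta/n)$.

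The first step is to check that every path in this tube is still feasible and visits the same sequence of regions as $\sigma$, so that all of its end states lie in one common region, namely the region of $z_n$. This is exactly where $\delta$-wideness enters: every clock relevant at $z_i$ has fractional part in $[\delta,1-\delta]$, and each elapsing step shifts all clock values by the same amount, so a cumulative shift of magnitude strictly below $\delta$ along the tube can neither push a relevant fractional part across an integer nor reorder two relevant fractional parts; hence the region is preserved at every step. Moreover, each perturbed waiting time $t_i'$ still lies inside the (closed, integer-bounded) support of the density $\D(s_{i-1},s_i)$, on which that density is bounded below by $\densb$ by the assumption on delay densities imposed in Section~\ref{sec:prelim}. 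Having established this, I would define $Z$ to be the set of end states of the paths in the tube; then $z_n \in Z$ (take all $t_i' = t_i$), and $Z$ is measurable because the end value of each clock is an affine function of the perturbation vector $\zeta = (t_1' - t_1,\dots,t_n' - t_n)$, namely the sum of the waiting times elapsed since that clock's last reset along the fixed region path, plus a fixed integer offset.

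The last step is the kernel estimate. Restricting the $n$-fold integral that computes $\kernel^n_\product(z_0,Z)$ to the runs that stay inside the tube, and parametrising those runs by their waiting times, gives
\[
  \kernel^n_\product(z_0,Z) \ \ge\ \int_{(-\delta/n,\,\delta/n)^n} \prod_{i=1}^{n} \prob(s_{i-1})(s_i)\cdot f_{\D(s_{i-1},s_i)}\!\bigl(t_i + \zeta_i\bigr)\, d\zeta \ \ge\ \bigl(\pmin\cdot\densb\bigr)^{n}\cdot\Bigl(\tfrac{2\delta}{n}\Bigr)^{n},
\]
since each discrete step contributes a factor $\ge \pmin$ and (by Fubini) each of the $n$ one-dimensional integrals, of a density that is $\ge \densb$ over an interval of length $2\delta/n$, contributes a factor $\ge \densb \cdot 2\delta/n$. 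This is the claimed bound with $b = (\pmin\cdot\densb\cdot 2\delta/n)^n$. (The statement is phrased with $z_1$; as in the proof of Lemma~\ref{lem:fuzzying}, this is to be read as the starting state of the path, which I have called $z_0$.)

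I expect the only delicate point to be the first step, i.e.\ making precise that $\delta$-wideness of $\sigma$ forces the whole $(\delta/n)$-tube to remain feasible and confined to a single sequence of regions — concretely, that perturbing the waiting times by less than $\delta/n$ neither crosses an integer clock value, nor reorders the fractional parts of relevant clocks, nor leaves the support of any of the delay densities involved. This is the same verification as in the proof of Lemma~\ref{lem:fuzzying}, and $\densb$ is precisely the constant that makes the density bound uniform over the supports; once it is in place, everything else is the routine integral displayed above.
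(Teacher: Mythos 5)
Your proposal matches the paper's own proof: the paper likewise takes the $\delta/n$-neighbourhood of $\sigma$ (your tube), argues from $\delta$-wideness that all perturbed paths are feasible and follow the same region sequence, lets $Z$ be the set of possible end states, and lower-bounds the $n$-step kernel by $(\pmin\cdot\densb\cdot 2\delta/n)^n$ — the only cosmetic difference being that the paper derives the bound step-by-step through intermediate neighbourhood sets $Z_1,\ldots,Z_n$ rather than via your single $n$-fold integral over the perturbation cube. The "delicate point" you flag (feasibility and region-invariance of the whole tube) is asserted, not elaborated, in the paper as well, so your treatment is on par with it.
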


\begin{proof}
We fix any $\delta$-wide path $\sigma=z_0z_1\cdots
z_n=(s_0,q_0,\nu_0)(s_1,q_1,\nu_1)\cdots(s_n,q_n,\nu_n)$. For $1\leq
i\leq n$, let $t_i$ be the waiting times such that
$\trans{z_{i-1}}{t_i}{z_{i}}$, and let $X_i=\{x\in\calX\mid
\successor(z_{i-1}) = (q,\nu),\; \nu(x)=0\}$ be the set of clocks reset right before waiting $t_i$.

For $\varepsilon>0$, we define an $\varepsilon$-neighbourhood of
$\sigma$ to be the set of paths of the form
$\trans{z_0}{t'_1}(s_1,q_1,\nu'_1)\cdots\trans{}{t'_n}{(s_n,q_n,\nu'_n)}$
where $t'_i\in (t_i-\varepsilon,t+\varepsilon)$. Due to
$\delta$-wideness of $\sigma$, all paths of its
$\delta/n$-neighbourhood are feasible, and follow the same sequence of regions. Considering this
$\delta/n$-neighbourhood, the set of all possible $\nu'_n$s forms
the sought set of states $Z$.

We now give a lower bound on $\kernel^n_\product(z,Z)$. First, recall the following
notation:  let $\pmin$ denote the smallest probability in $\smp$. Further, let
us denote by $\delay(\smp)$ the set of delay densities used in $\smp$,
i.e.~$\delay(\smp)=\{\D(s,s')\mid s,s'\in S\}$.  From our assumptions imposed
on delay densities we obtain the following uniform bound $\densb>0$ on delay
densities of $\delay(\smp)$. For every $f\in\delay(\smp)$ and for all
$x\in[0,\maxb]$, either $f(x)>\densb$ or $f(x)=0$, and moreover,
$\int_{\maxb}^\infty f(x)dx>c$ or equals $0$.

We define sets of states $Z_0, Z_1 \dots, Z_n=Z$, where $Z_i$ is the set of all
states $(s_i,q_i,\nu'_i)$ in the $\delta$-neighbourhood of $\sigma$.
Note that $\kernel_\product(z_0, Z_1) = \prob(s_0)(s_1) \cdot
\int_{t_0-\delta/n}^{t_0+\delta/n} f_d(t) dt$, where $f_d$ is the appropriate
delay density for this transition. Using the bounds given above, $\kernel_\product(z_0, Z_1) \geq
p_{min} \cdot \int_{t_0-\delta/n}^{t_0+\delta/n} \densb dt = p_{min} \cdot
\densb \cdot 2\delta/n$. Similarly, for any $z'_i \in Z_i$,
$\kernel_\product(z'_i, Z_{i+1}) \geq  p_{min} \cdot \densb \cdot 2\delta/n$
holds by the same arguments. Therefore, from the definition of the n-step
transition kernel, $\kernel_\product^n(z_0,Z) \geq  (p_{min} \cdot \densb \cdot
2\delta/n)^n$.
\end{proof}

We now prove that from any state $z \in \Gamma_\product$, some BSCC reachable from $z$ in
the region graph is also reachable from $z$ along a $\delta$-wide path, and that
this path length is bounded from above by a constant.

We use two steps: first, we show that, from any $z \in \Gamma_\product$,
we can reach a \emph{$\delta$-separated} state along $\delta'$-wide path of bounded
length; second, once
in a $\delta$-separated state, we construct a $\delta''$-wide path of length at
most $|V|$ ending in the BSCC.

\begin{definition}\label{def:reach-delta-separation}
  Let $\delta > 0$. We say that a set $X \subseteq \Rsetpo$ is
  $\delta$-separated if for every $x,y\in X$ either $\fr(x) =
  \fr(y)$ or $|\fr(x) - \fr(y)| > \delta$. 

Further, we say that $(s,q,\nu)\in\Gamma_\product$ is
$\delta$-separated if the set
  $$
    \{ 0 \} \cup \{ \nu(x) \mid x \in \calX, x \textrm{ is relevant
for }\nu \}
  $$
  is $\delta$-separated.
\end{definition}

\begin{lemma}\label{lem:reach-separation}
There is $\delta > 0$ and $n \in \Nset$ such that for any $z_1 \in
\Gamma_\product$ there is a $\delta$-wide path $z_1 \cdots z_n$ such
that $z_n$ is $\delta$-separated.

\noindent
Moreover, we can set $n = \maxb \cdot (|\calX| + 2)$ and $\delta = 1 / ( 2 (|\calX| + 2))$.
\end{lemma}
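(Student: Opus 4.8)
This is once more exactly the statement of Lemma~\ref{lem:separation}, so the plan is simply to rerun the ``$r$-grid'' argument used there (it is repeated in this section only so that the section be self-contained).

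First I would set $r=|\calX|+2$ and mark the $r$ points $0,1/r,\dots,(r-1)/r$ on the segment $[0,1)$, calling them the \emph{lines} of the $r$-grid; a line is \emph{free} in a state $z=(s,q,\nu)$ if no clock relevant for $\nu$ has fractional part within $1/(2r)$ of it. I would then build the path $z_1\cdots z_n$, $n=\maxb\cdot r$, step by step, keeping the invariant that every clock already reset along the path has fractional part equal to one of the grid lines. Given $z_i$, the $1/(2r)$-neighbourhoods of the $r=|\calX|+2$ lines are pairwise disjoint while there are at most $|\calX|$ relevant clocks, so at least two lines are free; I would pick a free line $j\neq 0$ and a waiting time $t$ with $\fr(t)=1-j/r$ such that $\trans{z_i}{t}{z_{i+1}}$ and the corresponding delay density is positive at $t$ --- such a $t$ exists because each delay density is supported on an interval with endpoints in $\Nseto$, so $t$ may be chosen with an arbitrary prescribed fractional part inside the support, and the density is bounded away from $0$ on compact subintervals of its support. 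Since shifting by $\fr(t)=1-j/r$ maps ``close to line $j$'' onto ``close to line $0$'', and line $j$ is free in $z_i$, every clock relevant in $z_{i+1}$ --- whether just reset (fractional part $1-j/r\in[1/(2r),1-1/(2r)]$) or carried over (in which case it was already relevant in $z_i$, hence avoided line $j$) --- has fractional part in $[1/(2r),1-1/(2r)]$, i.e.\ the transition is $1/(2r)$-wide; and the just-reset clocks, together with the clocks already on the grid (each shifted by the grid multiple $1-j/r$), remain on the grid, so the invariant is maintained.

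Finally, because $j\neq 0$ each step waits at least $1-j/r\ge 1/r$ time units, so the path of length $n=\maxb\cdot r$ takes at least $\maxb$ units of time; hence every clock never reset along it exceeds $\maxb$ in $z_n$ and is irrelevant there. Thus every clock relevant in $z_n$ was reset somewhere on the path and therefore lies on the grid, so $z_n$ is $1/r$-separated, while the whole path is $1/(2r)$-wide. Setting $\delta=1/(2r)=1/(2(|\calX|+2))$ and $n=\maxb(|\calX|+2)$ finishes the proof. The only delicate point --- and the place where the standing hypotheses on the shape of the delay densities are used --- is the realizability of a waiting time with a prescribed fractional part carrying positive density; everything else is the routine bookkeeping of the grid invariant across resets, exactly as in Lemma~\ref{lem:separation}.
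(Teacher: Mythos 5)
Your proof is correct and is essentially identical to the paper's own argument: the paper's proof of this lemma is an explicit repetition of the $r$-grid construction from Lemma~\ref{lem:separation}, with the same choice $r=|\calX|+2$, the same pigeonhole count of free lines, the same waiting times with fractional part $1-j/r$, and the same concluding observation that the path accumulates at least $\maxb$ time units so unreset clocks become irrelevant. The extra care you take about realizing a waiting time with prescribed fractional part at which the delay density is positive is exactly the point the paper also invokes (integral endpoints of the densities' supports), so nothing is missing.
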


\begin{proof}
(Same as Lemma~\ref{lem:separation}) To simplify the argumentation we introduce a notion of a \emph{$r$-grid} that marks $r$ distinguished
points (called \emph{lines}) on the $[0,1]$ line segment. In the proof we show that we can place
fractional values of all relevant clocks on such distinguished points.
Let $r \in \Nset$.
We say that a set of clocks $\calY \subseteq \calX$ is \emph{on $r$-grid} in $z$ if
for every $x \in \calY$ relevant in $z$ we have $\fr(\nu(x)) = n/r$ for some $0 \leq n < r$.
For $0 \leq n < r$, we say that the \emph{$n$-th line of the $r$-grid is free} in $z$ if there
is no relevant clock in the $1/2k$-neighborhood of the $n$-th line,
i.e. for any relevant $x\in\calX$ we have $\fr(\nu(x)) \not\in (n/r - 1/2r,n/r + 1/2r)$.

Let $r = |X|+2$. We inductively build a $1/2r$-wide path $z_1\cdots z_n$ where $n = \maxb \cdot r$.
The set $\emptyset$ is on $r$-grid in $z_1$. We show that if a set $\calY_i$
is on $r$-grid in state $z_i$, there is
a $1/2k$-wide transition to $z_{i+1}$ such that $(\calY_i \cup \mathcal{Z})$ is on $r$-grid in $z_{i+1}$
where $\mathcal{Z}$ is the set of clocks newly reset in $z_i$.
There are $|X|+2$ lines on the grid and only $|X|$ clocks. At least two of these lines must be free.
Let $j \neq 0$ be such a line. Let $t$ be a waiting time and $z_{i+1}$ a state
such that $\fr(t) = 1 - j/r$ and $\trans{z_i}{t}{z_{i+1}}$. Such waiting time must be indeed possible
because the interval where the density function of any transition is positive has integral bounds.
The transition $\trans{z_i}{t}{z_{i+1}}$ is $1/2r$-wide because the line $j$ is free in $z_i$.
Furthermore, the set $(\calY_i \cup \mathcal{Z})$ is on $r$-grid in $z_{i+1}$ because the fractional value of
each clock that was previously on $r$-grid was changed by a multiple of $1/r$. The newly reset clocks have
fractional value $1 - j/r$ which is again a multiple of $1/r$.

Next, we show that $\calX$ is on $r$-grid in $z_n$. Clocks reset in this path on $r$-grid in $z_n$.
The remaining clocks are all irrelevant because the path of $\maxb \cdot r$ steps takes at least $\maxb$ time units.
Indeed, each transition in this path takes at least $1/r$ time unit.
According to the definition, $\calX$ is on $r$-grid in $z_n$.
Hence, the state $z_n$ is $1/r$-separated because the distance between two adjacent grid lines is $1/r$.
By setting $\delta = 1/2r$ we get the result.
\end{proof}

\begin{lemma}\label{lem:reach-path}
Let $\delta, \delta' > 0$ and $R$ be a region. Then
there is $n \in \Nset$ such that for every $\delta$-separated $z \in \Gamma_\product$ it holds that
if there is a feasible path from z to z', for a $z'$ in the region $R$, then there is also $i
\leq n$ and a $\delta'$-wide path $z \cdots z_i$ such that $z_i \in \Gamma_\product \cap
R$ is $\delta'$-separated.

Moreover, we can set $n=|V|$ and $\delta' = \delta/|V|$.
\end{lemma}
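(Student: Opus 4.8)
The plan is to reduce the claim to reachability in the finite region graph $\regiongraph_\product=(V,E)$ and then to widen the resulting short path using (the argument of) Lemma~\ref{lem:wide-path}. So assume $z$ is $\delta$-separated and that there is a feasible path from $z$ to some $z'$ in the region $R$. Projecting this path transition by transition onto regions gives a walk in $\regiongraph_\product$ from the region $[z]$ of $z$ to $R$ (each feasible transition $\trans{y}{}{w}$ contributes the edge $([y],[w])\in E$, as $\kernel_\product(y,[w])>0$); in particular $R$ is reachable from $[z]$ in $\regiongraph_\product$. Since $\regiongraph_\product$ has $|V|$ vertices, there is a \emph{simple} path $[z]=R_1,R_2,\ldots,R_j=R$ with $j\le|V|$.

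Next I would lift this region-graph path back to concrete states, anchored at the given $z$. Proceed inductively: if $z=z_1,\ldots,z_\ell$ is a feasible path with $z_m\in R_m$ for all $m\le\ell$, then since $(R_\ell,R_{\ell+1})\in E$ Lemma~\ref{lem:action-in-region-leads-to-same-non-zero-regions} gives $\kernel_\product(z_\ell,R_{\ell+1})>0$; unfolding the definition of $\kernel_\product$ produces a delay $t$ of positive density with $\trans{z_\ell}{t}{z_{\ell+1}}$ for some $z_{\ell+1}\in R_{\ell+1}$, i.e. a feasible transition into $R_{\ell+1}$. Iterating yields a feasible path $z=z_1,\ldots,z_j$ with $z_j\in R$ and $j\le|V|$.

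Finally, apply Lemma~\ref{lem:wide-path} to the $\delta$-separated state $z$ and this feasible path of length $j$: it returns a $(\delta/j)$-wide path $z=z'_1,\ldots,z'_j$ with $z'_m\region z_m$ for every $m$ (so $z'_j$ lies in $R$) and with $z'_j$ being $(\delta/j)$-separated. Because $j\le|V|$ we have $\delta/j\ge\delta/|V|$, so the path is already $(\delta/|V|)$-wide and $z'_j$ is $(\delta/|V|)$-separated. Setting $n=|V|$ and $\delta'=\delta/|V|$ then gives the statement with $i=j\le n$.

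The step I expect to require the most care is the last one, namely checking that the construction of Lemma~\ref{lem:wide-path} carries over here. That construction perturbs the waiting times of $z_1,\ldots,z_j$ so that after each step the fractional parts of all relevant clocks land in $[\delta/j,\,1-\delta/j]$ (and stay there across resets), and one must verify that the perturbed waiting times still lie in the support of the corresponding delay densities, so that the resulting path is feasible in the stronger, positive-density sense used in this section. This is exactly where the structural assumptions on densities (support an interval with integer endpoints, bounded below on every closed subinterval) and the region-graph-edge hypothesis enter: they guarantee that at each step the admissible window of waiting times has enough slack for the perturbation. The remaining combinatorial point — that $\delta$-separation leaves at least $j$ free fractional slots between any two clock values, so that the shrinkage factor $1/j$ (rather than $2^{-j}$) is enough — is identical to the argument already given for Lemma~\ref{lem:wide-path}.
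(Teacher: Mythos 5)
Your proof is correct and follows essentially the same route as the paper's: shorten the feasible path to one of length at most $|V|$ via the region graph, then widen it step by step, accepting a separation/wideness loss of a factor $1/n$ rather than $1/2^n$ thanks to the slot-counting argument. The only cosmetic difference is that you invoke Lemma~\ref{lem:wide-path} as a black box while the paper inlines the same widening argument (it explicitly repeats it to keep the appendix sections independent), and your closing remark about verifying that the perturbed waiting times stay in the support of the densities correctly identifies the one point both arguments rely on the structural assumptions for.
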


\begin{proof}
For simplicity, we first transform this path into a $\delta/2^n$-wide one. We
then show how to improve the result to $\delta/n$-wideness.

Let us fix any $\delta$-separated state $z \in \Gamma_\product$, belonging to a particular region,
say $R_s$. We will show that for any region $R_x$ such that  $(R_x,R_s) \in E$ in the region graph, we can find a waiting time $t$ and
$\delta$-separated state $z_1$ belonging to $R_x$, such that $\trans{z}{t}{z_1}$.

As $R_x$ is reachable from $R_s$ in one step in the region graph, there is an interval of
waiting times $(a,b)$ such that for every $t'\in(a,b)$ $\trans{z}{t}{z_1'}$ for
some $z_1'$ from $R_x$.  Moreover, due to $\delta$-separation of $z$, we obtain
$b-a\geq \delta$. Therefore, we can choose the waiting time
$t=(a+b)/2$ and $z_1'$ is $\delta/2$-separated.
Intuitively, we need to `lower' the $\delta$-separation and wideness in each step as we might be
forced to reset a clock, say $x_r$, to a place between two other clocks, say
$x_1, x_2$, with $|\fr(x_1-x_2)| = \delta$. 

Note that if the state $z'$ is reachable from $z$ along a feasible path, it must
be also reachable in at most $|V|$ steps in the region graph. In such
case, we can put $n = |V|$ and the $\delta'$ would be equal to
$\delta/2^{n}$. However, due to $\delta$-separation, for every $x,y\in\calX$
there are at least $n$ values between $\fr(\nu(x))$ and $\fr(\nu(y))$ such that
even if all were fractional values of other clocks, the state would be
$\delta/n$-separated.  Also note that as the path is only $n$ steps long, there
can be at most $n$ different clocks set between any two clocks. Since we know
their ordering in advance, these $n$ different positions are sufficient, and we
can set $\delta' = \delta/|V|$.  \end{proof}

Now we are ready to prove the Proposition~\ref{prop:reachability}.

\begin{proof}[of Proposition~\ref{prop:reachability}] 

Lemma~\ref{lem:reach-separation} together with Lemma~\ref{lem:reach-path} give us an
upper bound on the number of steps $c_b$ needed to hit a state in one of the BSCCs
along a $\delta$-wide path from any state in $\Gamma_\product$: we can set $c_b
=\maxb \cdot (|\calX| + 2) + |V|$ and $\delta= (1 / ( 2 \cdot (|\calX|+2))$. From 
Lemma~\ref{lem:reach-fuzzying} we have
\begin{align*}
 \kernel^{c_b}_\product(z,B)\quad & \geq\quad \left( \frac{\pmin \cdot \densb}{ 2 ( |\calX|+2) \cdot c_b}  \right)^{c_b} \\
\intertext{As  $c_b \leq 2 \cdot |V|$ for all but very small region graphs we have }
 & \geq\quad \left( \frac{\pmin \cdot \densb}{ 2 ( |\calX|+2) \cdot 2 \cdot |V|}  \right)^{2 \cdot |V|} \\
 & \geq\quad \left( \frac{\pmin \cdot \densb}{ (4 \cdot |V|)^2}  \right)^{2 \cdot |V|} \\
 & \geq\quad \left( \frac{\pmin \cdot \densb}{ 4 \cdot |V|}  \right)^{4 \cdot |V|} \\
\intertext{From this, we get the desired}
\kernel^{c}_\product(z,B)\quad & \geq\quad \left( \frac{\pmin \cdot \densb}{c}  \right)^{c}
\end{align*}
where $c = 4 \cdot |V|$.
\end{proof}

\end{document}